\documentclass[11pt]{article}
\RequirePackage[numbers]{natbib}
\RequirePackage[colorlinks,citecolor=blue,urlcolor=blue]{hyperref}
\RequirePackage{graphicx}

\usepackage[utf8]{inputenc}
\usepackage{graphicx,amstext,amsmath,amsthm,xcolor,bbm,mathtools,array,titletoc,fancyhdr,listings, tcolorbox,bm,tikz}
\usepackage[ruled,linesnumbered]{algorithm2e}
\usepackage{booktabs} 
\usepackage{multirow} 
\usetikzlibrary{calc,decorations.pathreplacing,angles,quotes}
\usepackage[T1]{fontenc}
\usepackage{newtxtext}
\usepackage{newtxmath}

\theoremstyle{plain}

\newtheorem{theorem}{Theorem}
\newtheorem{lemma}[theorem]{Lemma}
\theoremstyle{definition}

\theoremstyle{remark}

\newtheorem{proposition}[theorem]{Proposition}
\newtheorem{corollary}{Corollary}

\DeclareMathOperator*{\argmax}{argmax}

\newcommand{\iid}{\stackrel{\mathrm{i.i.d.}}{\sim}}

\title{Detecting change regions on spheres}
\author{Di Su$^1$, Yining Chen$^1$, Tengyao Wang$^1$}
\date{$^1$London School of Economics and Political Science}
\begin{document}

\maketitle

\begin{abstract} 
	While change point detection in time series data has been extensively studied, little attention has been given to its generalisation to data observed on spheres or other manifolds, where changes may occur within spatially complex regions with irregular boundaries, posing significant challenges. We propose a new class of estimators, namely, \textbf{C}hange \textbf{R}egion \textbf{I}dentification and \textbf{S}e\textbf{P}aration (CRISP), to locate changes in the mean function of a signal-plus-noise model defined on $d$-dimensional spheres. The CRISP estimator applies to scenarios with a single change region, and is extended to multiple change regions via a newly developed generic scheme. The convergence rate of the CRISP estimator is shown to depend on the VC dimension of the hypothesis class that characterises the change regions in general. We also carefully study the case where change regions have the geometry of spherical caps. Simulations confirm the promising finite-sample performance of this approach. The CRISP estimator's practical applicability is further demonstrated through two real data sets on global temperature and ozone hole.
\end{abstract}

\section{Introduction}\label{sec:intro}
Detecting changes or abnormal regions in directional data arises naturally in many scientific applications~\citep{watson1983statistics,mardia2009directional, pewsey2021recent, saavedra2022nonparametric}. For instance, changes in atmospheric measurements such as ozone levels or wind direction may indicate significant environmental events~\citep{cicerone1987changes,porte2013numerical}. In such settings, we observe responses along different directions, which can be viewed as points on the unit sphere. The aim is to find regions on the sphere where the behaviour of the data inside is different. This problem falls under the general framework of change region detection.

Change detection has been studied extensively across multiple research fields. In statistics, classical change point and change region detection methods are predominantly developed for Euclidean domains. Much of the literature focuses on temporally ordered observations, such as time series, across a wide range of settings. See, for instance,  \citet{Kokoszka2000ChangepointEI}, \citet{shao2010testing}, \citet{anderson2011statistical}, \citet{Zhou2013HeteroscedasticityAA}, \citet{wang2018high}, \citet{Dette2019DetectingRC} and \citet{baranowski2019narrowest}. See also \citet{tartakovsky2014sequential} for a comprehensive introduction. When observations are not temporally ordered but instead arise from multivariate spatial designs, the problem shifts from detecting change points to detecting change regions. In Euclidean spaces, methods have been developed to identify rectangular regions of change \citep{madrid2021lattice}, as well as change regions subject to shape or smoothness constraints \citep{1459050,brunel2013adaptive,brunel2018methods,chan2022inference}. These works primarily address the theoretical limits and feasibility of detection, with comparatively less emphasis on practical implementation.
Related ideas also appear in computer vision. Foreground object detection aims to separate signal from background in image data \citep{stauffer1999adaptive,kirillov2023segment}, while point-cloud segmentation methods seek to identify structured regions in unstructured spatial data \citep{qi2017pointnet}. However, such approaches are typically designed for gridded image domains and rely on very high signal-to-noise ratios, limiting their applicability to more general spatial or directional data settings.

This work is motivated by two real datasets with directional structures. The first is based on the European Centre for Medium-Range Weather Forecasts (ECMWF) Re-Analysis v5~\citep{cucchi2020wfde5} (a.k.a. ERA5) and European Space Agency (ESA) Climate Change Initiative (CCI) global temperature data~\citep{embury2024satellite}, which provide temperature measurements over the Earth’s surface across multiple decades. The second is satellite-based ozone data from NASA's Ozone Monitoring Instrument (OMI)~\citep{levelt2018ozone}, which records total column ozone over the Southern Hemisphere. In both cases, the data lie on the sphere and exhibit spatially localised regions of change over time, for instance, warming over land masses or ozone depletion over Antarctica. Our interest is to detect such change regions based on these noisy observations on the sphere.

While change point detection in one-dimensional sequences such as time series is well-studied, detecting change regions in noisy directional data remains much less explored. The shift from detecting a single change in a sequence to detecting a spatial region of change on a directional domain introduces both modeling and computational challenges. In particular, the observations lie on a smooth manifold such as a sphere, and the change may occur within an unknown region of potentially irregular shape. As a result, implementation is typically time-consuming, with larger sample sizes than in time series studies required for accurate detection.

Our primary application is change-region detection for directional data on the sphere, but the methodology and analysis developed here apply more broadly to detecting change regions of a prescribed shape class on manifolds under a signal-plus-noise model. We begin with the simpler problem of estimating a single change region in Section~\ref{sec:subsec:singleChangeRegion}, assuming that the region belongs to a class of controlled complexity. For the multiple change region setting, and for computational tractability, we focus on the case where change regions are spherical discs in Section~\ref{sec:subsec:MultiChangeRegions}, and mention generalisation to other shape classes in Section~\ref{Sec:Extension}. The proposed estimator, which we call CRISP (\textbf{C}hange \textbf{R}egion \textbf{I}dentification and \textbf{S}e\textbf{P}aration), combines a CUSUM-type scan statistic with a local residual test designed to eliminate false positives, resulting in a fully data-driven procedure. Here the local residual test ensures that estimation is carried out within neighbourhoods containing at most one true change region, thereby serving a role analogous to Narrowest-over-Threshold procedures \citep{baranowski2019narrowest} in temporal change point analysis. We believe that this local residual testing idea is generic, and may be of independent interest for other change region or change point estimation problems. We also establish consistency and derive explicit convergence rates in both the single- and multiple-region settings (see Theorems~\ref{thm:consistency} and \ref{thm:multi_consistency}), showing that the estimation error depends on the richness of the candidate class used to model the change regions. Finally, we address the computational aspects of the proposed approach, with particular attention to the multiple-region case, where the size of the search space poses significant challenges.

The remainder of the paper is organised as follows. Section~\ref{sec:method} introduces the statistical model and describes the proposed estimators and their theoretical guarantees for both single and multiple change region settings. Section~\ref{sec:simulation} provides empirical results on simulated datasets for both single and multiple change regions with various change values and sample sizes.  Section~\ref{sec:real_data} provides the application of our proposal to real datasets including global temperature change and ozone depletion. 
Proofs and technical lemmas are deferred to the appendices.
\subsection{Notation}
Let $\mathbb{N}^{+}:=\{1,2,\dots\}$. For $r\in\mathbb{N}^{+}$, let $[r]$ denote the set $\{1,\dots,r\}$ and let $\mathcal{S}_r$ denote the set of all permutations of $[r]$.  For two points $a,b$ on a Riemannian manifold $\mathcal{M}$, we denote their geodesic distance by $\mathrm{Geo}(a,b)$. For two arbitrary sets $A,B\subseteq\mathcal{M}$, denote their symmetric difference by $A \triangle B$, denote $\mathrm{dist}(A,B):=\min_{a\in A,b\in B}\mathrm{Geo}(a,b)$, and their Hausdorff distance as $d_H(A,B):=\max\{\sup_{x\in A}\mathrm{dist}(\{x\},B),\sup_{y\in B}\mathrm{dist}(\{y\},A)\}$. For $\epsilon > 0$, we write $\mathrm{Nhd}(A,\epsilon):=\{x\in\mathcal{M}:\mathrm{dist}(\{x\},A)\leq\epsilon\}$. Let $\mathbb{S}^{d-1}:=\{x\in\mathbb{R}^{d}:\lVert x\rVert=1\}$ be the $d$-dimensional sphere and for $\alpha\in\mathbb{S}^{d-1}$ and $\beta\in[0,1]$, let $A_{\alpha,\beta}=\{x\in\mathcal{M}:x^{T}\alpha\geq\beta\}$ denote a $d$-dimensional disc on the sphere. We write $\mathcal{S}=\{A_{\alpha,\beta}\subseteq\mathbb{S}^{d-1}:\alpha\in\mathbb{S}^{d-1},\beta\in[0,1]\}$ for the collection of discs on $\mathbb{S}^{d-1}$. For a disc $A\in\mathcal{S}$, let $\operatorname{Rad}(A)$ and $\operatorname{Ctr}(A)$ denote the radius and center of $A$ respectively.  We write $\mathbbm{1}_{E}$ for the indicator of an event $E$.

\section{A general framework for change region detection on manifolds}
\label{sec:method}
In this section, we describe our CRISP methodology and provide guarantees on its estimation accuracy.
We start with a general compact Riemannian manifold, and specialise to the sphere $\mathbb{S}^{d-1}$ later.

Let $X_1,\dots,X_n$ be $n$ design points on a Riemannian manifold $\mathcal{M}$, which are either deterministic or drawn as an independent and identically distributed sample from a probability measure $P_X$ on $\mathcal{M}$. 

Assume that for some $r\in\mathbb{N}^{+}$, there are $r$ disjoint change regions $R_1, \dots, R_r\subseteq\mathcal{M}$, and we have observations $(X_i,Y_i),i=1,
\dots,n$ such that
\begin{align}
  \label{eqn:dataset_multi}
  Y_i=\sum_{j=1}^{r}\mu^{(j)}\mathbbm{1}\{X_i\in R_j\}+\mu^{(0)}\mathbbm{1}\{X_i\notin \cup_{j\in[r]} R_j\}+\varepsilon_i,\quad i\in[n],
\end{align}
where $\varepsilon_i \iid N(0,\sigma^2)$ and $\mu^{(j)}\in\mathbb{R}$ for $j = 0$ and every  $j \in [r]$.
Let
\begin{equation}
  \label{Eq:Thetaj}
  \theta_j:=|\mu^{(j)}-\mu^{(0)}|,j\in[r],
\end{equation}
denote the difference between the means of the observations inside the region $R_j,j\in\{1,\dots,r\}$ and outside any change region. In particular, when $r=1$ and there is a single change region, the data generating process is
\begin{align}
  \label{eqn:dataset}
  Y_i=\mu^{(1)}\mathbbm{1}\{X_i\in R_1\}+\mu^{(0)}\mathbbm{1}\{X_i\notin R_1\}+\varepsilon_i,\quad i\in[n],
\end{align}
and we denote
\begin{equation}\label{Eq:Theta}
  \theta=|\mu^{(1)}-\mu^{(0)}|
\end{equation}

Change region detection is the task of estimating the sets $\mathcal{R}:=\{R_1,\dots,R_r\}$ from the data $\{(X_i,Y_i):i\in[n]\}$. We write $\mathcal{D}=\{X_1,\dots,X_n\}$ and $\mathcal{Y}:=\{Y_1,\dots,Y_n\}$, and $|A|_{\mathcal{D}} := |A \cap\mathcal{D}|$ for any set $A\subseteq \mathcal{M}$. For a fixed $j\in\{1,\dots,r\}$, the quality of any estimator $\hat{R}_j$ of $R_j$ can be measured in terms of in-sample classification error
\begin{equation}
  \label{eqn:def:loss}
  L_n(\hat{R}_j, R_j) := \frac{1}{n}\min\bigl(|\hat{R}_j\triangle R_j|_{\mathcal{D}},|\hat{R}_j\triangle R_j^{\text{c}}|_{\mathcal{D}}\bigr).
\end{equation}
Here, the loss is defined as the minimum of the empirical measure $n^{-1}\sum_{i=1}^n \delta_{X_i}$ evaluated on $\hat{R}_j\triangle R_j$ and $\hat{R}_j\triangle R_j^{\mathrm{c}}$, because in the above single-change-region model, the change region $R_j$ is only identifiable up to taking set complement. Alternatively, if $X_i$ are sampled from a distribution $P_X$, we can measure the loss of $\hat R_j$ in terms of its generalisation error
\begin{equation}
\label{Eq:PopLoss}
  L(\hat R_j, R_j):= \min\{P_X(\hat{R}_j\triangle R_j) , P_X(\hat{R}_j\triangle R_j^{\mathrm{c}})\}.    
\end{equation}

The general problem of estimating $\mathcal{R}$ given data $(X_i,Y_i)$, $i\in[n]$ is not possible without imposing further structural assumptions on the class of sets $\mathcal{R}$. We assume here that $R_1,\dots,R_r$ belong to a family of possible change regions $\mathcal{A}$. For instance, if we are interested to detect solar flares in a circular region on the surface of the sun, we would take $\mathcal{M}$ to be the Euclidean sphere $\mathbb{S}^2$ in $\mathbb{R}^3$ and $\mathcal{A}$ to be the set of all possible discs on $\mathbb{S}^2$.

Such structural assumptions restrict the complexity of the candidate class of change regions, which in turn governs the convergence rate of the estimator. For a family of possible change regions $\mathcal{A}$, we use its  Vapnik–Chervonenkis (VC) dimension \citep{vapnik2015uniform}, denoted by $\mathrm{VCD}(\mathcal{A})$, to quantify its complexity. The VC dimension is a classical measure of complexity that captures how well a class of sets can distinguish between different configurations of points, which makes it naturally suited to our task of separating data points into change and no-change regions. A higher VC dimension implies a more expressive class, but also one that may be harder to estimate accurately from data. In this work, we assume that $\mathcal{A}$ has finite VC dimension.

\subsection{Single change region}\label{sec:subsec:singleChangeRegion}
We first consider the case where there is exactly one change region $R_1$ on $\mathcal{M}$ (i.e., $r=1$). Given that the change region belongs to a family $\mathcal{A}$ of subsets of the manifold $\mathcal{M}$, we construct the change region estimator by maximising corresponding cumulative sum (CUSUM) contrast statistics. Here, for any candidate change region $A$, the CUSUM statistic of $\{(X_i, Z_i):i\in[n]\}$ at the region $A$ is defined as
\begin{align}
  \label{Eq:CUSUM}
  \mathcal{T}_{A}((X_i,Z_i)_{i\in[n]}) = \sqrt\frac{{|A|_{\mathcal{D}}|A^{\text{c}}|_{\mathcal{D}}}}{n}\biggl\{\frac{1}{|A|_{\mathcal{D}}}\sum_{i:X_i\in A}Z_i-\frac{1}{|A^{\text{c}}|_{\mathcal{D}}}\sum_{i:X_i\in A^{\text{c}}}Z_i\biggr\},
\end{align}
with the convention that $\mathcal{T}_A((X_i,Z_i)_{i\in[n]}) = 0$ if either $|A|_{\mathcal{D}}=0$ or $|A^{\mathrm{c}}|_{\mathcal{D}} = 0$.
When the design points $(X_i)_{i\in[n]}$ is clear from the context, we will abbreviate $\mathcal{T}_A(Z)=\mathcal{T}_{A}((X_i,Z_i)_{i\in[n]})$ for $Z=(Z_1,\dots,Z_n)^{\top}$.\par
Writing $Y = (Y_1,\ldots,Y_n)^\top$, we estimate $R_1$ by
\begin{align}\label{eqn:def:estimator}
  \hat{R} \in \argmax_{A\in\mathcal{A}} \, |\mathcal{T}_A(Y)|.
\end{align}
This estimator is intuitive because $R_1$ maximises the noiseless CUSUM $\mathcal{T}_A(\mu)$ for $\mu:=(\mu_1,\dots,\mu_n)^{\top}$ where $\mu_i:=\mu_{1}\mathbbm{1}(X_i\in R_1)+\mu_{0}\mathbbm{1}(X_i\notin R_1)$. By linearity, we have $\mathcal{T}_A(Y) = \mathcal{T}_A(\mu) + \mathcal{T}_A(\varepsilon)$, where $\varepsilon := (\varepsilon_1,\ldots,\varepsilon_n)^\top$. Thus, treating $\mathcal{T}_A(Y)$ as a perturbation of $\mathcal{T}_A(\mu)$, we expect $\hat R$ to be close to $R_1$.

For any $A\subseteq \mathcal{M}$ and any vector $Z = (Z_1,\ldots,Z_n)^\top$, $\mathcal{T}_A(Z)$ depends on $A$ only through $A\cap \mathcal{D}$. Moreover, as mentioned earlier, the change-region model in~\eqref{eqn:dataset} only identifies $R_1$ up to set complements, and the CUSUM statistic in~\eqref{Eq:CUSUM} is also invariant to replacing $A$ with $A^{\mathrm{c}}$. Therefore, given design points $(X_i)_{i\in[n]}$, both $R_1$ and $\hat R$ are only identifiable up to their intersection with $\mathcal{D}$ and up to taking set complements. Such identifiability issues motivate our choice of the loss function $L_n$ in~\eqref{eqn:def:loss} since $L_n(A_1,A_2) = L_n(A_1,A_2^{\text{c}})=L_n(A_1^{\text{c}},A_2^{\text{c}})$ for any $A_1,A_2\in\mathcal{A}$.

The following theorem establishes the consistency of the estimated change region described above.

\begin{theorem}
  \label{thm:consistency}
  Assume $r=1$, and $R_1\in\mathcal{A}$ where the family of possible change regions $\mathcal{A}$ is a VC class. Given distinct fixed design points $X_1,\ldots, X_n$, let $Y_1,\ldots, Y_n$  be generated according to~\eqref{eqn:dataset_multi} and let $\theta$ be the magnitude of change defined as in~\eqref{Eq:Theta}. If $\tau:= L_n(R_1,\emptyset) >0$, then there exist universal constants $C_0, C_1,C_2$ such that when $n\tau\theta^2 > C_0\sigma^2 \mathrm{VCD}(\mathcal{A})\log n$, with probability at least $1 - C_1 n^{-\mathrm{VCD}(\mathcal{A})}$, we have that the estimator $\hat{R}$ defined in~\eqref{eqn:def:estimator} satisfies
  \begin{align*}
    L_n\bigl(\hat{R},R_1\bigr)\leq \frac{C_2\sigma^2\mathrm{VCD}(\mathcal{A})\log(n)}{n\tau \theta^2}.
  \end{align*}
\end{theorem}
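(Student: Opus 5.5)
The plan is a standard ``signal gap versus uniform noise'' argument. Write $\mathcal{T}_A(Y)=\mathcal{T}_A(\mu)+\mathcal{T}_A(\varepsilon)$. Since $\hat R$ maximises $|\mathcal{T}_A(Y)|$, we have $|\mathcal{T}_{\hat R}(Y)|\ge|\mathcal{T}_{R_1}(Y)|$, and therefore
\[
|\mathcal{T}_{R_1}(\mu)|-|\mathcal{T}_{\hat R}(\mu)|\le 2\sup_{A\in\mathcal{A}}|\mathcal{T}_A(\varepsilon)|.
\]
The proof then has two parts. First, a deterministic lower bound on the left-hand side, expressed in terms of $L_n(\hat R,R_1)$. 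Second, a high-probability upper bound on the supremum of the noise CUSUM over $\mathcal{A}$.

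\textbf{Deterministic part.} Without loss of generality take $\mu^{(0)}=0$ and $\mu^{(1)}=\theta$, since $\mathcal{T}_A$ annihilates constant vectors. Put $a=|A\cap R_1|_{\mathcal D}/n$, $b=|A\cap R_1^{\mathrm c}|_{\mathcal D}/n$, $p=|R_1|_{\mathcal D}/n$ and $s=a+b$. A direct computation gives
\[
\mathcal{T}_A(\mu)=\sqrt n\,\theta\,\frac{a-sp}{\sqrt{s(1-s)}},
\qquad
|\mathcal{T}_{R_1}(\mu)|=\sqrt n\,\theta\sqrt{p(1-p)}.
\]
Here $\tau=\min(p,1-p)$. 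The target is the inequality
\[
\sqrt{p(1-p)}-\frac{|a-sp|}{\sqrt{s(1-s)}}\ \ge\ c\,\frac{\min\{b+(p-a),\ a+(1-p-b)\}}{\sqrt{p(1-p)}}\ \gtrsim\ \frac{L_n(\hat R,R_1)}{\sqrt{\tau}},
\]
where the two terms inside the minimum are the normalised sizes of $A\triangle R_1$ and $A\triangle R_1^{\mathrm c}$. I would prove it by viewing $|a-sp|/\sqrt{s(1-s)}$ as a correlation-type quantity. Indeed, $\mathcal{T}_A(\mu)/(\sqrt n\,\theta)$ equals the inner product of the normalised centred indicators of $A$ and $R_1$, times $\sqrt{p(1-p)}$. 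So the gap is $\sqrt{p(1-p)}\,(1-|\rho|)$, and $1-|\rho|=\tfrac12\min\lVert u_A\mp u_{R_1}\rVert^2$ for unit vectors $u_A,u_{R_1}$. One then bounds $\lVert u_A-u_{R_1}\rVert^2$ below by a constant times $|A\triangle R_1|_{\mathcal D}/(n\,p(1-p))$, by splitting into the case where $s$ is comparable to $p$ and the case where it is not. I expect this elementary but fiddly inequality to be the main obstacle. The case analysis that handles both the complement ambiguity and unbalanced $s$ needs care.

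\textbf{Stochastic part.} For fixed $A$, $\mathcal{T}_A(\varepsilon)\sim N(0,\sigma^2)$, because the weights form a unit vector. By Sauer's lemma, $\{A\cap\mathcal D:A\in\mathcal A\}$ contains at most $(en/V)^V\le n^{V}$ distinct sets, where $V=\mathrm{VCD}(\mathcal{A})$. A union bound then gives $\sup_A|\mathcal{T}_A(\varepsilon)|\le C\sigma\sqrt{V\log n}$ with probability at least $1-C_1n^{-V}$.

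\textbf{Combining.} Together the two parts yield
\[
\sqrt n\,\theta\, L_n/\sqrt\tau\lesssim\sigma\sqrt{V\log n}.
\]
This gives $L_n\lesssim\sigma\sqrt{\tau V\log n}/(\sqrt n\,\theta)$, a weaker rate than claimed. To reach the stated rate I would refine the gap inequality to be quadratic. The gap is of order $\sqrt{n}\,\theta\,\sqrt{p(1-p)}\,\lVert u_A-u_{R_1}\rVert^2$. The noise should also be localised: $\mathcal{T}_A(\varepsilon)-\mathcal{T}_{R_1}(\varepsilon)=\sigma\langle u_A-u_{R_1},\xi\rangle$ with $\xi=\varepsilon/\sigma$, which is bounded by $\lVert u_A-u_{R_1}\rVert\,\sigma\sqrt{V\log n}$ uniformly over the at most $n^V$ distinct sets, via the same union bound applied to the normalised Gaussians $\langle u_A-u_{R_1},\xi\rangle/\lVert u_A-u_{R_1}\rVert$. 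Writing $\delta=\lVert u_A-u_{R_1}\rVert$ (after choosing the sign $\pm$), comparing the two gives
\[
\sqrt{n\tau}\,\theta\,\delta^2\lesssim\delta\,\sigma\sqrt{V\log n},
\qquad\text{so}\qquad
\delta^2\lesssim\frac{\sigma^2V\log n}{n\tau\theta^2}.
\]
The step $\delta^2\gtrsim L_n/\tau$ would give a final rate with an extra factor of $\tau$. To remove it, I would instead use $\delta^2\gtrsim L_n$, which holds because the denominators of the unit vectors are at most $\sqrt{n}$, together with the gap constant $\sqrt{p(1-p)}\asymp\sqrt\tau$. The assumption $n\tau\theta^2>C_0\sigma^2V\log n$ guarantees that $\delta$ is small, so that the sign choice and the case analysis stay in the regime where the quadratic expansion is valid.
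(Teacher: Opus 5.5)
Your final argument is correct, and it takes a genuinely different route from the paper.

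\textbf{The paper's route.} It pairs a \emph{linear} signal gap with a \emph{square-root} noise increment.
\begin{itemize}
\item The gap is $\mathcal{T}_{R_1}(\mu)-\mathcal{T}_A(\mu)\ge\frac{\sqrt n\theta}{4}\min\{L_n/\sqrt\tau,\sqrt\tau\}$ (Proposition~\ref{Prop:CUSUM_Signal}). It follows from $\mathcal{T}_A^2(\mu)=\mathrm{RSS}(\mu)-\mathrm{RSS}_A(\mu)$ and Lemma~\ref{Lemma:CUSUM_signal}(b).
\item The increment bound is $|\mathcal{T}_{A_1}(\varepsilon)-\mathcal{T}_{A_2}(\varepsilon)|\le 3\lambda L_n^{1/2}/\tau$ (Proposition~\ref{prop:cusum_diff_e<L}). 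It is proved by a three-term mean-value decomposition, on events for partial sums over $A$, $A\cap R_1$, $A\setminus R_1$ and $R_1\setminus A$.
\item This is why the paper needs VC bounds for derived classes (Lemma~\ref{Lemma:vcdOperationDiffCap}). It also needs a case split in which the SNR condition excludes $L_n>\tau/2$.
\end{itemize}

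\textbf{Your route.} You write $\mathcal{T}_A(Z)=\langle u_A,Z\rangle$.
\begin{itemize}
\item The gap is then exactly $\frac12\theta\sqrt{np(1-p)}\,\lVert u_A-u_{R_1}\rVert^2$.
\item The increment is exactly $\langle u_A-u_{R_1},\varepsilon\rangle$.
\item One union bound over $O(n^{V})$ self-normalised Gaussians replaces the whole decomposition.
\end{itemize}
Signs are simpler than you suggest. Take $\mu^{(1)}>\mu^{(0)}$ and choose $s_0\in\{\pm1\}$ with $s_0\langle u_{\hat R},Y\rangle=|\mathcal{T}_{\hat R}(Y)|\ge\langle u_{R_1},Y\rangle$. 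Setting $\delta:=\lVert s_0u_{\hat R}-u_{R_1}\rVert$, this gives $\frac12\theta\sqrt{np(1-p)}\,\delta^2\le\lambda\delta$ directly. No SNR condition and no expansion are needed for this step.

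\textbf{The step you have not actually proved.} This is $\min_\pm\lVert u_A\mp u_{R_1}\rVert^2\gtrsim L_n$. The remark about denominators does not establish it. After centring, the entries of $u_A-u_{R_1}$ on $A\triangle R_1$ need not be of order $n^{-1/2}$ (e.g.\ $|A|_{\mathcal D}/n$ near $1$ and $p$ near $0$). The paper's own identity closes the gap. Write $\rho=\langle u_A,u_{R_1}\rangle$. Then $1-\rho^2=\mathrm{RSS}_A(\mu)/\mathrm{RSS}(\mu)$ with $\mathrm{RSS}(\mu)=n\theta^2p(1-p)\le n\theta^2\tau$. Lemma~\ref{Lemma:CUSUM_signal}(b) then gives
\[
\min_{\pm}\lVert u_A\mp u_{R_1}\rVert^2=2(1-|\rho|)\ge 1-\rho^2\ge\tfrac12\min\{L_n/\tau,\,1\}\ge L_n .
\]

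\textbf{The $\tau$ bookkeeping is backwards.} The factor $L_n/\tau$ is a gain, not something to remove. The SNR condition gives $\delta^2\le 8\lambda^2/(n\tau\theta^2)<\tfrac12$, which puts you in the $L_n/\tau$ branch of the minimum. You then get $L_n\lesssim\sigma^2\mathrm{VCD}(\mathcal{A})\log n/(n\theta^2)$.
\begin{itemize}
\item This implies the theorem, since $\tau\le 1/2$.
\item It matches the $\tau$-free order of Proposition~\ref{Prop:LowerBound} up to the logarithm.
\item The paper's extra $1/\tau$ is an artifact of the $1/\tau$ (rather than $1/\sqrt\tau$) in Proposition~\ref{prop:cusum_diff_e<L}.
\end{itemize}

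Your route is shorter, avoids derived VC classes, and is sharp. The paper's route states its bounds directly in terms of $L_n$, at the price of that looseness.

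Finally, your first, linear target inequality is false as written when $L_n\gg\tau$. It needs a cap, as in Proposition~\ref{Prop:CUSUM_Signal}. You do not end up using it.
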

Theorem~\ref{thm:consistency} highlights how the estimation error depends on three key factors: the complexity of the candidate class, the identifiability of the true change region, and the signal-to-noise ratio. Specifically, the error increases with the richness of the class $\mathcal{A}$ quantified by its VC dimension, and decreases with stronger signal $\theta$, larger sample size $n$, and better empirical identifiability, measured by $\tau = L_n(R_1, \emptyset)$.

In this result, the parameter $\tau$ captures the smaller of the proportions of sample points inside and outside the change region, and reflects how well-separated the region is in terms of data coverage. When $\tau$ is close to zero, the change region is either too small or too poorly sampled to be reliably distinguished from the background. The condition $n \tau \theta^2 >C_0\sigma^2 \mathrm{VCD}(\mathcal{A}) \log n$ ensures that the signal-to-noise ratio, adjusted for the effective sample size within the region, is large enough to detect the change reliably. This is analogous to minimal spacing or minimal signal assumptions in classical change point detection. In practice, $\tau$ reflects the fact that one cannot recover regions that are too small relative to the total sample, even with a strong signal. Hence, the theorem highlights the interaction between region size $\tau$, signal strength $\theta$, and model complexity $\mathrm{VCD}(\mathcal{A})$ in determining the feasibility of consistent estimation.

The convergence rate in Theorem~\ref{thm:consistency} simplifies to $O(1/n)$ up to logarithmic and complexity terms under fixed $\tau$ and $\theta$. This is sometimes referred to as a superparametric rate, as it is faster than the standard $n^{-1/2}$ rate encountered in regular parametric estimation problems. Such rates are typical in support recovery problems, including classical change point detection, where the object of interest is a set (e.g., the location of a jump) rather than a smooth parameter. The reason for this faster rate is that the signal manifests as a mean shift over a subset of the domain, and the estimator aggregates evidence over that subset, reducing variance effectively. In our setting, the estimator recovers the change region by maximizing a CUSUM-type statistic, and the loss function based on symmetric difference reflects a discrete support estimation task. Similar $1/n$ rates have been observed in univariate change point literature~\citep{padilla2022change}, and our result applies to more complex geometric settings on manifolds.

We specialise the above result to problem of detecting circular regions of change for directional data in the following corollary. More precisely, we assume that the design points lie on the unit sphere $\mathbb{S}^{d-1}$ and the change region belongs to the class of spherical discs $\mathcal{S}$.
\begin{corollary}
  \label{Cor:EstimationDiscs}
  Assume $r=1$ and $R_1=\{x\in\mathbb{S}^{d-1}:x^T\alpha\geq\beta\}$ for some constants $\alpha\in\mathbb{S}^{d-1}$ and $\beta\in[0,1]$. Let $(X_i, Y_i),\ldots, (X_n, Y_n)$ be generated according to~\eqref{eqn:dataset_multi} and $\theta$ be the magnitude of change defined as in~\eqref{Eq:Theta}. Let $\hat{R}=\{x\in\mathbb{S}^{d-1}:x^T\hat \alpha\geq\hat \beta\}$ be the estimator as defined in~\eqref{eqn:def:estimator} with $\mathcal{A}=\mathcal{S}$. Assume $L_n(R_1,\emptyset) = \tau>0$. There exist universal constants $C_1,C_2$ such that for $n$ large enough, with probability at least $1-C_1n^{-(d+1)}$, the change region estimation has a rate of convergence
  \begin{align}\label{Eqn:CoroSingleCPRegionEst}
    L_n\bigl(\hat{R},R_1\bigr)\leq \frac{C_2\sigma^2d\log(n)}{\theta^2\tau n}.
  \end{align}
  Moreover, 
  if the design points $X_1,\ldots,X_n \iid \mathrm{Unif}(\mathcal{M})$, then there exists $C_{d,\beta}$ depending only on $d$ and $\beta$, such that 
  \begin{align}\label{Eqn:CoroSingleCPParaEst}
    \|\hat{\alpha}-\alpha\|+|\hat{\beta}-\beta|\leqslant C_{d,\beta}\bigg(\frac{C_2 \sigma^2 d \log (n)}{\theta^2 \tau n}+\sqrt{\frac{d\log(n)}{n}}\bigg).
  \end{align}
\end{corollary}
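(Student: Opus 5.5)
The first claim~\eqref{Eqn:CoroSingleCPRegionEst} should follow directly from Theorem~\ref{thm:consistency} once we bound $\mathrm{VCD}(\mathcal{S})$. A disc $A_{\alpha,\beta}$ is the intersection of $\mathbb{S}^{d-1}$ with a closed half-space $\{x\in\mathbb{R}^d: x^T\alpha-\beta\ge 0\}$. The class of affine half-spaces in $\mathbb{R}^d$ has VC dimension $d+1$, and restricting to a subset of $\mathbb{R}^d$ (and to the subfamily with $\|\alpha\|=1$, $\beta\in[0,1]$) cannot increase the VC dimension. Hence $\mathrm{VCD}(\mathcal{S})\le d+1\le 2d$. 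Design points from a continuous distribution are almost surely distinct, so the theorem applies conditionally on $X_1,\dots,X_n$. Taking ``$n$ large enough'' to mean $n\tau\theta^2>C_0\sigma^2(d+1)\log n$ gives the bound with $C_2$ absorbing a factor $2$, and a failure probability at most $C_1 n^{-\mathrm{VCD}(\mathcal{S})}$. To state the probability as $C_1n^{-(d+1)}$ I would either check that discs shatter $d+1$ points (so $\mathrm{VCD}=d+1$ exactly), or use the theorem's proof with $\mathrm{VCD}$ replaced by the upper bound $d+1$. The second option is legitimate because that proof only uses $\mathrm{VCD}$ through a Sauer--Shelah bound on the number of distinct traces $A\cap\mathcal{D}$, which is monotone in the dimension.

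For the parameter bound~\eqref{Eqn:CoroSingleCPParaEst}, I would pass from in-sample to population loss and then from population loss to parameters. \textbf{Step 1.} By the VC uniform deviation inequality applied to the class $\{A\triangle B: A,B\in\mathcal{S}\}$ (whose VC dimension is $O(d)$, since Boolean combinations of two VC classes remain VC), with probability at least $1-n^{-(d+1)}$,
\[
\sup_{A,B\in\mathcal{S}}\bigl|P_X(A\triangle B)-n^{-1}|A\triangle B|_{\mathcal{D}}\bigr|\le C\sqrt{d\log n/n},
\]
and similarly for $A\triangle B^{\mathrm c}$. Therefore
\[
L(\hat R,R_1)\le \frac{C_2\sigma^2 d\log n}{\theta^2\tau n}+C\sqrt{\frac{d\log n}{n}},
\]
which explains the two terms in~\eqref{Eqn:CoroSingleCPParaEst}.

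\textbf{Step 2.} This is the main obstacle: a local identifiability (inverse-stability) inequality for uniform measure on the sphere. We need
\[
\|\alpha'-\alpha\|+|\beta'-\beta|\le C_{d,\beta}\,P_X(A_{\alpha',\beta'}\triangle A_{\alpha,\beta}),
\]
with the complement ambiguity handled separately. The complement of a disc is $\{x^T(-\alpha)>-\beta\}$, which lies outside $\mathcal{S}$ unless $\beta=0$. So for $\beta>0$ and small loss, the minimum in $L$ is attained by the non-complemented term. The case $\beta=0$ is a hemisphere, where $A^{\mathrm c}$ is also a disc, and the parametrisation is identified only up to the sign of $\alpha$.

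I would argue in two parts. First, compactness: the map $(\alpha',\beta')\mapsto P_X(A_{\alpha',\beta'}\triangle A_{\alpha,\beta})$ is continuous and vanishes only at $(\alpha,\beta)$ (modulo the hemisphere sign), so it is bounded below away from a neighbourhood. Second, a local linearisation near $(\alpha,\beta)$: the symmetric difference is a thin band around the boundary sphere $\{x^T\alpha=\beta\}$, whose $(d-2)$-dimensional area is proportional to $(1-\beta^2)^{(d-2)/2}$. The signed width of this band at a boundary point $x$ is roughly $|x^T(\alpha'-\alpha)-(\beta'-\beta)|/\sqrt{1-\beta^2}$. Integrating this width over the boundary sphere gives a quantity that is a norm of $(\alpha'-\alpha,\beta'-\beta)$ restricted to the tangent directions of the constraint $\|\alpha'\|=1$. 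Hence it is bounded below by $c_{d,\beta}(\|\alpha'-\alpha\|+|\beta'-\beta|)$.

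The constant degenerates as $\beta\to1$ (a vanishing cap), which explains why $C_{d,\beta}$ depends on $\beta$. Combining the two parts, and noting that the right-hand side of Step 1 tends to zero for large $n$, gives~\eqref{Eqn:CoroSingleCPParaEst} on the intersection of the two high-probability events.
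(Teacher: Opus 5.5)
Your proposal is correct and follows the paper's skeleton: Theorem~\ref{thm:consistency} with $\mathrm{VCD}(\mathcal{S})\le d+1$, then a Vapnik--Chervonenkis deviation bound to pass from $L_n$ to $P_X(\hat R\triangle R_1)$, then an inverse-stability inequality for discs. The last ingredient, however, is obtained differently.

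The paper proves Lemma~\ref{lemma:param_est_lb}, a global inequality $\|\alpha-\alpha'\|+|\beta-\beta'|\le C_{d,\beta}\,\mu(A_{\alpha,\beta}\triangle A_{\alpha',\beta'})$ with explicit constants. It first reduces to equal intercepts, then radially projects a lune on an auxiliary sphere $\mathbb{S}(Q,r)$ onto $\mathbb{S}^{d-1}$, treating the cases $\alpha^\top\alpha'\ge\beta$ and $\alpha^\top\alpha'<\beta$ separately.

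Your compactness-plus-linearisation argument is non-constructive. To be complete it needs two things: the second-order errors in the band approximation must be controlled uniformly over directions, and you should appeal to rotation invariance of $P_X$ so that the constant depends only on $(d,\beta)$. In return, it shows transparently where the $\beta$-dependence comes from.

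The other differences are minor. Your half-space bound on $\mathrm{VCD}(\mathcal{S})$ legitimately bypasses the exact computation of Lemma~\ref{lemma:vcdim_discs}, provided you rerun the union bound in Theorem~\ref{thm:consistency} with $\lambda$ chosen from $d+1$. Also, since $R_1$ is fixed, a deviation bound over $\{A\triangle R_1:A\in\mathcal{S}\}$ already suffices; this class has the same VC dimension by Lemma~\ref{Lemma:vcdOperationSymDiff}, and it is what the paper implicitly uses.

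Your handling of the complement is more careful than the paper's proof, which replaces $L_n(\hat R,R_1)$ by $n^{-1}|\hat R\triangle R_1|_{\mathcal{D}}$ without comment. For $\beta>0$ the quantitative reason is that every disc in $\mathcal{S}$ has $P_X$-measure at most $1/2$. Hence $P_X(\hat R\triangle R_1^{\mathrm{c}})\ge 1/2-P_X(R_1)>0$. When the loss bound exceeds this gap, the parameter inequality holds trivially after enlarging $C_{d,\beta}$, because its left-hand side is at most $3$. Consequently your constant also degenerates as $\beta\to0$, which the statement permits.

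At $\beta=0$ your caveat is genuine. There $R_1^{\mathrm{c}}$ agrees with $A_{-\alpha,0}$ off a null set and $|\mathcal{T}_A(Y)|=|\mathcal{T}_{A^{\mathrm{c}}}(Y)|$. So nothing in the estimator favours $\hat\alpha\approx\alpha$ over $\hat\alpha\approx-\alpha$, and~\eqref{Eqn:CoroSingleCPParaEst} can only hold up to the sign of $\hat\alpha$.
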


In addition to the result for change region estimation measured in terms of the empirical loss, Corollary~\ref{Cor:EstimationDiscs} also translates the result to parameter estimation of the change region's center and radius.
Compared to the $O(1/n)$ rate in~\eqref{Eqn:CoroSingleCPRegionEst} for region estimation, the parameter estimation rate in Corollary~\ref{Cor:EstimationDiscs} contains an additional $\sqrt{\log(n)/n}$ term in~\eqref{Eqn:CoroSingleCPParaEst}. This slower rate stems from an inherent identifiability issue in the model, that is, the mapping from parameters (e.g., center and radius of the disc) to induced region is many-to-one with respect to the observed data. Small perturbations in the parameters may yield regions that are identical or nearly indistinguishable in terms of their intersection with the sample points. As a result, a low in-sample classification error does not necessarily imply accurate parameter recovery. The assumption that the random design points $X_1,\ldots, X_n$ come from a uniform distribution on the compact manifold can be weakened. The same result will hold if we replace the uniform distribution by any distribution $P_X$ whose density on $\mathcal{M}$ is lower bounded by $\underline{f}_X$. The corresponding result will have the leading constant depending on $\underline{f}_X$ as well.

The matching minimax lower bound in Proposition~\ref{Prop:LowerBound} in Appendix~\ref{appB} shows that the convergence rate in Corollary~\ref{Cor:EstimationDiscs} is essentially optimal. Crucially, the lower bound scales linearly with the VC dimension, illustrating that the richness of the region class not only affects the difficulty of estimation as seen in the upper bound but also sets a fundamental limit on performance. This result shows that no estimator can achieve a faster rate up to logarithmic terms over the same model class. 
\subsection{Multiple change regions}\label{sec:subsec:MultiChangeRegions}
In this subsection, we focus on the multiple change region setting with $\mathcal{M} = \mathbb{S}^{d-1}$ and $\mathcal{A} = \mathcal{S}$.
When $r>1$ and multiple change regions are present, change region estimation becomes substantially more challenging. The key difficulty is that the CUSUM contrast designed for a single change region is no longer well aligned with the problem: a global contrast for a given region $R_i$ typically compares observations drawn from several distinct regimes, because other change regions contaminate the background set. As a result, signal contributions may cancel, resulting in difficulties in detecting some change regions.

To mitigate this signal attenuation, we compute contrast statistics locally. Specifically, we construct a collection of local neighbourhood discs
\[
  \mathcal{B}_n=\{B_1,\ldots,B_{J_d}\}\subseteq\mathcal{S},
\]
randomly sampled such that $(\mathrm{Ctr}(B_j), \mathrm{Rad}(B_j)) \iid \mathrm{Unif}(\mathbb{S}^{d-1})\otimes \mathrm{Unif}[0,\pi]$. Within each disc $B\in\mathcal{B}_n$ we search for an inner disc that captures a potential change region while remaining well separated from the boundary of $B$. For a fixed $\omega\in(0,2\pi)$, we restrict attention to discs $A\subseteq B$ satisfying $\mathrm{dist}(A,B)\ge \omega/2$, and define the local CUSUM statistic
\begin{equation}
  \label{Eq:localCUSUM}
  \mathcal{T}_{A}^{B}(Z)
  =\sqrt{\frac{|A|_{\mathcal{D}}\,|A^{\mathrm{c}}|_{\mathcal{D}}}{|B|_{\mathcal{D}}}}
  \left|
  \frac{\sum_{i:X_i\in A}Z_i}{|A|_{\mathcal{D}}}
  -\frac{\sum_{i:X_i\in B\setminus A}Z_i}{|B\setminus A|_{\mathcal{D}}}
  \right|,
\end{equation}
with the convention that $\mathcal{T}^B_A(Z)=0$ if $|A|_{\mathcal{D}}=0$ or $|B\setminus A|_{\mathcal{D}}=0$. For each $B$, we define
\[
  \hat R_B \in \argmax_{A\subseteq B:\,\mathrm{dist}(A,B^{\mathrm{c}})\ge \omega/2}
  |\mathcal{T}^{B}_A(Y)|,
\]
and regard $\hat R_B$ as a candidate change region detected within $B$.

A large local CUSUM statistic alone, however, does not ensure that $\hat R_B$ is a good estimate of a single, true change region. In particular, if $B$ contains more than one true change region, an inner disc $A$ that aggregates parts of several regions may produce an even larger contrast, despite being poorly aligned with any individual region; see Figure~\ref{fig:cusum_BmanyR_BpartialR}(a) for an example configuration.

Ideas such as Narrowest-Over-Threshold (NOT) \citep{baranowski2019narrowest}, which are designed to handle similar situations for multiple change point estimation problems, are not well suited to the present setting. In particular, NOT-type procedures cannot exclude cases where a small local region $B$ captures only a fragment of a true change region $A$, yet still produces a significant contrast statistic; see Figure~\ref{fig:cusum_BmanyR_BpartialR}(b) for an example configuration. Instead, to rule out such spurious candidates, we introduce a secondary residual-based check. For any $A\subseteq B$, define the residual sum of squares
\begin{equation}
  \label{Eq:RSS}
  \mathrm{RSS}^{B}_{A}(Z)
  =\sum_{i:X_i\in A}\left(Z_i-\frac{\sum_{i:X_i\in A}Z_i}{|A|_{\mathcal{D}}}\right)^2
  +\sum_{i:X_i\in B\setminus A}\left(Z_i-\frac{\sum_{i:X_i\in B\setminus A}Z_i}{|B\setminus A|_{\mathcal{D}}}\right)^2,
\end{equation}
with the usual empty-set conventions. Intuitively, $\mathrm{RSS}^{B}_{A}(Y)$ is small only when both $A$ and $B\setminus A$ are approximately homogeneous, each well described by a single mean. This property fails when either set mixes observations from multiple regimes, which is precisely the situation that can inflate the local CUSUM in the multi-region setting.  This residual-based screening therefore provides an effective safeguard against fragmented or aggregated detections and may be of independent interest for other change region and change point problems.

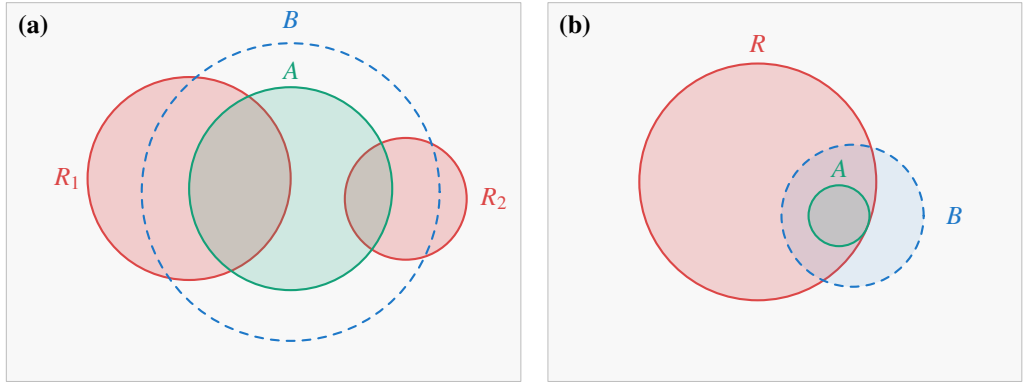
\begin{figure}
  \centering
  \begin{tikzpicture}[scale=0.9, transform shape, line cap=round, line join=round]
  \definecolor{bg}{RGB}{248,248,248}
  \definecolor{region}{RGB}{220,70,70}
  \definecolor{scan}{RGB}{30,120,200}
  \definecolor{agg}{RGB}{20,160,120}

  \def\W{8.4}
  \def\gap{-0.6}
   \begin{scope}[shift={(0,0)}]
    \fill[bg] (-4.2,-2.8) rectangle (3.4,2.8);
    \draw[black!25] (-4.2,-2.8) rectangle (3.4,2.8);
    \fill[region, opacity=0.25] (-1.5,0.2) circle (1.5);
    \draw[region, thick] (-1.5,0.2) circle (1.5);
    \node[region] at (-3.3,0.2) {$R_1$};

    \fill[region, opacity=0.25] (1.7,-0.1) circle (0.9);
    \draw[region, thick] (1.7,-0.1) circle (0.9);
    \node[region] at (3,-0.1) {$R_2$};

    \draw[scan, thick, dashed] (0,0) circle (2.2);
    \node[scan] at (0,2.55) {$B$};

    \fill[agg, opacity=0.18] (0,0.05) circle (1.5);
    \draw[agg, thick] (0,0.05) circle (1.5);
    \node[agg] at (0,1.8) {$A$};

    \node[black, font=\bfseries] at (-3.8,2.45) {(a)};
  \end{scope}
  \begin{scope}[shift={(\W+\gap,0)}]
    \fill[bg] (-4,-2.8) rectangle (3,2.8);
    \draw[black!25] (-4,-2.8) rectangle (3,2.8);

    \fill[region, opacity=0.22] (-0.9,0.15) circle (1.75);
    \draw[region, thick] (-0.9,0.15) circle (1.75);
    \node[region] at (-0.9,2.2) {$R$};

    \fill[scan, opacity=0.10] (0.5,-0.35) circle (1.05);
    \draw[scan, thick, dashed] (0.5,-0.35) circle (1.05);
    \node[scan] at (2,-0.35) {$B$};

    \fill[agg, opacity=0.12] (0.3,-0.35) circle (0.45);
    \draw[agg, thick] (0.3,-0.35) circle (0.45);
    \node[agg] at (0.3,0.35) {$A$};

    \node[black, font=\bfseries] at (-3.6,2.45) {(b)};
  \end{scope}
\end{tikzpicture}
 \caption{(a) A scan region $B$ contains multiple true change regions; an inner disc $A$ can aggregate parts of several regions. (b) A small scan region $B$ captures only a fragment of a true change region $R$, yet can still yield a significant contrast.}
  \label{fig:cusum_BmanyR_BpartialR}
\end{figure}

We therefore retain $\hat R_B$ as an initial candidate only if it satisfies both
\[
  \mathcal{T}^{B}_{\hat R_B}(Y)\ge \lambda_T
  \quad\text{and}\quad
  \mathrm{RSS}^{B}_{\hat R_B}(Y)\le \gamma_B,
\]
where the threshold $\lambda_T$ is derived from the null behaviour of the CUSUM statistics, and $\gamma_B$ is a suitable RSS threshold. Here the threshold $\gamma_B$ depends on the degrees of freedom, $m$, of the residual sums of squares computed. For instance, we can take
\[
  \gamma_B = m + 2\sqrt{m \lambda_R} + 2\lambda_R,
\]
for a suitable choice of $\lambda_R > 0$, following tail bounds of chi-squared distributions \citep{laurent2000adaptive}.

Applying this procedure to all discs in $\mathcal{B}_n$ yields an initial collection of candidate regions $\hat{\mathcal{R}}_{\mathrm{init}}$. Since multiple discs may detect the same underlying change region, we perform a final refinement step: we iteratively retain the candidate with the largest local CUSUM statistic and discard all remaining candidates that intersect it, repeating until all retained regions are pairwise disjoint. The resulting set $\hat{\mathcal{R}}$ forms our final estimate of the change regions. The complete description of the CRISP procedure is summarised in Algorithm~\ref{algo:multiple_cp_detection}, with the consistency of the estimators from this algorithm established below in Theorem~\ref{thm:multi_consistency}.

\begin{algorithm}[htbp]
  \caption{CRISP: Pseudo-code for the estimation of multiple change regions}
  \label{algo:multiple_cp_detection}
  \KwIn{Data $(X_1,Y_1),\ldots,(X_n,Y_n)\in\mathbb{S}^{d-1}\times \mathbb{R}$; $J_d \in\mathbb{N}$; $\omega > 0$; $\lambda_{T}, \lambda_{R} > 0$}
  \KwOut{A set of estimated change regions $\hat{\mathcal{R}}$}

  Draw independent and identically distributed $B_1,\ldots,B_{J_d}$  such that $\bigl(\mathrm{Ctr}(B_1), \mathrm{Rad}(B_1)\bigr)\sim \mathrm{Unif}(\mathbb{S}^{d-1})\otimes \mathrm{Unif}[0,\pi]$.

  Initialise $\hat{\mathcal{R}}_{\mathrm{init}}\gets\emptyset$\\

  \For{$j\in\{1,\dots,J_d\}$}{
    Compute $\hat R_j\gets\argmax_{A\in\mathcal{S}:A\subseteq B_j,\mathrm{dist}(A,B_j^c)\geq\omega/2}\mathcal{T}_A^{B_j}(Y)$\\
    Compute $\mathcal{T}_j(Y) := \mathcal{T}^{B_j}_{\hat R_j}(Y)$ (cf.~\eqref{Eq:localCUSUM})\\
    Compute $\mathrm{RSS}_{j}(Y): = \mathrm{RSS}^{B_j}_{\hat R_j}(Y)$ (cf.~\eqref{Eq:RSS})\\
    Set $m_j := |B_j|_{\mathcal{D}} - 2$ and $\gamma_j:= m_j + 2\sqrt{m_j\lambda_R}+2\lambda_R$ \\
    \If{$\mathrm{RSS}_{j}(Y)<\gamma_j$ \textup{and} $\mathcal{T}_{j}(Y)>\lambda_{T}$}{
      $\hat{\mathcal{R}}_{\mathrm{init}}\gets \hat{\mathcal{R}}_{\mathrm{init}}\cup \{\hat R_j\}$
    }
  }
  Let $\hat{R}_1,\ldots, \hat{R}_{m}$ be the elements from $\hat{\mathcal{R}}_{\mathrm{init}}$ ordered according to the values of their local CUSUM statistics in decreasing order, where $m:=|\hat{\mathcal{R}}_{\mathrm{init}}|$\\

  Set $\hat{\mathcal{R}}\gets \hat{\mathcal{R}}_{\mathrm{init}}$\\

  \For{$j=1,\ldots,m$}{
    \If{$\hat{R}_j\in\hat{\mathcal{R}}$}{
      $\hat{\mathcal{R}} \gets \{\hat R\in\hat{\mathcal{R}}: \hat R \cap \hat R_j = \emptyset\} \cup \{\hat R_j\}$
    }
  }
\end{algorithm}
\newpage
\begin{theorem}\label{thm:multi_consistency}
  For $r\in\mathbb{N}$, let $R_1,\ldots,R_r$ be change regions satisfying $\mathrm{Rad}(R_k) \in [\delta, \pi/2]$ for some $\delta > 0$ and all $k\in[r]$. Given fixed distinct design points $X_1,\ldots,X_n$, let $Y_1,\ldots,Y_n$ be generated according to~\eqref{eqn:dataset_multi} with $\theta_k$ denoting the magnitude of change in each region as defined in~\eqref{Eq:Thetaj}. Define $\theta:=\min_{k\in[r]}|\theta_k|$. For $\lambda:=4(d+1)\log n$, let $\{\hat{R}_1,\dots,\hat R_{\hat r}\}$ be the output of Algorithm~\ref{algo:multiple_cp_detection} with input $(X_i,Y_i)_{i\in[n]}$, $J_d$, $\omega$, $\lambda_T=4(n\lambda)^{1/4}+8\sqrt{\lambda}$ and $\lambda_R=\lambda$. If $\delta\geq \omega \geq C_1(\lambda/n)^{1/(2d)}$, $\min_{k, k'\in[r]} \mathrm{dist}(R_k, R_{k'})\geq \omega$ and $\omega^{d}\theta^2\geq C_2\sqrt{\lambda/n}$, for $C_1, C_2 > 0$, then there exist constants $C>0$ and $C_d>0$ depending only on $d$, such that for $n$ large enough, we have
  \begin{align*}
    \mathbb{P}\biggl\{\hat{r}=r, \;  \min_{\pi\in\mathcal{S}_r} \max_{k\in[r]} L_n(R_k,\hat{R}_{\pi(k)})\leq\frac{12\sqrt{n\lambda}+32\lambda}{n\theta^2}\biggr\} \geq 1 - re^{-C_dJ_d\omega^d} - C n^{-2d-2}.
  \end{align*}
\end{theorem}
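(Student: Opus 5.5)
Work on a single high-probability event for the noise, and deterministically on the random discs $B_1,\ldots,B_{J_d}$.

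\emph{Noise event.} Define $\mathcal{E}$ as the event on which two bounds hold.
\begin{itemize}
\item For every pair $(B,A)$ of discs with $A\subseteq B$, $|\mathcal{T}^B_A(\varepsilon)|\le 2\sigma\sqrt{\lambda}$.
\item For every such pair, $\mathrm{RSS}^B_A(\varepsilon)$ lies between $m\pm(2\sqrt{m\lambda}+2\lambda)$, where $m=|B|_{\mathcal{D}}-2$.
\end{itemize}
Since $\mathcal{S}$ has VC dimension $d+1$, the pairs $(B\cap\mathcal{D},A\cap\mathcal{D})$ take at most $n^{2(d+1)}$ values (Sauer's lemma). Each $\mathcal{T}^B_A(\varepsilon)$ is $N(0,\sigma^2\cdot\text{const})$, and each RSS is $\sigma^2\chi^2_m$. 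A union bound with Gaussian tails and the Laurent--Massart bounds, at level $\lambda=4(d+1)\log n$, gives $\mathbb{P}(\mathcal{E}^c)\le Cn^{-2d-2}$. The argument mirrors the proof of Theorem~\ref{thm:consistency}, with $\sigma=1$ normalised.

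\emph{Coverage event.} For each $k$, call $B$ \emph{good for $R_k$} if two conditions hold.
\begin{itemize}
\item $\mathrm{Nhd}(R_k,\omega/2)\subseteq B$.
\item $B$ is disjoint from every $R_{k'}$ with $k'\ne k$.
\end{itemize}
Because regions are $\omega$-separated and $\mathrm{Rad}(R_k)\ge\delta\ge\omega$, such discs exist. Concretely, centre within $c\omega$ of $\mathrm{Ctr}(R_k)$ and radius in $[\mathrm{Rad}(R_k)+\omega/2,\mathrm{Rad}(R_k)+\omega/2+c\omega]$ works. This set has probability at least $c_d\omega^d$ under the sampling law, so with probability at least $1-re^{-C_dJ_d\omega^d}$ every $k$ has at least one good $B$.

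\emph{Deterministic analysis on $\mathcal{E}$.} There are three steps.
\begin{enumerate}
\item \textbf{Good discs are retained and accurate.} Inside a good $B$ the model is exactly a single change region $R_k$ against a constant background, and $R_k$ is an admissible inner disc. The single-region argument of Theorem~\ref{thm:consistency} applies locally with $n$ replaced by $|B|_{\mathcal{D}}$. The lower bound $\omega\ge C_1(\lambda/n)^{1/(2d)}$, via design density, makes both $|R_k|_{\mathcal{D}}$ and $|B\setminus R_k|_{\mathcal{D}}$ of order $n\omega^d$. Combined with $\omega^d\theta^2\ge C_2\sqrt{\lambda/n}$, this gives $\mathcal{T}^B_{R_k}(\mu)\gtrsim\theta\sqrt{n\omega^d}$, which exceeds $\lambda_T$. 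The RSS of $\hat R_B$ is below $\gamma_B$ because it is at most $\mathrm{RSS}(\varepsilon)$ for the true split. Comparing $\mathcal{T}_{\hat R_B}(Y)\ge\mathcal{T}_{R_k}(Y)$ and expanding the noiseless CUSUM difference yields $|\hat R_B\triangle R_k|_{\mathcal{D}}\,\theta^2\lesssim\sqrt{n\lambda}+\lambda$. This gives the stated loss $(12\sqrt{n\lambda}+32\lambda)/(n\theta^2)$.
\item \textbf{Every retained candidate is close to exactly one $R_k$.} If $\hat R_B$ passes both tests, then on $\mathcal{E}$ the signal part of the RSS must satisfy $\mathrm{RSS}^B_{\hat R_B}(\mu)\lesssim\sqrt{m\lambda}+\lambda$. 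Write $\mathrm{RSS}(Y)=\mathrm{RSS}(\mu)+2\langle\mu,\cdot\rangle+\mathrm{RSS}(\varepsilon)$ and bound the cross term by a further union bound; this gives the bound up to constants. Such a bound forces $\hat R_B$ and $B\setminus\hat R_B$ to each be nearly homogeneous, with at most $O((\sqrt{n\lambda}+\lambda)/\theta^2)$ misclassified points. Meanwhile the CUSUM threshold forces $\hat R_B$ to differ in mean from $B\setminus\hat R_B$. Hence $\hat R_B$ is essentially one region $R_k$, or a union of pieces with the same mean. The separation $\omega$ and the $\omega/2$ buffer then rule out fragments: a disc that cuts through $R_k$ leaves mass of order $n\omega^d$ of both means on the same side. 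Ruling out aggregated and fragmented candidates quantitatively, while tracking the constants, is the main obstacle. I would first establish a geometric lemma: any disc $A$ whose symmetric difference with every union of the $R_k$ restricted to $B$ is $\gtrsim n\omega^d$ incurs $\mathrm{RSS}(\mu)\gtrsim n\omega^d\theta^2$. That quantity dominates $\sqrt{n\lambda}$ by assumption.
\item \textbf{Pruning.} Candidates close to $R_k$ intersect one another. Candidates close to distinct regions are disjoint, by the $\omega$-separation together with the accuracy from step 2. The greedy step therefore keeps exactly one candidate per region, so $\hat r=r$. The matching permutation inherits the loss bound because the retained candidate also passed the accuracy argument of step 1 (applied to its own $B$).
\end{enumerate}

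Combining the two probability bounds gives the result.
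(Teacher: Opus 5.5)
Your architecture matches the paper's: uniform noise events, a coverage event, retention of a good disc, then pruning. Your $\chi^2_m$-plus-cross-term control is a workable substitute for the paper's non-central $\chi^2$ event $\mathcal{E}_3$.

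\textbf{The gap: where the loss bound comes from.} For each $R_k$, pruning keeps the candidate with the \emph{largest} local CUSUM among those near $R_k$. That candidate can come from any disc $B$ that passes both thresholds, not necessarily from a good one. In Step 3 you say it inherits the bound from Step 1 ``applied to its own $B$''. But Step 1, the Theorem~\ref{thm:consistency}-style comparison $\mathcal{T}_{\hat R_B}(Y)\ge\mathcal{T}_{R_k}(Y)$, needs $B$ to contain exactly one region with a buffer, so that $R_k$ is admissible. For an arbitrary passing $B$ this is unknown. The paper only needs good discs to be retained. Both the identification with a single $R_k$ and the explicit bound $(12\sqrt{n\lambda}+32\lambda)/(n\theta^2)$ must be proved for \emph{every} element of $\hat{\mathcal{R}}_{\mathrm{init}}$ from the two thresholds alone. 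That is exactly your Step 2, which you leave as ``the main obstacle''.

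\textbf{The proposed lemma is false as stated.} If $B$ meets no $R_k$, or lies inside one, then $\mathrm{RSS}^B_A(\mu)=0$ for every $A\subseteq B$, even though $A$ can be far from every restricted union. The RSS lower bound also degenerates when $A$ is close to $B\setminus\tilde R$, the complement labelling. So a large symmetric difference alone does not force a large RSS; the CUSUM threshold and the buffer have to enter.

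\textbf{The paper's missing ingredients.} Set $\tilde R_j:=B_j\cap\bigcup_\ell R_\ell$, with the background as baseline; whether regions share a mean is then irrelevant. Lemma~\ref{Lemma:CUSUM_signal}(d),(e) give
$2\mathrm{RSS}_j(\mu)\ge\min\{\mathcal{T}_j(\mu)^2,\theta^2|B_j\setminus\tilde R_j|_{\mathcal{D}},\theta^2|\hat R_j\triangle\tilde R_j|_{\mathcal{D}},\theta^2|\hat R_j\triangle(B_j\setminus\tilde R_j)|_{\mathcal{D}}\}$.
\begin{itemize}
\item The RSS threshold gives $\mathrm{RSS}_j(\mu)\le 6\sqrt{n\lambda}+16\lambda$. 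The CUSUM threshold gives $\sqrt{2\mathrm{RSS}_j(\mu)}<\mathcal{T}_j(\mu)$, which removes the first term.
\item The $\omega/2$ buffer yields a background disc of radius $\omega/8$ in $(B_j\setminus\tilde R_j)\setminus\hat R_j$ (Lemma~\ref{Lemma:ChompingDoughnut}). This removes the second and fourth terms and is what excludes the complement labelling.
\item Hence $|\hat R_j\triangle\tilde R_j|_{\mathcal{D}}\le 2\mathrm{RSS}_j(\mu)/\theta^2$ for all retained candidates. This is where the stated constant comes from.
\end{itemize}
Only then does geometry pin down a single region, with both cases contradicting the count bound:
\begin{itemize}
\item If $\hat R_j$ met two $\omega$-separated regions, it would contain a background disc of radius $\omega/2$ (Lemma~\ref{Lemma:ChompingBiscuits}).
\item If some region met both $B_j^{\mathrm{int}}$ and $B_j^{\mathrm{c}}$, it would leave an $\omega/2$-disc of itself in the annulus, i.e.\ in $\tilde R_j\setminus\hat R_j$.
\end{itemize}
This gives a unique $k$ with $R_k\subseteq B_j$ and $L_n(\hat R_j,R_k)\le L_n(\hat R_j,\tilde R_j)$. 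Your pruning step also needs the count-to-geometry conversion that every point of $\hat R_j$ lies within $\omega/2$ of $R_k$; the paper gets this from the same count bound.
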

Theorem~\ref{thm:multi_consistency} establishes a simultaneous consistency result for multiple change region estimation. It shows that, with high probability, the proposed procedure correctly recovers the number of change regions and estimates the location of each region up to a small error. The correspondence between estimated and true regions is formalised via a permutation $\pi$, ensuring that regions are compared modulo ordering. Under this matching, the theorem guarantees that the maximal estimation error across all $r$ regions decays at rate approximately $\sqrt{n\lambda}/(n\theta^2)$.

The theorem requires each change region to satisfy a minimal size condition $\delta$, which ensures identifiability and prevents regions of vanishing volume, analogous to the role of $\tau$ in the single-region setting. In addition, distinct change regions are assumed to be separated by at least a geodesic distance $\omega$, so that each region can be isolated within some local window $B_j$ considered by the algorithm. The interplay between signal strength and spatial separation is captured by the condition $\omega^{d}\theta^2 \ge C_2\sqrt{\lambda/n}$. Here, $\omega^{d}$ represents the volume of the separating gap in $d$ dimensions, and the product $\omega^{d}\theta^2$ quantifies the effective signal energy available to distinguish neighboring regions. This quantity must dominate the noise level on the right-hand side, which decreases with sample size, and is closely related to energy-type criteria that appear in the change point detection literature \citep[e.g.][]{verzelen2023optimal}.

\subsection{Extension}
\label{Sec:Extension}
While our theoretical results in the multiple region setting focus on spherical discs, the overall framework naturally extends to more general classes of shapes. The key requirement is that the candidate regions form a class with finite VC dimension, which controls the statistical complexity of the search space. Many practically relevant families such as spherical caps, ellipsoidal patches, or unions of convex sets also satisfy this condition, and the core methodology remains applicable.

In principle, the theory can be extended to these settings by analyzing the corresponding VC dimension. However, moving beyond discs introduces additional challenges, particularly in computation, since discs offer a natural parameterisation and efficient covering arguments. Nonetheless, the generalisation illustrates that the proposed approach is not tied to any specific shape, but rather to a structurally controlled family of candidate regions, making it adaptable to a wide range of manifold-based detection problems.
\section{Empirical studies}\label{sec:simulation}
\subsection{Implementation details}

We choose the number of outer discs $J_d \approx 1/\{1 - F_{\mathrm{Beta}}(\cos^2(\alpha); 1/2, d/2)\}$, where $F_{\mathrm{Beta}}(\cos^2(\alpha); 1/2, d/2)$ is the distribution function of a $\mathrm{Beta}(1/2, d/2)$ random variable. Since, for any randomly drawn direction $v$ on $\mathbb{S}^{d-1}$ and a fixed direction $v^\star$, we have $\cos^2(\angle(v,v^\star))\sim \mathrm{Beta}(1/2,d/2)$, this ensures that  with high probability, at least one outer disc falls within angular distance $\alpha=0.05$ of a true centre. The threshold parameters $\lambda_T$ and $\gamma_j$ for $j\in[J_d]$ should be chosen according to the tail distribution of the noise. In the simulations, we set $\lambda_T = \sqrt{2\log(nJ_d/2)}$ and $\gamma_j$ to be the upper 0.05 quantile of a $\chi^2_{|B_j|_{\mathcal{D}}-2}$ distribution.

To speed up computation, we restrict the inner and outer discs $A$ and $B$ to be concentric. This enables fast evaluation of the CUSUM statistics by reducing the problem to a univariate change point estimation task, obtained by projecting the data onto the direction determined by the common centre. One drawback of the above simplified inner disc choice is that the estimated change region might not be very accurate. To remedy this, we include a post-processing step, where we randomly perturb the estimated change region $\hat R$ and refine it by selecting the perturbed candidate within the outer disc $B$ that yields the largest CUSUM statistic. Additionally, we only select inner disc $A$ that is at most $2/3$ in radius of the outer disc $B$. This is to guard against scenarios where $A$ captures multiple true regions of change as illustrated in Figure~\ref{fig:cusum_BmanyR_BpartialR}(a) in the finite sample setting.

Finally, when the ambient dimension $d=2$, to avoid label switching ambiguity (since both change regions and non-change regions on $\mathbb{S}^1$ are simply arc segments), we assume that we know the direction of change, which we use to construct one-sided CUSUM test statistics.

\subsection{Empirical performance of CRISP}
\label{Sec:EmpPerf}
In this section, we perform simulation studies where we apply CRISP for both the tasks of single and multiple change region detection for $d$-dimensional directional data for $d\in\{2,3,4\}$. We set the strength of the change signal $\theta\in\{1,1.5,2,2.5,3\}$ and the sample size $n\in\{200,400,600,800,1000\}$. For multiple regions detection, we set the number of change regions $r=4$. For each of these settings, the noise $\{\varepsilon_i\}$ are independent and identically sampled from the standard normal distribution, and the centres and radii of the change regions are respectively
\begin{itemize}
  \item $r=1$, $d\in\{2,3,4\}$:
  
  Centre: $ {(1, 1, \ldots, 1)^\top}/{\sqrt{d}}$. Radius: $\arccos\!\left({3}/{4}\right) $.
  \item $r=4$, $d=2$:
  
  Centres: $(1,0)^\top,\ (0,1)^\top,\ (-1,0)^\top,\ (0,-1)^\top$. Radius: $0.3$.
  \item $r=4$, $d=3$:
  
  Centres: 
  \[
    \frac{(1,1,1)^\top}{\sqrt{3}},\ 
    \frac{(1,-1,-1)^\top}{\sqrt{3}},\ 
    \frac{(-1,1,-1)^\top}{\sqrt{3}},\ 
    \frac{(-1,-1,1)^\top}{\sqrt{3}}.
  \]
  \item $r=4$, $d=4$:
  
  Centres: \[
    \frac{(1,1,1,-1)^\top}{2},\ 
    \frac{(1,-1,-1,-1)^\top}{2},\ 
    \frac{(-1,1,-1,-1)^\top}{2},\ 
    \frac{(-1,-1,1,-1)^\top}{2}.
  \] Radius: $0.7$.
\end{itemize}

The centres are chosen to have maximal pairwise separation on the sphere, and we increase the radius as the ambient dimension $d$ increases so that the fraction of design points falling into the change regions stay roughly the same.

Figure \ref{fig:rate_singlecp} shows the empirical losses of CRISP averaged over $100$ Monte Carlo repetition when $r=1$, and the empirical losses and numbers of estimated regions when $r=4$ are shown in Figure \ref{fig:rate_multicp}. Both figures are plotted at a log scale of the loss. In the log-log plot in Figure~\ref{fig:rate_singlecp} for single change detection, the results exhibit the expected slopes with respect to the sample size $n$ (approximately $-1$) and the signal strength $\theta$ (approximately $-2$), consistent with Theorem~\ref{thm:consistency}.
\begin{figure}[htbp]
  \centering
  \includegraphics[scale=0.6]{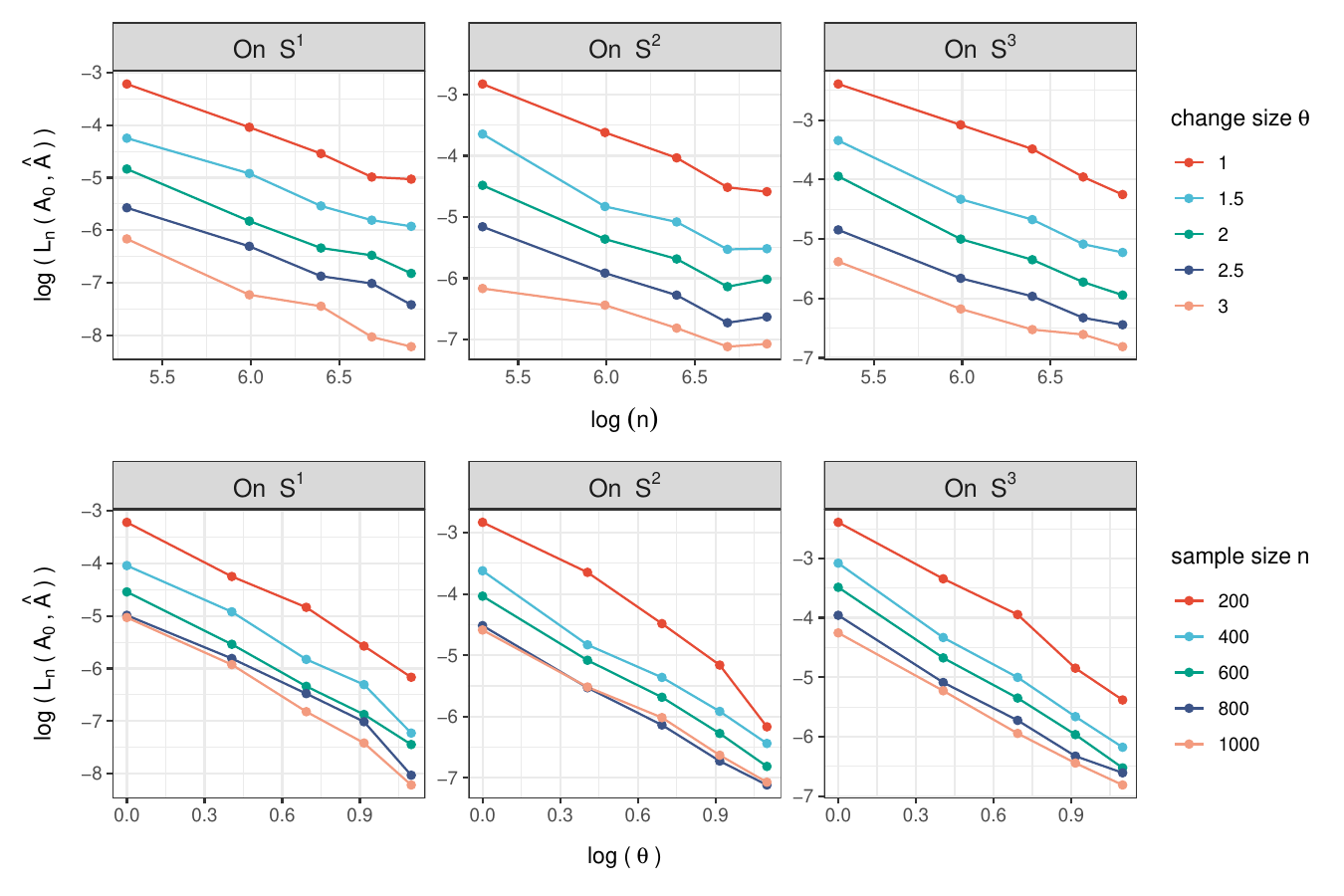}
  \caption{Empirical loss of single change region estimation using CRISP, averaged over 100 Monte Carlo repetitions, plotted against the sample size (top panels) and signal strengths (bottom panels) on log-log scale, for different dimensions. Data generating mechanism described in Section~\ref{Sec:EmpPerf}.}
  \label{fig:rate_singlecp}
\end{figure}
\begin{figure}[htbp]
  \centering
  \includegraphics[scale=0.6]{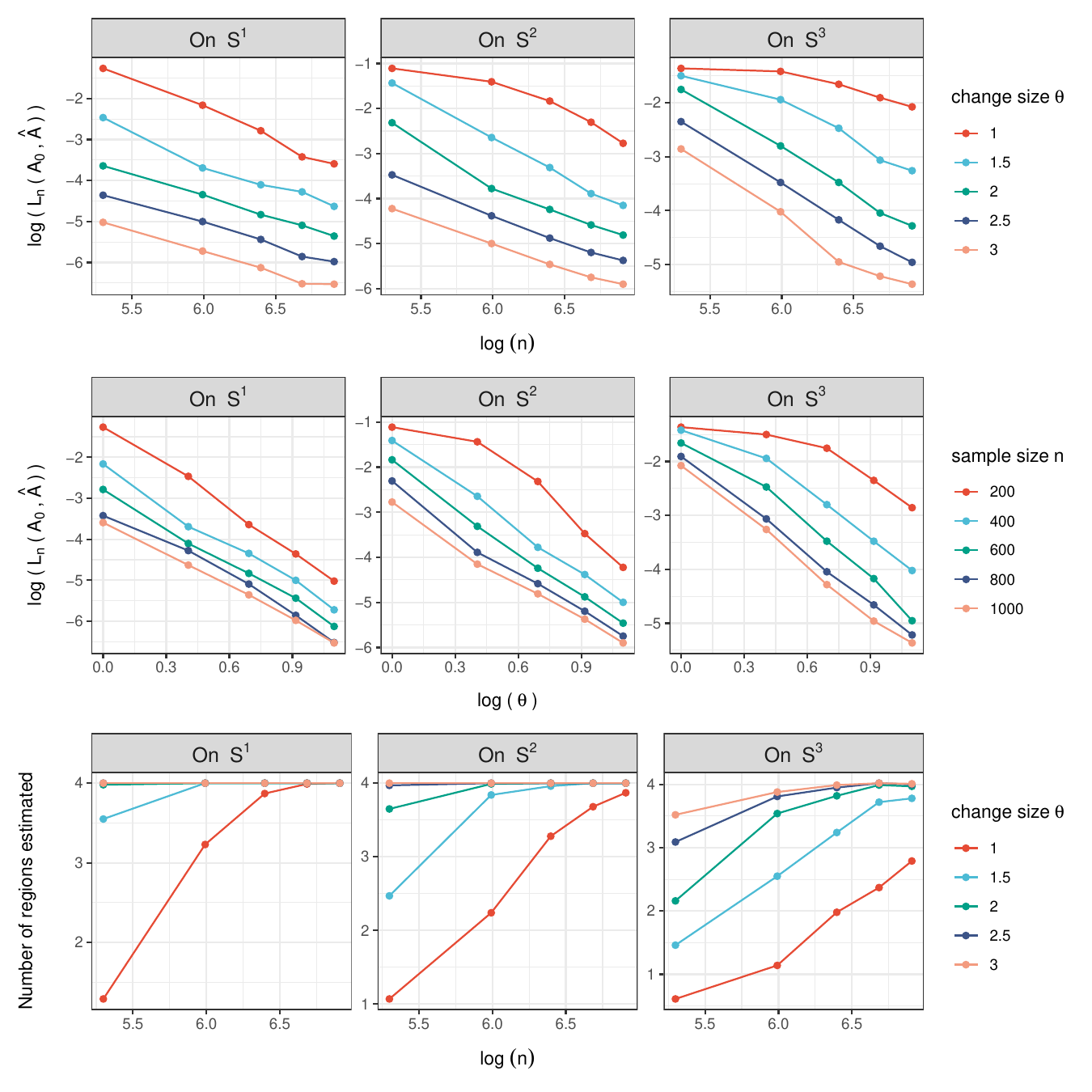}
  \caption{Empirical results for multiple change-region estimation using CRISP, averaged over 100 Monte Carlo repetitions. The loss is plotted against the sample size (top panels) and signal strengths (middle panels) on log-log scale, and the average number of estimated regions (true value: four) is shown in the bottom panels, for different dimensions. The data-generating mechanism is described in Section~\ref{Sec:EmpPerf}.} 
  \label{fig:rate_multicp}
  \vspace{2cm}
\end{figure}

We observe from both figures that as the dimension increases, the estimation problem becomes harder and larger sample size is needed to achieve similar statistical performance. As stated in the assumptions of Theorem~\ref{thm:multi_consistency}, the inequality $\omega^d \theta^2 \gtrsim \sqrt{1/n}$ highlights a fundamental trade-off between signal strength, region separation, and sample size. In a $d$-dimensional ambient space, the volume separating change regions scales like $\omega^{d-1}$, which decays exponentially with increasing $d$. As a result, even if the separation $\omega$ and signal strength $\theta$ are fixed, the left-hand side of the inequality diminishes rapidly as dimension grows. To maintain detectability,  the sample size $n$ must grow exponentially in $d$.

\subsection{Comparison with existing methods}

We compare our method to three point cloud segmentation methods~\citep{li2012new,silva2016imputation,dalponte2016tree}, all available from the $\texttt{R}$ package $\texttt{lidR}$. The comparison is made under the same settings of multiple regions detection described in Section~\ref{Sec:EmpPerf} for $d=3$ and $r=4$, since the competitors are designed for three-dimensional data. The average losses and the adjusted Rand index \citep{rand1971objective} are plotted in Figure~\ref{fig:rate_competitor}. We see that CRISP performs better than the competing methods under all settings. Part of the reason is that the competitors are designed to detect small regions (of flexible shape) of local elevations in the observations, which tend to overestimate the number of change regions due to its flexible nature.

\begin{figure}[htbp]
  \centering
  \includegraphics[scale=0.56]{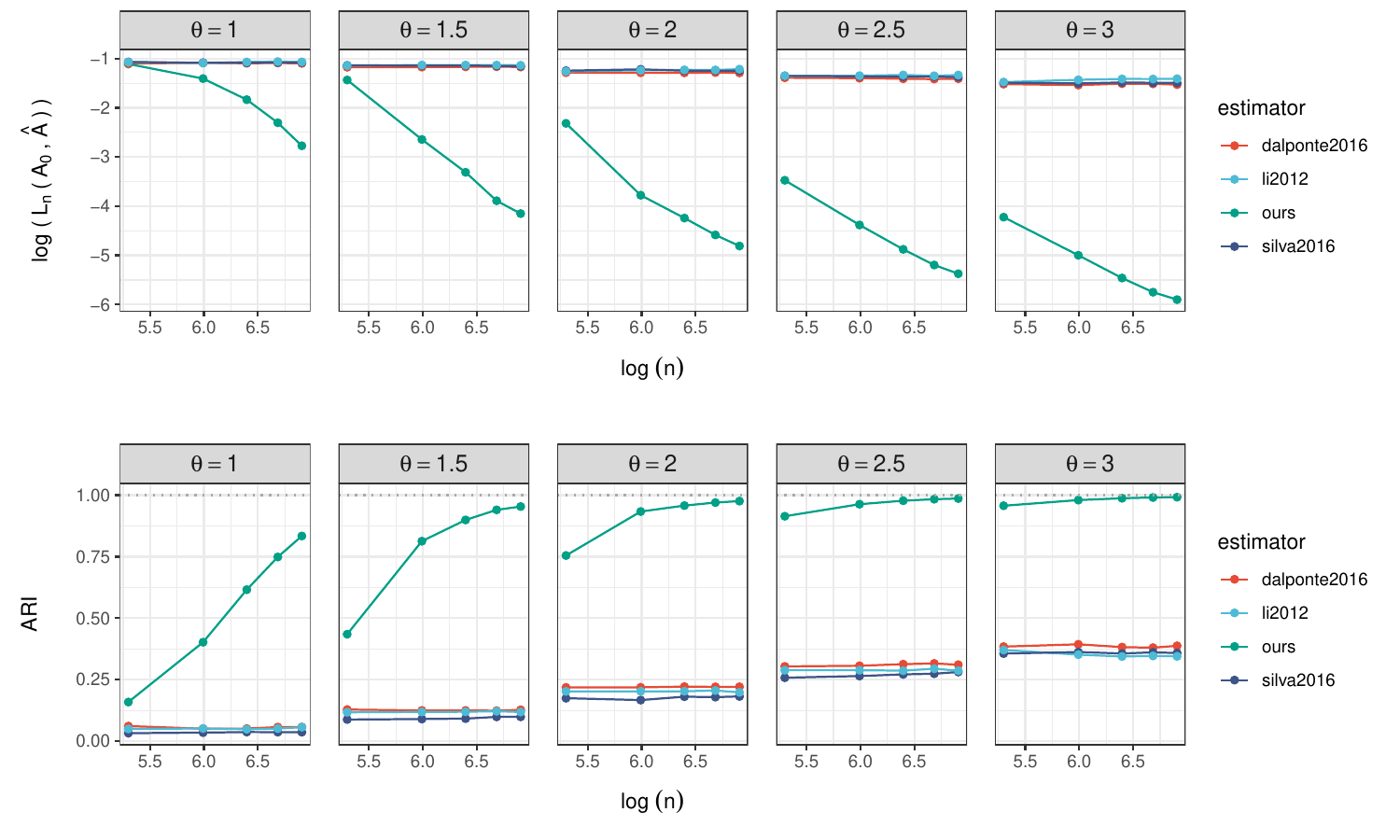}
  \caption{Empirical results for multiple change region estimation using CRISP and competitors from \texttt{lidR} averaged over $100$ Monte Carlo repetitions when $d=3$. Upper panel: losses plotted against the sample size $n$ on log-log scale. Lower panel: Adjusted Rand Index plotted against $n$ on log-log scale.}
  \label{fig:rate_competitor}
\end{figure}
\section{Real data applications}\label{sec:real_data}
\subsection{Global temperature change}
We study global near-surface temperature changes by combining land temperature dataset derived from ERA5 in \citet{cucchi2020wfde5} with sea-surface temperature dataset from the ESA CCI project~\citep{embury2024satellite}. For each of the years 1989, 1999, 2009 and 2019, we consider the December monthly mean on a regular $2^\circ\times 2^\circ$ latitude--longitude grid with about $16{,}200$ cells per year. Our objective is to detect spatially localised regions where the mean temperature change differs from the background when comparing consecutive decades.

For each decade pair, we form a global difference map by combining the land and sea datasets, replacing missing land values by the sea-surface values at the same grid cell. We then sample $n=2000$ uniform points on $\mathbb{S}^2$ and map each point to the nearest grid cell in longitude and latitude. This yields paired observations $(X_i,Y_i)$, where $X_i\in\mathbb{S}^2$ is the spatial location and $Y_i$ is the associated temperature difference. We standardise $Y_i$ by a robust Median Absolute Deviation (MAD) \citep{hampel1974influence} scale estimate $\hat\sigma$ based on neighbours induced by the minimum spanning tree (MST) on the sampled locations. Specifically, we first compute a typical neighbour distance $d_{\\max}$ as the mean edge length of the MST, and regard two observations as neighbours if their MST edge length is at most $d_{\\max}$. For each $i$, let $\mathcal{N}(i)$ denote the index set of such valid MST-neighbours of $X_i$. We then locally average each response with its valid neighbours by setting $\bar{Y}_i=\mathrm{mean}\bigl(\{Y_i\}\cup\{Y_j:j\in\mathcal{N}(i)\}\bigr)$ and apply the standard degrees-of-freedom correction factor $\sqrt{m/(m-1)}$ with $m=|\mathcal{N}(i)|+1$ to account for estimating the local mean from $m$ observations. We then use $Y_i/\hat\sigma$ as input to the detection procedure. We apply the CRISP procedure with post-processing to the three decadal difference maps (1989--1999, 1999--2009, and 2009--2019). Figure~\ref{Fig:realdata_result_temperature} shows the raw difference maps with the detected discs (blue discs) overlaid on the sampled locations. Comparing across decades, the detected regions are concentrated over Antarctica and the adjacent Southern Ocean in 1989--1999; in 1999--2009, the pattern shifts, with additional discs appearing over the Arctic; and in 2009--2019, the detected regions remain prominent over Central Asia.

\begin{figure}[!htbp]
  \centering
  \includegraphics[width=0.8\linewidth]{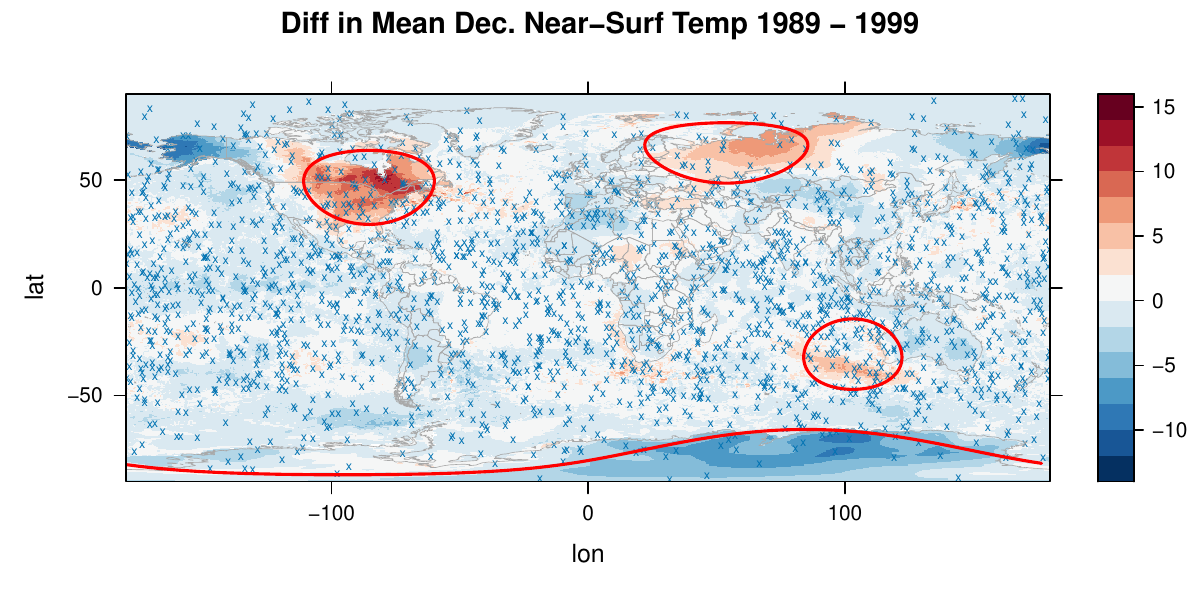}
  \includegraphics[width=0.8\linewidth]{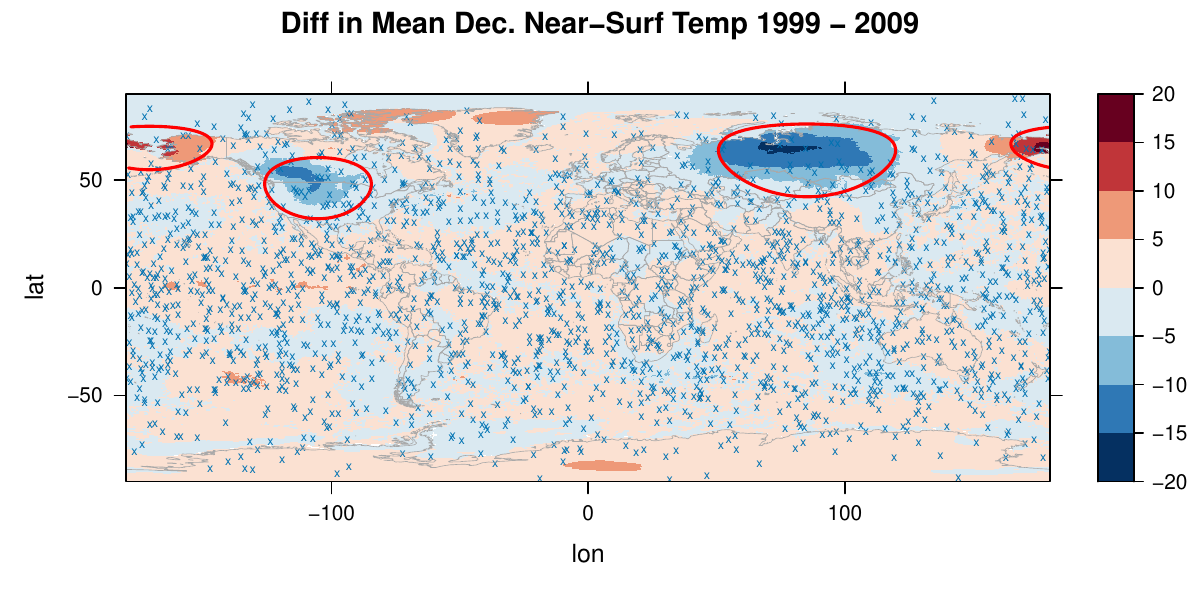}
  \includegraphics[width=0.8\linewidth]{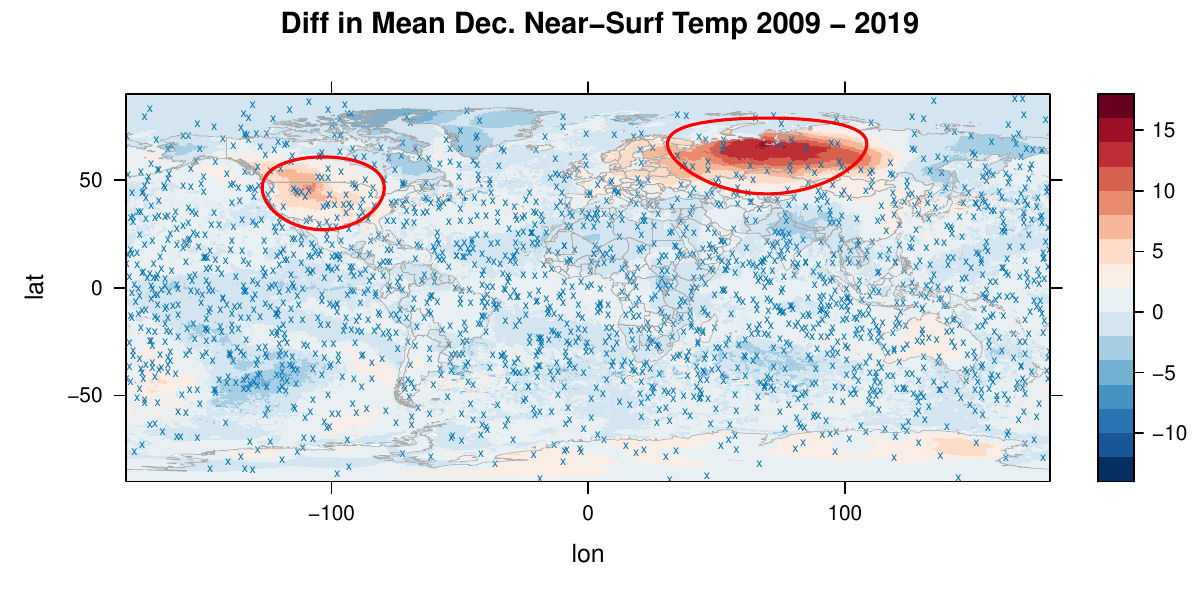}
  \caption{Maps of temperature difference between consecutive decades using the ERA5 and ESA data. Observations at 2000 randomly sampled grid points (blue crosses) are used to compute abnormal regions of temperature difference using the CRISP methodology. The estimated abnormal regions are shown as red discs on the maps.}
  \label{Fig:realdata_result_temperature}
\end{figure}

\subsection{Ozone Hole Monitoring}
We analyse total column ozone measurements from NASA's Aura Ozone Monitoring Instrument (OMI). The data are provided on a regular latitude--longitude grid and are reported in Dobson units (DU). We consider three calendar years (2005, 2015, and 2025) and focus on austral spring (September--November), averaging daily observations to form seasonal mean ozone fields. Our analysis is restricted to Southern Hemisphere high latitudes (60--90$^\circ$S), where Antarctic ozone depletion is known to occur; outside this band the field is comparatively homogeneous and would dilute the local signal of interest. 

We follow the same preprocessing steps as in the temperature analysis. In particular, we sample $1000$ locations uniformly on the sphere over latitudes below $ 60^\circ$S, snap each sampled location to the nearest latitude--longitude grid cell, and extract the corresponding ozone values. The extracted values are then rescaled using the same robust scale estimate as in the temperature example.

Applying our change region detection procedure identifies a coherent low-ozone region centred over Antarctica, with boundaries that closely track the spatial extent of the ozone hole. Figure~\ref{fig:ozone} overlays the detected regions on the corresponding ozone maps for 2005, 2015, and 2025. Across the three snapshots, the detected region remains localised over the polar cap, while its spatial extent varies from year to year. The bar-shaped artefacts in the background field are inherited from the original gridded visualisation rather than produced by our procedure, and likely reflect the underlying spatial allocation of measurements used to construct the dataset.
\begin{figure}[htbp]
  \centering
  \includegraphics[width=\linewidth]{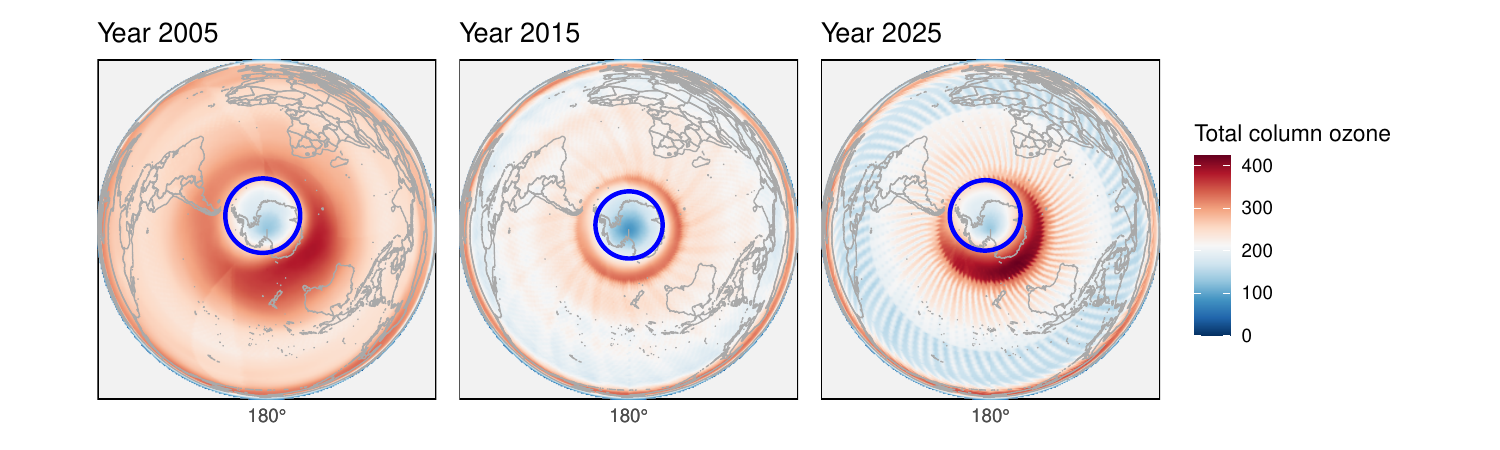}
  \caption{Ozone levels in the southern hemisphere in year 2005, 2015 and 2025. Detected abnormal regions using the CRISP method are shown as blue discs.}
  \label{fig:ozone}
\end{figure}
\vspace{0.5cm}

\section*{Acknowledgment}
The first author's research is funded by an LSE PhD Studentship. The authors have used large language models to refine the language of this manuscript, and have no conflicts of interest to declare.

\section*{Data Availability}
The data that support the findings of this study are available from the following resources available in the public domain: 

\begin{itemize}
\item Global temperature: \href{https://www.ecmwf.int/en/forecasts/dataset/ecmwf-reanalysis-v5}{https://www.ecmwf.int/en/forecasts/dataset/ecmwf-reanalysis-v5}  and \href{https://climate.esa.int/en/data/}{https://climate.esa.int/en/data/}
\item Ozone hole: \href{https://ozonewatch.gsfc.nasa.gov/data/omi/}{https://ozonewatch.gsfc.nasa.gov/data/omi/}
\end{itemize}

\newpage
\begin{appendix}
\section{Proofs of the main results}\label{appA}

In this section, we provide the proofs for Theorem~\ref{thm:consistency}, Corollary~\ref{Cor:EstimationDiscs} and Theorem~\ref{thm:multi_consistency}.

Throughout the proofs, without loss of generality, we assume the noise variance $\sigma^2=1$ is known in order to simplify the notation. With a different (or even unknown) $\sigma^2$, only minor modifications are needed for one to get through the proofs. Throughout the proof, we use $C_d$ to denote a constant whose value may change at different steps but is always independent of $n$ and depends only on $d$.
\subsection{Proof of Theorem~\ref{thm:consistency}}
The proof of Theorem~\ref{thm:consistency} will proceed in several steps. We first control the standardised sum of the noise uniformly over sets in $\mathcal{A}$. Recall the definition of $\mathrm{VCD}(\mathcal{A})$, the VC dimension of a family of sets $\mathcal{A}$ in the Introduction. 
\begin{proposition}
\label{prop:bound_sum_of_e_generic_set}
Let $\varepsilon_i\iid N(0,1)$, for $i\in[n]$ and let $\mathcal{D} = \{X_1,\ldots,X_n\}$ is a set of fixed design points on a manifold $\mathcal{M}$. Suppose that $\mathcal{A}$ is a family of subsets on $\mathcal{M}$. There exists a universal constant $C>0$ such that  
    \begin{align*}
    \mathbb{P}\biggl(\max_{A\in\mathcal{A}}\biggl| \frac{1}{\sqrt{|A|_{\mathcal{D}}}} \sum_{i:X_i \in A}\varepsilon_i\biggr|\geq\lambda\biggr)\leq Cn^{\text{VCD}(\mathcal{A})}e^{-\lambda^2/2},
    \end{align*}
(with the convention that $|A|_{\mathcal{D}}^{-1/2}\sum_{i:X_i\in A} \varepsilon_i = 0$ if $|A|_{\mathcal{D}} = 0$) and 
\[
\mathbb{P}\Bigl(\max_{A\in\mathcal{A}}\bigl| \mathcal{T}_A(\varepsilon)\bigr|\geq\lambda\Bigr)\leq Cn^{\text{VCD}(\mathcal{A})}e^{-\lambda^2/2}.
\]
\end{proposition}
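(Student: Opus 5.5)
The plan is a union bound over the finitely many distinct traces of $\mathcal{A}$ on $\mathcal{D}$, combined with an exact Gaussian tail bound for each fixed trace. Both statistics in the statement depend on $A$ only through $A\cap\mathcal{D}$, so each maximum over $A\in\mathcal{A}$ is really a maximum over the finite collection
\[
\mathcal{A}_{\mathcal{D}}:=\{A\cap\mathcal{D}:A\in\mathcal{A}\}.
\]
By the Sauer--Shelah lemma, writing $v:=\mathrm{VCD}(\mathcal{A})$, we have
\[
|\mathcal{A}_{\mathcal{D}}|\le\sum_{k=0}^{v}\binom{n}{k}\le (en/v)^v\le C n^{v}
\]
for $n\ge v$. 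For $n<v$ the trivial bound $|\mathcal{A}_{\mathcal{D}}|\le 2^n\le 2n^{v}$ suffices. In either case $|\mathcal{A}_{\mathcal{D}}|\le Cn^{v}$ with $C$ universal, using $(e/v)^v\le e$.

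Next, fix a trace $S=A\cap\mathcal{D}$ with $|S|\ge 1$. Since the design is fixed and the $\varepsilon_i$ are i.i.d.\ $N(0,1)$, the quantity $|S|^{-1/2}\sum_{i:X_i\in S}\varepsilon_i$ is exactly $N(0,1)$. If $|S|=0$, the convention makes it $0$.

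The CUSUM statistic has the same structure. Writing $a=|A|_{\mathcal{D}}$ and $b=n-a$, we have $\mathcal{T}_A(\varepsilon)=\sum_i w_i\varepsilon_i$ with $w_i=\sqrt{ab/n}\,/a$ on $A$ and $w_i=-\sqrt{ab/n}\,/b$ on $A^{\mathrm{c}}$. Then
\[
\sum_i w_i^2=\frac{ab}{n}\Bigl(\frac1a+\frac1b\Bigr)=1,
\]
so $\mathcal{T}_A(\varepsilon)\sim N(0,1)$ as well, or it equals $0$ in the degenerate case.

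For $Z\sim N(0,1)$ the standard Mills-ratio and Chernoff bounds give $\mathbb{P}(|Z|\ge\lambda)\le 2e^{-\lambda^2/2}$. A union bound over the at most $Cn^{v}$ traces then yields both displayed inequalities with universal constant $2C$.

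The only step needing care is the combinatorial count. We must check that the Sauer--Shelah bound gives a constant independent of $v$, which the estimate $(e/v)^v\le e$ handles. We must also confirm that the complement invariance of $\mathcal{T}_A$ causes no issue; it does not, since we merely overcount by enumerating traces of $A$ directly.
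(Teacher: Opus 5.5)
Your proposal is correct and follows the paper's proof essentially step for step: reduce the maximum to the finite trace class $\mathcal{A}_{\mathcal{D}}$, bound $|\mathcal{A}_{\mathcal{D}}|$ by Sauer--Shelah, observe that each statistic is exactly $N(0,1)$ (or $0$), and conclude with a union bound and the Gaussian tail bound. You also make explicit two details the paper leaves implicit: that the Sauer--Shelah count gives a constant independent of $\mathrm{VCD}(\mathcal{A})$, and the computation showing the CUSUM weights have unit squared norm.
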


\begin{proof}
Define $\mathcal{A}_{\mathcal{D}}:=\{A\cap\mathcal{D}: A\in\mathcal{A} \}$. By the Sauer--Shelah lemma \citep{sauer1972density}, we have 
$|\mathcal{A}_{\mathcal{D}}|\leq \sum_{j=0}^{\mathrm{VCD}(\mathcal{A})}\binom{n}{j} $, and
\begin{align*}
\mathbb{P}\biggl(\max_{A\in\mathcal{A}}\biggl\lvert\sum_{i:X_i\in A}\frac{\varepsilon_i}{\sqrt{|A|_{\mathcal{D}}}}\biggr\rvert\geq\lambda\biggr) &= \mathbb{P}\biggl(\max_{A\in\mathcal{A}_{\mathcal{D}}}\biggl\lvert\sum_{i:X_i\in A}\frac{\varepsilon_i}{\sqrt{|A|_{\mathcal{D}}}}\biggr\lvert\geq\lambda\biggr)\\
        &\leq |\mathcal{A}_{\mathcal{D}}|\,\mathbb{P}(|Z|\geq\lambda) \leq|\mathcal{A}_{\mathcal{D}}|\,e^{-\lambda^2/2} \lesssim n^{\mathrm{VCD}(\mathcal{A})} e^{-\lambda^2/2},
    \end{align*}
as desired, where the penultimate inequality follows from the standard Gaussian tail bound. Similarly, since $\mathcal{T}_A(\varepsilon) \sim N(0,1)$ for any $A\in\mathcal{A}$, we have 
\begin{align*}
    \mathbb{P}\Bigl(\max_{A\in\mathcal{A}}\bigl| \mathcal{T}_A(\varepsilon)\bigr|\geq\lambda\Bigr) &= \mathbb{P}\Bigl(\max_{A\in\mathcal{A}_\mathcal{D}}\bigl| \mathcal{T}_A(\varepsilon)\bigr|\geq\lambda\Bigr) \leq |\mathcal{A}_{\mathcal{D}}| \;  \mathbb{P}(|Z|\geq \lambda) \\
    &\lesssim n^{\mathrm{VCD}(\mathcal{A})}e^{-\lambda^2/2},
\end{align*}
which completes the proof.
\end{proof}

Proposition~\ref{prop:bound_sum_of_e_generic_set} allows us to control the CUSUM statistics of the noise $\varepsilon = (\varepsilon_1,\ldots,\varepsilon_n)$ on various candidate change regions. Proposition \ref{prop:cusum_diff_e<L} below further establishes that such CUSUM statistics are close to each other if the candidate change regions are close in terms of the error $L_n$ defined in~\eqref{eqn:def:loss}.
\begin{proposition}
\label{prop:cusum_diff_e<L}
    Let $\mathcal{D}=\{X_1,\ldots,X_n\}$ be deterministic design points on a manifold $\mathcal{M}$. Let $A_1,A_2\in\mathcal{A}$ be such that $L_n \{A_1,\emptyset\}\geq \tau>0$. Fix $\lambda > 0$, for any $A\in\mathcal{A}$, we denote
    \[
    \mathcal{E}_{n,A} := \biggl\{\biggl|\sum_{i:X_i\in A} \varepsilon_i\biggr| \leq \lambda |A|_{\mathcal{D}}^{1/2}\biggr\},
    \]
    and write $\mathcal{E}:= \mathcal{E}_{n, A_1}\cap \mathcal{E}_{n, A_2} \cap \mathcal{E}_{n, A_1\cap A_2} \cap \mathcal{E}_{n, A_1\setminus A_2}\cap \mathcal{E}_{n, A_2\setminus A_1}\cap \mathcal{E}_{n,\mathcal{M}}$. Furthermore, denote $\mathcal{F}_{n,A}:=\{|\mathcal{T}_A(\varepsilon)|\leq \lambda\}$ and write $\mathcal{F} := \mathcal{F}_{n,A_1}\cap \mathcal{F}_{n,A_2}$.
    
    Then, on the event $\mathcal{E} \cap \mathcal{F}$, we have 
    \begin{align*}
        |\mathcal{T}_{A_1}(\varepsilon) - \mathcal{T}_{A_2}(\varepsilon)|&\leq \min\biggl\{\frac{3\lambda}{\tau} L_n^{1/2}(A_1,A_2),\, 2\lambda\biggr\} .
    \end{align*}
\end{proposition}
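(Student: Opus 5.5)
The bound $2\lambda$ is immediate. On $\mathcal{F}$ we have $|\mathcal{T}_{A_1}(\varepsilon)|,|\mathcal{T}_{A_2}(\varepsilon)|\le\lambda$, and the triangle inequality gives the rest. So the work is the Lipschitz-type bound $3\lambda\tau^{-1}L_n^{1/2}(A_1,A_2)$.

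\textbf{Reductions.} Write $\delta:=L_n(A_1,A_2)$. If $3\lambda\delta^{1/2}/\tau\ge 2\lambda$, the first bound is implied by the second, so we may assume $\delta^{1/2}<2\tau/3$. In particular $\delta<\tau$ and $\delta\le\delta^{1/2}$.

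Because $\mathcal{T}_A$ and $L_n$ are invariant under complementation, we may assume the minimum in $L_n$ is attained by $A_1\triangle A_2$, so that $|A_1\triangle A_2|_{\mathcal{D}}=n\delta$. We must check this replacement is harmless for the events. The noise sum over $A_2^{\mathrm c}$ is $S_{\mathcal{M}}-S_{A_2}$, and the pieces $A_1\cap A_2^{\mathrm c}$, etc., are controlled by the sets in $\mathcal{E}$. If this costs extra constants, I will instead run the argument directly on whichever of $A_2$ or $A_2^{\mathrm c}$ realises the minimum, using $S_{\mathcal{M}}$ from $\mathcal{E}_{n,\mathcal{M}}$.

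\textbf{Key representation.} Write $S_A:=\sum_{i:X_i\in A}\varepsilon_i$, $a_k:=|A_k|_{\mathcal{D}}$ and $c_k:=\sqrt{n/(a_k(n-a_k))}$. A direct computation gives
\[
\mathcal{T}_{A_k}(\varepsilon)=c_k\bigl(S_{A_k}-a_kS_{\mathcal{M}}/n\bigr).
\]
Hence
\[
\mathcal{T}_{A_1}(\varepsilon)-\mathcal{T}_{A_2}(\varepsilon)=c_1\Bigl[(S_{A_1\setminus A_2}-S_{A_2\setminus A_1})-\tfrac{a_1-a_2}{n}S_{\mathcal{M}}\Bigr]+\Bigl(\tfrac{c_1}{c_2}-1\Bigr)\mathcal{T}_{A_2}(\varepsilon).
\]

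\textbf{First term.} On $\mathcal{E}$ we have $|S_{A_1\setminus A_2}|+|S_{A_2\setminus A_1}|\le\lambda\sqrt{2n\delta}$. Also $|a_1-a_2|\le n\delta$, so $|(a_1-a_2)S_{\mathcal{M}}/n|\le\lambda\delta\sqrt n\le\lambda\sqrt{n\delta}$. Since $\min(a_1,n-a_1)\ge n\tau$ and $\max(a_1,n-a_1)\ge n/2$, we get $a_1(n-a_1)\ge n^2\tau/2$, hence $c_1\le\sqrt{2/(n\tau)}$. The first term is therefore at most $(2+\sqrt2)\lambda\sqrt{\delta/\tau}$.

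\textbf{Second term.} On $\mathcal{F}$ we have $|\mathcal{T}_{A_2}(\varepsilon)|\le\lambda$, so it remains to bound $|c_1/c_2-1|$. The function $a\mapsto a(n-a)$ changes by at most $n|a_1-a_2|\le n^2\delta$, while $a_1(n-a_1)\ge n^2\tau/2$. Using $\delta<\tau$ (from the reduction), this gives $|c_1/c_2-1|\lesssim\delta/\tau\le\delta^{1/2}/\tau$.

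\textbf{Main obstacle.} The hard step is bookkeeping the constants so that the sum of the two terms is at most $3\lambda\delta^{1/2}/\tau$. The two terms carry different powers, $\sqrt{\delta/\tau}$ versus $\delta/\tau$, both dominated by $\delta^{1/2}/\tau$ since $\tau\le1/2$. The leading constant of the first term, $2+\sqrt2$, does not obviously fit under $3$. To tighten it, I would use $\tau\le1/2$ to absorb $\sqrt{\delta/\tau}\le\sqrt{1/2}\cdot\delta^{1/2}/\tau$. For the ratio I would use $\sqrt{1+x}-1\le x/2$ together with the reduction $\delta^{1/2}<2\tau/3$.

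If exact constants still fail, the fallback is to use the sharper bound $|S_{A_1\setminus A_2}|+|S_{A_2\setminus A_1}|\le\lambda\sqrt{2n\delta}$ more carefully, or to split into the case where $A_1\triangle A_2$ lies on one side only.
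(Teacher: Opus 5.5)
Your complement reduction fails. $\mathcal{T}_A$ is not invariant under complementation: from your own representation, $\mathcal{T}_{A^{\mathrm c}}(\varepsilon)=-\mathcal{T}_A(\varepsilon)$. Replacing $A_2$ by $A_2^{\mathrm c}$ therefore turns the left-hand side into $|\mathcal{T}_{A_1}(\varepsilon)+\mathcal{T}_{A_2}(\varepsilon)|$, which is a different quantity.

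This cannot be repaired, because in the branch where the minimum in $L_n$ is attained by $A_1\triangle A_2^{\mathrm c}$ the statement is false. Take the closed hemispheres $A_1=\{x^\top\alpha\ge 0\}$ and $A_2=\{x^\top(-\alpha)\ge 0\}$ in $\mathcal{S}$, with no design point on the equator. Then $L_n(A_1,A_2)=0$, but the left-hand side equals $2|\mathcal{T}_{A_1}(\varepsilon)|$, which is nonzero on an event of positive probability inside $\mathcal{E}\cap\mathcal{F}$.

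Your fallback does not rescue this either. Even for $\bigl||\mathcal{T}_{A_1}|-|\mathcal{T}_{A_2}|\bigr|$, working with $A_2^{\mathrm c}$ requires control of $S_{(A_1\cup A_2)^{\mathrm c}}$. That set is not among the events in $\mathcal{E}$, and writing its sum as $S_{\mathcal{M}}-S_{A_1}-S_{A_2\setminus A_1}$ only gives $O(\lambda\sqrt n)$, not $O(\lambda\sqrt{n\delta})$.

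What is actually provable is the bound with $L_n(A_1,A_2)$ replaced by $n^{-1}|A_1\triangle A_2|_{\mathcal{D}}$. The paper's proof tacitly assumes exactly this when it bounds $m_2-m_1$ and $|A_1\setminus A_2|_{\mathcal{D}}+|A_2\setminus A_1|_{\mathcal{D}}$ by $nL_n(A_1,A_2)$. You should state the restriction explicitly rather than claim a reduction.

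Under that reading the rest of your argument is correct, and it takes a different route from the paper. The paper splits the difference into three pieces: the $S_{\mathcal{M}}$ term, the common part $S_{A_1\cap A_2}$, and the symmetric-difference part. It applies the mean value theorem to $x\mapsto\sqrt{x/(n-x)}$ and $x\mapsto\sqrt{n/\{x(n-x)\}}$, uses only the $\mathcal{E}$-events when $L_n\le\tau/2$, and invokes $\mathcal{F}$ only when $L_n>\tau/2$.

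Your centred representation instead absorbs the common part into $(c_1/c_2-1)\mathcal{T}_{A_2}(\varepsilon)$ and controls it with $\mathcal{F}_{n,A_2}$. This avoids derivatives and needs balance only for $A_1$, via $a_1(n-a_1)\ge n^2\tau/2$ and $|a_1(n-a_1)-a_2(n-a_2)|\le n|a_1-a_2|$. By contrast, the paper's ``without loss of generality $m_1\le m_2$'' sits awkwardly with a hypothesis imposed on $A_1$ alone.

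Your constants do close, but only with the refinement you hinted at. Keep $|a_1-a_2||S_{\mathcal{M}}|/n\le\lambda\delta\sqrt n$ rather than relaxing it to $\lambda\sqrt{n\delta}$, and use $|\sqrt y-1|\le|y-1|$ to get $|c_1/c_2-1|\le 2\delta/\tau$. The total is then at most $\lambda\delta^{1/2}\tau^{-1}\{\sqrt\tau(2+\sqrt{2\delta})+2\sqrt\delta\}\le(1+\sqrt2)\lambda\delta^{1/2}/\tau$, using $\tau\le 1/2$ and $\sqrt\delta<2\tau/3\le 1/3$. With your cruder constant $2+\sqrt2$, the same computation gives about $3.08$ at $\tau=1/2$, so the refinement is needed.
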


\begin{proof}
We first assume that $L_n(A_1, A_2)\leq \tau/2$. The CUSUM statistic for any candidate change region $A$ may be re-written as 
\begin{align*}
	\mathcal{T}_A(\varepsilon) &= \sqrt\frac{|A|_{\mathcal{D}}|A^{\mathrm{c}}|_\mathcal{D}}{n} \biggl(\frac{1}{|A|_\mathcal{D}}\sum_{i:X_i \in A} \varepsilon_i - \frac{1}{|A^c|_{\mathcal{D}}}\sum_{i: X_i \in A^{\mathrm{c}}}\varepsilon_i\biggr)\\
	& =\sqrt\frac{|A|_{\mathcal{D}} n}{|A^{\mathrm{c}}|_{\mathcal{D}}} \biggl(\frac{1}{|A|_{\mathcal{D}}}\sum_{i:X_i \in A}\varepsilon_i-\frac{1}{n}\sum_{i=1}^n \varepsilon_i\biggr).
\end{align*}

Writing $m_1 := |A_1|_{\mathcal{D}}$ and $m_2 =: |A_2|_{\mathcal{D}}$. Without loss of generality, assume that $m_1 \leq m_2$. The above displayed equation allows us to decompose
\[  
\mathcal{T}_{A_1}(\varepsilon) - \mathcal{T}_{A_2}(\varepsilon) = I_1 + I_2 + I_3,
\]
where 
\begin{align*}
    I_1 & := \biggl(\sqrt\frac{m_2}{n-m_2} - \sqrt\frac{m_1}{n-m_1}\biggr) \sqrt\frac{1}{n}\sum_{i=1}^n \varepsilon_i,\\
    I_2 & := \biggl(\sqrt\frac{n}{m_1(n-m_1)} - \sqrt\frac{n}{m_2(n-m_2)} \biggr)\sum_{i: X_i \in A_1\cap A_2} \varepsilon_i,\\
    I_3 & := \sqrt\frac{n}{m_1(n-m_1)}\sum_{i:X_i\in A_1\setminus A_2} \varepsilon_i - \sqrt\frac{n}{m_2(n-m_2)} \sum_{i:X_i\in A_2\setminus A_1}\varepsilon_i.
\end{align*}
We bound the three terms separately. To control $I_1$, let $f(x):=\sqrt{x /(n-x)}$, then by the Mean Value Theorem, on the event $\mathcal{E}_{n, \mathcal{M}}$, we have  
\begin{align}
\label{proof:prop:cusum_noise:eqn_I1}
|I_1| &\leq \lambda (m_2-m_1) \sup_{u\in[m_1,m_2]} |f'(u)| = \frac{(m_2-m_1)\lambda}{2} \sup_{u\in[m_1,m_2]} \sqrt\frac{n^2}{u(n-u)^3} \nonumber\\
&\leq \frac{\lambda(m_2-m_1)}{2^{1/2}\tau^{3/2}n} \leq  \frac{\lambda}{2\tau} L_n^{1/2}(A_1,A_2),
\end{align}
where we used the fact $L_n(A_1,\emptyset) = \min\{m_1/n, 1-m_1/n\} \geq \tau$ and the assumption $L_n(A_1,A_2)\leq \tau/2$ in the penultimate inequality.

To control $I_2$, let $g(x):=\sqrt{n/\{x(n-x)\}}$. Again by a similar argument as above, on the event $\mathcal{E}_{n,A_1\cap A_2}$, we have
\begin{align}
\label{proof:prop:cusum_noise:eqn_I2}
    |I_2| &\leq \lambda \sqrt{|A_1\cap A_2|_{\mathcal{D}}} (m_2-m_1)\sup_{u\in[m_1,m_2]}|g'(u)|\nonumber\\
    &\leq\frac{\lambda n^{1/2}(m_2-m_1)}{2}\sup_{u\in[m_1,m_2]}\frac{|n-2u|}{2n}\biggl(\frac{n}{u(n-u)}\biggr)^{3/2} \leq \frac{\lambda}{2\tau} L_n^{1/2}(A_1,A_2).
\end{align}

To control $I_3$, on the event $\mathcal{E}_{n,A_1\setminus A_2}\cap \mathcal{E}_{n,A_2\setminus A_1}$, it holds that
\begin{align}\label{proof:prop:cusum_noise:eqn_I3}
	|I_3|& \leq \sqrt\frac{n}{m_1(n-m_1)}|A_1\setminus A_2|_{\mathcal{D}}^{1/2}\lambda + \sqrt\frac{n}{m_2(n-m_2)}|A_2\setminus A_1|_{\mathcal{D}}^{1/2}\lambda \nonumber \\
    &\leq 
    \sqrt{\frac{2}{n \tau}}\Bigl(\sqrt{|A_1\setminus A_2|_{\mathcal{D}}}+\sqrt{|A_2\setminus A_1|_{\mathcal{D}}}\Bigr) \lambda   \leq \frac{2}{\sqrt{ \tau}} L_n^{1/2}(A_1,A_2)  \lambda,
\end{align}
where we used the fact that $\sqrt{a}+\sqrt{b} \leq \sqrt{2(a+b)}$ for $a, b \geqslant 0$ in the last inequality.

Combining~\eqref{proof:prop:cusum_noise:eqn_I1},~\eqref{proof:prop:cusum_noise:eqn_I2} and~\eqref{proof:prop:cusum_noise:eqn_I3}, we get that 
\[
\bigl|\mathcal{T}_{A_1}(\varepsilon)-\mathcal{T}_{A_2}(\varepsilon)\bigr| \leq\frac{3\lambda}{\tau}L_n^{1/2}(A_1,A_2),
\]
as desired.

It remains to handle the case where $L_n(A_1,A_2) > \tau/2$. In this case, we have on the event $\mathcal{F}$ that 
\begin{align*}
    \bigl|\mathcal{T}_{A_1}(\varepsilon)-\mathcal{T}_{A_2}(\varepsilon)\bigr| &\leq |\mathcal{T}_{A_1}(\varepsilon)| + \bigl|\mathcal{T}_{A_2}(\varepsilon)| \leq 2\lambda,
\end{align*}
which completes the proof.
\end{proof}
In complement to Proposition~\ref{prop:cusum_diff_e<L}, Proposition~\ref{Prop:CUSUM_Signal} below states that, the signal CUSUM statistic evaluated at a candidate change region $A$,  decreases from its peak at $A=R_1$ at a linear rate of $L_n(R_1,A)$.
\begin{proposition}
\label{Prop:CUSUM_Signal}
Let $X_1,\ldots,X_n$ be deterministic design points on a manifold $\mathcal{M}$. For a change region $R_1\subseteq \mathcal{M}$, we have a vector $\mu = (\mu_1,\ldots,\mu_n)^\top$ such that $\mu_i = \mu^{(1)}$ if $X_i\in R_1$ and $\mu_i = \mu^{(2)}$ if $X_i\notin R_1$. Let $\theta := |\mu^{(1)}-\mu^{(2)}|$. If $L_n(R_1,\emptyset) = \tau>0$, then for any subset $A\subseteq \mathcal{M}$, we have 
\begin{align*}
    \mathcal{T}_{R_1}(\mu)-\mathcal{T}_{A}(\mu) \geq \frac{\sqrt{n}\theta}{4}\min\biggl\{ \frac{L_n(A, R_1)}{\sqrt{\tau}}, \sqrt{\tau}\biggr\}.
\end{align*}
\end{proposition}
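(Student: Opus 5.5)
Without loss of generality assume $\mu^{(1)}-\mu^{(2)}=\theta>0$. The opposite sign is symmetric: it flips the sign of every $\mathcal{T}_A(\mu)$, and I read the statement with $\mathcal{T}_{R_1}(\mu)\ge 0$. The plan is to write the noiseless CUSUM in closed form via the empirical measure $P_n:=n^{-1}\sum_i\delta_{X_i}$. Put $p:=P_n(R_1)$, $q:=P_n(A)$, $a:=P_n(A\cap R_1)$, $b:=P_n(A\setminus R_1)$, $e_1:=P_n(R_1\setminus A)=p-a$, so that $q=a+b$. Plugging $\mu$ into~\eqref{Eq:CUSUM} gives
\[
\mathcal{T}_A(\mu)=\sqrt n\,\theta\,\frac{a-pq}{\sqrt{q(1-q)}}=\sqrt n\,\theta\,\sqrt{p(1-p)}\,\rho_A,
\]
where $\rho_A$ is the empirical correlation of $\mathbbm 1_A$ and $\mathbbm 1_{R_1}$ under $P_n$. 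Hence
\[
\mathcal{T}_{R_1}(\mu)-\mathcal{T}_A(\mu)=\sqrt n\,\theta\sqrt{p(1-p)}\,(1-\rho_A).
\]
Since $\min(p,1-p)=\tau$, we have $\tau/2\le p(1-p)\le\tau$. So it suffices to show $1-\rho_A\gtrsim\min\{L_n(A,R_1)/\tau,\,1\}$.

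\emph{Reductions.} If $\rho_A\le 0$ the difference is at least $\sqrt n\theta\sqrt{\tau/2}$, which gives the $\sqrt\tau$ branch. Since $\mathcal{T}_{A^{\mathrm c}}(\mu)=-\mathcal{T}_A(\mu)$ and $L_n$ is complement invariant, we may assume $\rho_A>0$. We then bound $L_n(A,R_1)\le P_n(A\triangle R_1)=e_1+b$. Next use $1-\rho\ge(1-\rho^2)/2$ for $\rho\in[0,1]$, and identify $p(1-p)(1-\rho_A^2)$ with the residual variance of the best $P_n$-linear predictor of $\mathbbm 1_{R_1}$ from $\mathbbm 1_A$. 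That residual is the within-group variance
\[
V:=q\cdot\frac{a}{q}\Bigl(1-\frac{a}{q}\Bigr)+(1-q)\cdot\frac{e_1}{1-q}\Bigl(1-\frac{e_1}{1-q}\Bigr)=\frac{ab}{q}+\frac{e_1(1-p-b)}{1-q}.
\]
This gives $\mathcal{T}_{R_1}(\mu)-\mathcal{T}_A(\mu)\ge \sqrt n\theta\,V/\bigl(2\sqrt{p(1-p)}\bigr)\ge\sqrt n\theta\,V/(2\sqrt\tau)$.

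\emph{Case split.} In the first case $e_1+b\le\tau/2$. Then $a=p-e_1\ge p/2$ and $q\le p+b$, so $a/q\gtrsim 1$. Likewise $1-p-b\ge (1-p)/2$ and $1-q\le 1-p+e_1$, so $(1-p-b)/(1-q)\gtrsim1$. Hence $V\gtrsim b+e_1\ge L_n(A,R_1)$, which yields the $L_n/\sqrt\tau$ branch. In the second case $e_1+b>\tau/2$. Here I would show $V\gtrsim\tau$, which gives the $\sqrt\tau$ branch. This follows either directly, since one of $ab/q$ or $e_1(1-p-b)/(1-q)$ must be of order $\tau$ unless $A$ is close to $R_1^{\mathrm c}$, or via the complement reduction: when $A$ is close to $R_1^{\mathrm c}$ we get $\rho_A<0$, which is already handled.

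The main obstacle is pinning down universal constants so that exactly the factor $1/4$ appears. This affects the case thresholds, the comparison $p(1-p)$ versus $\tau$, and the loss from $1-\rho\ge(1-\rho^2)/2$. It matters most in the boundary regime of the second case, where $A$ is neither close to $R_1$ nor to $R_1^{\mathrm c}$. If the variance route loses too much there, I would fall back on minimising $\rho_A$ directly over $(a,b)$ with $e_1+b$ fixed: the map $(a,b)\mapsto(a-pq)/\sqrt{q(1-q)}$ is explicit, and it can be checked to be monotone in each of $e_1$ and $b$ on the relevant region.
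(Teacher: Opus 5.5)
The skeleton of your argument matches the paper's proof. The identity $p(1-p)(1-\rho_A^2)=V$ is exactly $\mathcal{T}_{R_1}^2(\mu)-\mathcal{T}_A^2(\mu)=\mathrm{RSS}_A(\mu)=n\theta^2V$. Your inequality $1-\rho_A\ge(1-\rho_A^2)/2$ is the paper's division by $\mathcal{T}_{R_1}(\mu)+\mathcal{T}_A(\mu)\le 2\mathcal{T}_{R_1}(\mu)$.

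Everything therefore rests on $V\ge\frac12\min\{\tau,L_n(A,R_1)\}$, and that step is not actually proved. Your second case only asserts that $V$ is of order $\tau$. It also aims at the wrong target. When $\tau/2<e_1+b<\tau$, the minimum in the statement is still the $L_n/\sqrt\tau$ branch, so you need $V\ge L_n/2$; no bound of the form $V\ge c\tau$ valid throughout your Case 2 delivers the factor $1/4$. The choice $c=1/2$ is false there: $e_1=0$, $a=p=\tau$, $b=0.6\tau$ gives $\rho_A>0$ but $V=0.375\tau$. Any $c<1/2$ gives only $\frac{c}{2}\sqrt n\theta\sqrt\tau$, which falls short as $L_n\uparrow\tau$.

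Your first case, with the bounds you wrote ($a\ge p/2$, $q\le p+b\le 3p/2$), also only gives $a/q\ge 1/3$. To reach $1/2$ you would need the sharper comparisons $b\le\tau/2\le a$ and $e_1\le\tau/2\le c$. So as written, the proposal does not establish the stated constant. You flag this yourself, and your fallback of minimising $\rho_A$ directly is not carried out.

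The missing idea is a one-line inequality that makes the case split unnecessary. Put $c:=1-p-b=P_n(A^{\mathrm c}\cap R_1^{\mathrm c})$, so that $V=\frac{ab}{a+b}+\frac{e_1c}{e_1+c}$. Using $\frac{xy}{x+y}\ge\frac12\min\{x,y\}$ on each term gives
\[
V\ge\tfrac12\bigl(\min\{a,b\}+\min\{e_1,c\}\bigr)=\tfrac12\min\{a+e_1,\,b+c,\,b+e_1,\,a+c\}.
\]
Here $a+e_1=p$, $b+c=1-p$, $b+e_1=P_n(A\triangle R_1)$ and $a+c=P_n(A\triangle R_1^{\mathrm c})$. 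The right-hand side is therefore exactly $\frac12\min\{\tau,L_n(A,R_1)\}$; this is the paper's Lemma~\ref{Lemma:CUSUM_signal}(b).

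The bound holds for every $A$, and $1-\rho\ge(1-\rho^2)/2$ holds for all $\rho\in[-1,1]$. So you need neither the $\rho_A\le 0$ reduction nor the case split. Substituting gives $\mathcal{T}_{R_1}(\mu)-\mathcal{T}_A(\mu)\ge\sqrt n\theta V/(2\sqrt\tau)\ge\frac{\sqrt n\theta}{4}\min\{L_n(A,R_1)/\sqrt\tau,\sqrt\tau\}$, as required.
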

\begin{proof}
Since both $L_n(R_1, A)$ and $\mathcal{T}_{R_1}(\mu)$ are invariant to replacing $R_1$ by $R_1^{\mathrm{c}}$, we may assume without loss of generality that $|R_1\triangle A|_{\mathcal{D}}\leq n/2$.

For any vector $z = (z_1,\ldots,z_n)^\top$ and any $A\in\mathcal{A}$, we define
\[
\mathrm{RSS}_{A}(z) := \sum_{i:X_i\in A} (z_i - \bar z_{A})^2 + \sum_{i: X_i\in A^{\mathrm{c}}} (z_i - \bar z_{A^{\mathrm{c}}})^2,
\]
where $\bar z_A:= |A|_{\mathcal{D}}^{-1}\sum_{i: X_i\in A}z_i$ and $\bar z_{A^{\mathrm{c}}}:= |A^{\mathrm{c}}|_{\mathcal{D}}^{-1}\sum_{i: X_i\in A^{\mathrm{c}}}z_i$ (with the convention that $\bar z_A = 0$ if $|A|_{\mathcal{D}} = 0$ and $\bar z_{A^{\mathrm{c}}} = 0$ if $|A^{\mathrm{c}}|_{\mathcal{D}} = 0$). We also write $\mathrm{RSS}(z) := \mathrm{RSS}_{\emptyset}(z)$.

We observe that 
\[
\mathcal{T}_{A}^2(\mu) = \mathrm{RSS}(\mu)-\mathrm{RSS}_{A}(\mu).
\]
Hence the difference of the squared CUSUM is equal to the difference of the residual sum of squares; for detailed discussions, see Lemma 4 of \cite{baranowski2019narrowest}.  Notice also that $\mathrm{RSS}_{R_1}(\mu)=0$. Together with Lemma \ref{Lemma:CUSUM_signal}(b), for any $A\subseteq \mathcal{M}$ such that $|A\triangle R_1|_{\mathcal{D}}\leq n/2$, we have
	\begin{align*}
		\mathcal{T}_{R_1}^2(\mu) -\mathcal{T}_{A}^2(\mu)&=\mathrm{RSS}_{A}(\mu) - \mathrm{RSS}_{R_1}(\mu)=\mathrm{RSS}_{A}(\mu)
        \\
        &\geq \frac{\theta^2}{2}\min\{|A\triangle R_1|_{\mathcal{D}},|(A\triangle R_1)^{\mathrm{c}}|_{\mathcal{D}}, |R_1|_{\mathcal{D}},|R_1^{\mathrm{c}}|_{\mathcal{D}}\}\\
        &=\frac{\theta^2}{2}\min\{|A\triangle R_1|_{\mathcal{D}},n\tau\},
	\end{align*}
	where the last equality is a result of the assumption that $|R_1\triangle A|_{\mathcal{D}}\leq n/2$ and the definition of $\tau$. Again, by Lemma~\ref{Lemma:CUSUM_signal}(a,c), we have that $\mathcal{T}_{R_1}(\mu) + \mathcal{T}_A(\mu) \leq 2\mathcal{T}_{R_1}(\mu) \leq 2\theta (n\tau)^{1/2}$. Consequently, we have 
    \[  
    \mathcal{T}_{R_1}(\mu) -\mathcal{T}_{A}(\mu) \geq \frac{\theta}{4}\min\biggl\{\frac{|A\triangle R_1|_{\mathcal{D}}}{\sqrt{n\tau}}, \sqrt{n\tau}\biggr\}
    \]
    as desired.
\end{proof}
\begin{proof}[Proof of Theorem~\ref{thm:consistency}]
By the definition of $\hat{R}$ we have $\mathcal{T}_{R_1}(Y) \leq \mathcal{T}_{\hat{R}}(Y)$. Hence
	\begin{align*}
		\mathcal{T}_{R_1}(\mu)-\mathcal{T}_{\hat{R}}(\mu)&=\mathcal{T}_{R_1}(Y)-\mathcal{T}_{\hat{R}}(Y)+\mathcal{T}_{\hat{R}}(\varepsilon)-\mathcal{T}_{R_1}(\varepsilon)\leq \mathcal{T}_{\hat{R}}(\varepsilon)-\mathcal{T}_{R_1}(\varepsilon).
	\end{align*}
Let $\mathcal{E}_{n,A}$ and $\mathcal{F}_{n,A}$, which depends on some $\lambda$ that will be specified later, be as defined in Proposition~\ref{prop:cusum_diff_e<L}.
By Propositions~\ref{prop:cusum_diff_e<L} and~\ref{Prop:CUSUM_Signal}, on the event 
\[
\Omega := \bigcap_{A\in\mathcal{A} \cup \{\mathcal{M}\}} \biggl(\mathcal{E}_{n,A} \cap \mathcal{E}_{n,A\cap R_1} \cap \mathcal{E}_{n,A\setminus R_1}\cap \mathcal{E}_{n,R_1\setminus A}\biggr) \cap \bigcap_{A\in\mathcal{A}}\mathcal{F}_{n,A} 
\]
we have 
\begin{equation}
\frac{\sqrt{n}\theta}{4}\min\biggl\{ \frac{L_n(\hat R, R_1)}{\sqrt{\tau}}, \sqrt{\tau}\biggr\} \leq |\mathcal{T}_{\hat{R}}(\varepsilon)-\mathcal{T}_{R_1}(\varepsilon)| \leq \min\biggl\{\frac{3\lambda}{\tau}L_n^{1/2}(\hat R, R_1), \,2\lambda\biggr\}.\label{Eq:Sandwich}
\end{equation}
We choose $\lambda = \sqrt{4\sigma^2\mathrm{VCD}(\mathcal{A})\log n}$. First note that if $L_n(\hat R, R_1) > \tau/2$, then we have $\sqrt{n\tau}\theta/8 \leq 2\lambda$, which contradicts the assumption in the theorem if we choose $C_0 = 1024$. Hence we may assume that  $L_n(\hat R, R_1) \leq \tau/2$, then~\eqref{Eq:Sandwich} implies that 
\[
L_n(\hat R, R_1) \leq \frac{144\lambda^2}{n\tau\theta^2},
\]
as desired.

It remains to control the probability of $\Omega$. To this end, define $\mathcal{A}\cap_{*} R_1:=\{A\cap R_1:A\in\mathcal{A}\}$, $\mathcal{A}\setminus_{*} R_1:=\{A\setminus R_1:A\in\mathcal{A}\}$, $R_1\setminus_{*} \mathcal{A}:=\{R_1\setminus A:A\in\mathcal{A}\}$. We apply Proposition~\ref{prop:bound_sum_of_e_generic_set} with a union bound to get for some universal constant $C_1>0$ that 
\[
\mathbb{P}(\Omega^{\mathrm{c}}) \leq Cn^D e^{-\lambda^2/2},
\]
where 
\[
D := \max\{\mathrm{VCD}(\mathcal{A}), \mathrm{VCD}(\mathcal{A}\cap_* R_1),\mathrm{VCD}(\mathcal{A}\setminus_{*} R_1), \mathrm{VCD}(R_1\setminus_{*} \mathcal{A})\}.
\]
The proof is complete since $D = \mathrm{VCD}(\mathcal{A})$ 
by Lemma~\ref{Lemma:vcdOperationDiffCap}.
\end{proof}

\subsection{Proof of Corollary \ref{Cor:EstimationDiscs}}
\begin{proof}
By Lemma~\ref{lemma:vcdim_discs}, $\mathrm{VCD}(\mathcal{A})=d+1$ and the rate of convergence of the loss follows. We proceed to show the rate of parameter estimation. By~\citet[Theorem 2]{vapnik2015uniform}, with probability at least $1-n^{-2(d+1)}$, we have
    \[{|\hat{R}\triangle R_1|_D} \geq\frac{n\mu(\hat{R}\triangle R_1)}{\mu(\mathbb{S}^{d-1})}-14\sqrt{nd\log(n)},\]
    where $\mu$ denotes the Lebesgue measure on the sphere. 
 Together with (a), there are universal constants $C_1,C_2$ such that
\begin{align*}
	& \mathbb{P}\left(\frac{\mu(\hat{R} \triangle R_1)}{\mu(\mathbb{S}^{d-1})} \leqslant \frac{C_2 \sigma^2 d \log (n)}{\theta^2\tau n}+14\sqrt{d\log(n)/n}\right)  \geqslant 1-C_1 n^{-(d+1)}.
\end{align*} 
By Lemma \ref{lemma:param_est_lb}, we have
$$
\begin{gathered}
	 \mathbb{P}\left(\|\alpha-\alpha'\|+|\beta-\beta'|\leqslant C_{d,\beta}\Big(\frac{C_2 \sigma^2 d \log (n)}{\theta^2\tau n}+\sqrt{d\log(n)/n}\Big)\right)  \geqslant 1-C_1 n^{-(d+1)},
\end{gathered}
$$
for constants $C_1,C_2,C_{d,\beta}$.
 
\end{proof}
\subsection{Proof of Theorem \ref{thm:multi_consistency}}
We first state an analog of Proposition~\ref{prop:bound_sum_of_e_generic_set} for the local CUSUM statistic of noise $\mathcal{T}_{A}^B(\varepsilon)$ uniformly over all $A$ and $B$.
\begin{proposition}\label{Cor:distLocalCUSUM}
Suppose $\mathcal{M} = \mathbb{S}^{d-1}$ and $\mathcal{S}$ is the class of all discs on $\mathcal{M}$. Then for some universal constant $C>0$, we have
	\begin{equation*}
		\mathbb{P}\Bigl(\max_{A,B \in \mathcal{S}: A \subseteq B} \bigl|\mathcal{T}_A^{B}(\varepsilon)\bigr| \geq \lambda \Bigr)  \leq Cn^{2d+2} e^{-\lambda^2/2}.
	\end{equation*}
\end{proposition}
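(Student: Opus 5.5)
We follow the strategy of Proposition~\ref{prop:bound_sum_of_e_generic_set}: discretise over the data, then take a union bound of Gaussian tails. For fixed $A\subseteq B$, the statistic $\mathcal{T}_A^B(\varepsilon)$ depends only on the index sets $A\cap\mathcal{D}$ and $(B\setminus A)\cap\mathcal{D}$, and is a linear form in $\varepsilon$. Write $a:=|A|_{\mathcal{D}}$ and $b:=|B\setminus A|_{\mathcal{D}}$, so that $a+b=|B|_{\mathcal{D}}$. We read $|A^{\mathrm c}|_{\mathcal{D}}$ in~\eqref{Eq:localCUSUM} as the complement relative to $B$, i.e.\ $b$; this is the local analogue of~\eqref{Eq:CUSUM}. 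The coefficient vector of the linear form then has squared norm
\[
\frac{ab}{a+b}\Bigl(\frac1a+\frac1b\Bigr)=1,
\]
so, before the absolute value is taken, $\mathcal{T}_A^B(\varepsilon)\sim N(0,1)$ whenever $a,b\geq 1$. It equals $0$ otherwise, by the stated convention. Consequently $\mathbb{P}(|\mathcal{T}_A^B(\varepsilon)|\geq\lambda)\leq e^{-\lambda^2/2}$ for each fixed pair.

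Next we count the distinct pairs $(A\cap\mathcal{D},B\cap\mathcal{D})$ with $A,B\in\mathcal{S}$. By Lemma~\ref{lemma:vcdim_discs}, $\mathrm{VCD}(\mathcal{S})=d+1$, so the Sauer--Shelah lemma gives
\[
|\mathcal{S}_{\mathcal{D}}|\leq\sum_{j=0}^{d+1}\binom nj\lesssim n^{d+1}.
\]
The traces of $A$ and of $B$ can each be chosen from $\mathcal{S}_{\mathcal{D}}$, so the number of distinct trace pairs is at most $|\mathcal{S}_{\mathcal{D}}|^2\lesssim n^{2d+2}$. The constraint $A\subseteq B$ only reduces this count. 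Since the maximum over $A\subseteq B$ equals the maximum over these finitely many trace pairs, a union bound gives
\[
\mathbb{P}\Bigl(\max_{A\subseteq B}|\mathcal{T}_A^B(\varepsilon)|\geq\lambda\Bigr)\leq C n^{2d+2}e^{-\lambda^2/2}.
\]
The constant $C$ absorbs the Sauer--Shelah sum, for instance via $\sum_{j\le k}\binom nj\le (en/k)^k$, or more crudely via $\sum_{j\le k}\binom nj\le 2n^k$ for $n\geq 2$.

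The main point requiring care is the variance computation, namely confirming that the normalisation $|A|_{\mathcal{D}}|A^{\mathrm c}|_{\mathcal{D}}/|B|_{\mathcal{D}}$ makes the statistic exactly standard normal. With the global complement $|A^{\mathrm c}|_{\mathcal{D}}=n-a$ in place of $b$, the variance would instead be $(n-a)/b\geq 1$, and the bound would fail as stated. The proof should therefore explicitly adopt the relative-complement reading, which is the one consistent with the RSS and CUSUM identity used later. If the global complement reading were forced, the statement would instead require the rescaling $\sqrt{b/(n-a)}$. Apart from this, the argument is routine once uniform control is reduced to the $O(n^{2d+2})$ distinct trace pairs.
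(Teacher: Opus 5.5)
Your proof is correct and is essentially the paper's argument. The paper union-bounds over the $O(n^{d+1})$ traces of $B$ and then applies Proposition~\ref{prop:bound_sum_of_e_generic_set} with $B$ in place of $\mathcal{M}$, which implicitly adopts the relative-complement reading you make explicit; you instead do a single union bound over the $O(n^{2d+2})$ trace pairs, each giving an exact $N(0,1)$ variable, and your remark that the global-complement reading would give variance $(n-a)/b$ and break the bound is accurate.
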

\begin{proof}
	Define $\mathcal{S}_{\mathcal{D}}:=\{A\cap\mathcal{D}: A\in\mathcal{S} \}$ and by Sauer-Shelah lemma we have $|\mathcal{S}_{\mathcal{D}}|\lesssim n^{\mathrm{VCD}(\mathcal{S})}$. By a union bound, we have
	\begin{align*}
		\mathbb{P}\Bigl(\max_{A,B \in \mathcal{S}: A \subseteq B}\bigl| \mathcal{T}^{B}_A(\varepsilon)\bigr|\geq\lambda\Bigr) \leq \sum_{B\in\mathcal{S}_{\mathcal{D}}} \mathbb{P}\Bigl(\max_{A \subseteq B}\bigl| \mathcal{T}^{B}_A(\varepsilon)\bigr|\geq\lambda\Bigr) \lesssim  n^{2\mathrm{VCD}(\mathcal{S})}e^{-\lambda^2/2},
	\end{align*}
    where the second inequality follows from Proposition~\ref{prop:bound_sum_of_e_generic_set} using $B$ in place of $\mathcal{M}$ therein. The desired conclusion follows since $\mathrm{VCD}(\mathcal{A}) = d+1$ when $\mathcal{S}$ is the set of discs.
\end{proof}
\begin{proof}[Proof of Theorem~\ref{thm:multi_consistency}]   
 Define events 
    \begin{align*}
            \mathcal{E}_{1} &:= \bigl\{ \forall k\in[r], \exists j \in[J_n] \text{ s.t. } R_k \subseteq B_j,\mathrm{dist}(R_k,B_j^{\mathrm{c}})\geq\omega/2, R_i \cap B_j = \emptyset , \forall i \neq k\bigr\},\\
        \mathcal{E}_{2}&:=\bigl\{\max_{A,B\in\mathcal{S}: A \subseteq B} \left|\mathcal{T}_A^{B}(\varepsilon)\right| \leq \sqrt\lambda \bigr\},\\
       	\mathcal{E}_3&:=\bigcap_{A,B\in\mathcal{S}:A\subseteq B} \Bigl\{\bigl|\mathrm{RSS}^{B}_{A}(Y)-(\mathrm{RSS}^{B}_{A}(\mu)+m_{A,B})\bigr|\\
        &\hspace{7cm}\leq2\sqrt{2\lambda(\mathrm{RSS}^{B}_{A}(\mu)+m_{A,B}/2)}+2\lambda\Bigr\},\\
        \mathcal{E}_4&:=\biggl\{\sup_{A\in\mathcal{S}} \biggl|\frac{|A|_{\mathcal{D}}}{n} - P_X(A)\biggr| \leq 7\sqrt\frac{\lambda}{n}\biggr\}.
    \end{align*}
    By Lemma~\ref{proof:lemma_exact_one_disc_covered}, we have $\mathbb{P}(\mathcal{E}_1) \geq 1-re^{-C_dJ_n\omega^d}$ for some constant $C_d$ depending only on $d$. By Proposition~\ref{Cor:distLocalCUSUM}, we have $\mathbb{P}(\mathcal{E}_2) \geq 1-Cn^{-2(d+1)}$ for some universal constant $C>0$. By \citet[Lemma~8.1]{Birge_2001} and the fact that $|\{A\cap \mathcal{D}:A\in\mathcal{S}\}|\lesssim n^{\mathrm{VCD}(\mathcal{S})} = n^{d+1}$, we have $\mathbb{P}(\mathcal{E}_3) \geq1-2n^{-2(d+1)}$. By~\citet[Theorem 2]{vapnik2015uniform} and the fact that $\mathrm{VCD}(\mathcal{S})=d+1$, we have $\mathbb{P}(\mathcal{E}_{4})\geq1-n^{-2(d+1)}$.
    Thus, we have
\[
\mathbb{P}(\mathcal{E}_1\cap\mathcal{E}_2\cap\mathcal{E}_3\cap\mathcal{E}_4)\geq 1 - re^{-C_dJ_n\omega^d} - Cn^{-2d-2},
\]
for some universal constant $C>0$ and some constant $C_d$ depending only on $d$. Henceforth, we work on the event $\mathcal{E}_1\cap\mathcal{E}_2\cap\mathcal{E}_3\cap\mathcal{E}_4$, and prove the theorem in three steps. In Steps 1 and 2, we prove that every element of $\hat{\mathcal{R}}_{\mathrm{init}}$ in Algorithm~\ref{algo:multiple_cp_detection} 
estimates one of the true change regions with small error. In Step 3, we prove that each of the $r$ change regions is estimated by at least one element of $\hat{\mathcal{R}}_{\mathrm{init}}$. We complete the proof in Step 4 by showing that $r=\hat{r}=:|\hat{\mathcal{R}}|$ holds.

\textbf{Step 1:} 
Fix $j\in [J_n]$ and suppose that $\hat R_j\in\hat{\mathcal{R}}_{\mathrm{init}}$. Denote $\tilde{R}_j :=\cup_{\ell\in[r]}(B_j\cap R_\ell)$, in this step, we show that $L_n(\hat{R}_j,\tilde R_j)$ is small. As in the description of Algorithm~\ref{algo:multiple_cp_detection}, we abbreviate $\mathcal{T}_j := \mathcal{T}^{B_j}_{\hat R_j}$ and $\mathrm{RSS}_j := \mathrm{RSS}^{B_j}_{\hat R_j}$. By the fact that $\hat R_j$ is selected into $\hat{\mathcal{R}}_{\mathrm{init}}$, we have $\mathcal{T}_j(Y) \geq \lambda_T = 4(n\lambda)^{1/4}+8\sqrt{\lambda}$ and $\mathrm{RSS}_j(Y) \leq \gamma_j = m_j + 2\sqrt{m_j\lambda} + 2\lambda$.

 On the event $\mathcal{E}_2$, we have by the condition on $\lambda_T$ that 
	 \begin{align}\label{Eqn:TjMuLarge}
	 	T_j(\mu) \geq T_j(Y) - \sqrt{\lambda} >4(n\lambda)^{1/4}+6\sqrt{\lambda}.
	 \end{align}
	 On the event $\mathcal{E}_3$, we have
	 \[\mathrm{RSS}_j(Y)\geq\Bigl(\sqrt{\mathrm{RSS}_j(\mu)+m_j/2}-\sqrt{2\lambda}\Bigr)^2+m_j/2-4\lambda,\]
	 which implies that
	 \begin{align}\label{Eqn:RSSjMuSmall}
	 \mathrm{RSS}_j(\mu)&\leq\mathrm{RSS}_j(Y)+6\lambda-m_j+2\sqrt{2\lambda(\mathrm{RSS}_j(Y)+4\lambda-m_j/2)}\nonumber\\
	 	&\leq 2\sqrt{m_j\lambda}+8\lambda+2\sqrt{2\lambda(3m_j/2+7\lambda)} \leq 6\sqrt{n\lambda} + 16\lambda,
	 \end{align} 
 where we used the assumption that $\mathrm{RSS}_j(Y)\leq m_j+2\sqrt{m_j\lambda}+2\lambda\leq 2m_j+3\lambda$ in the last two inequalities and the fact that $m_j\leq n$ in the final bound. Combining~\eqref{Eqn:TjMuLarge} and~\eqref{Eqn:RSSjMuSmall}, we get
 \begin{equation}\label{Eqn:RSSsq<T}
 	\sqrt{2\mathrm{RSS}_j(\mu)}<\mathcal{T}_j(\mu).
 \end{equation} 
 From Lemma~\ref{Lemma:CUSUM_signal}(d) and (e), we have
 \begin{align}
 	\frac{2\mathrm{RSS}_j(\mu)}{n\theta^2}
    &\geq \frac{1}{n}\min\biggl\{\frac{(\mathcal{T}_j(\mu))^2}{\theta^2},|B_j\setminus \tilde R_j|_{\mathcal{D}},|\hat{R}_j \triangle \tilde{R}_j|_{\mathcal{D}}, \,|\hat R_j\triangle (B_j\setminus \tilde{R}_j)|_{\mathcal{D}}\biggr\}.\label{Eqn:RSS>Ln}
 \end{align} 
  By~\eqref{Eqn:RSSsq<T}, the first term in the minimum is larger than the left-hand side. For the second and the fourth term,  Lemma~\ref{Lemma:ChompingDoughnut} implies that there exists a disc $A$ of radius at least $\omega/8$ such that $A\subset  (B_j\setminus \tilde{R}_j)\setminus \hat{R}_j \subseteq (B_j\setminus \tilde R_j)\cap (\hat R_j\triangle (B_j\setminus \tilde{R}_j))$. On the event $\mathcal{E}_4$,
	\begin{align}\label{Eqn:SymDifLarge}
		\min\{|B_j\setminus \tilde R_j|_{\mathcal{D}}, |\hat R_j\triangle(B_j\setminus\tilde{R}_j)|_{\mathcal{D}}\} &\geq	  n P_X(A) - 7\sqrt{n\lambda} \nonumber\\
        &\geq n\underline{f}_XC_d(\omega/8)^d-7\sqrt{n\lambda}\nonumber\\
		&\stackrel{\mathrm{(i)}}\geq \frac{n}{2}\underline{f}_XC_d(\omega/8)^d\stackrel{\mathrm{(ii)}}\geq \frac{50\sqrt{n\lambda}}{\theta^2} >\frac{2\mathrm{RSS}_j(\mu)}{\theta^2},
	\end{align}
	where (i) follows from the assumption that $\omega\geq C_1(\lambda/n)^{1/(2d)}$ for a sufficiently large choice of $C_1>0$, (ii) follows from the assumption that $\omega^{d}\min_{i\in[r]}\theta^{2}_i\geq C_2\sqrt{\lambda/n}$ for $C_2$ large enough, and the final inequality follows from~\eqref{Eqn:RSSjMuSmall} for $n$ large enough. 
	
	Hence, for~\eqref{Eqn:RSS>Ln} to hold, we necessarily have
    \begin{equation}\label{Eqn:RSS>SymDiff}
    \frac{1}{n}|\hat{R}_j \triangle \tilde{R}_j|_{\mathcal{D}}\leq\frac{2\mathrm{RSS}_j(\mu)}{n\theta^2},
\end{equation}
which implies
    \begin{equation}\label{Eq:LossTildeR}
    L_n(\hat{R}_j, \tilde R_j)\leq\frac{1}{n}|\hat{R}_j \triangle \tilde{R}_j|_{\mathcal{D}}\leq\frac{2\mathrm{RSS}_j(\mu)}{n\theta^2} \leq \frac{12\sqrt{n\lambda}+32\lambda}{n\theta^2},
\end{equation}
where the final bound uses~\eqref{Eqn:RSSjMuSmall} again.
\textbf{Step 2:} Fixing $j\in[J_n]$ such that $\hat {R}_j\in \hat{\mathcal{R}}_{\mathrm{init}}$ as in Step 1. We proceed to show that there exists a unique $k\in[r]$ such that $L_n(\hat R_j,R_k)\leq L_n(\hat R_j,\tilde R_j)$.

Write $l_j:=|\{k\in[r]:\hat R_j\cap R_{k}\neq\emptyset\}|$. We claim that $\hat R_{j}\cap \tilde R_j\neq \emptyset$ and so $l_j\geq 1$. 
Otherwise, we have $\tilde R_j\subseteq \hat R_j\triangle \tilde R_j$ and $\hat{R}_j\triangle(B_j\setminus \tilde{R}_j)\subseteq B_j\setminus\tilde{R}_j$, and from Lemma~\ref{Lemma:CUSUM_signal}(d) and (e) again, we obtain that 
\begin{align*}
    \mathrm{RSS}_j(\mu)\geq\min\biggl\{\frac{\theta^2}{2}|\hat{R}_j\triangle(B_j\setminus \tilde{R}_j)|_{\mathcal{D}},{(\mathcal{T}_j(\mu))^2}\biggr\}.
\end{align*}
However,~\eqref{Eqn:RSSsq<T} and~\eqref{Eqn:SymDifLarge} together give a contradiction to the above inequality.

We further claim that $l_j\leq 1$. Otherwise, assuming $l_j\geq 2$, by Lemma~\ref{Lemma:ChompingBiscuits}, there exists a disc $A$ of radius at least $\omega/2$ such that $A\subseteq\hat{R}_j\setminus\tilde{R}_j$, and on the event $\mathcal{E}_4$, we have
\begin{equation}\label{Eq:Thm3Step1Hoeffding}
	{|\hat{R}_j\setminus\tilde{R}_j|_D} \geq nP_X(A)-7\sqrt{n\lambda}\geq n\underline{f}_XC_d(\omega/2)^{d}-7\sqrt{n\lambda},
\end{equation}
for some universal constant $C_d$. From~\eqref{Eqn:RSS>SymDiff}, we have
\begin{equation}
	\label{Eq:RSS>Ln}
	\mathrm{RSS}_j(\mu)\geq\frac{\theta^2}{2}|\hat{R}_j\setminus\tilde{R}_j|_{\mathcal{D}}\geq\frac{\theta^2}{2}\underline{f}_XC_{d}(\omega/2)^{d}n -\frac{7\theta^2}{2}\sqrt{n\lambda},
\end{equation}
 which contradicts with~\eqref{Eqn:RSSjMuSmall} by a similar calculation as in~\eqref{Eqn:SymDifLarge}. Therefore, we must have $l_j=1$, i.e.\ $\hat{R}_j$ intersects exactly one true change region.

Denote $B^{\mathrm{int}}_j:=\{x\in B_j:\mathrm{dist}(x,B^c_j)\geq\omega/2\}$. We further claim that, for $k\in[r]$, either $R_k\cap B_j^{\mathrm{int}}=\emptyset$ or $R_k\cap B_j^c=\emptyset$ holds, for otherwise Lemma~\ref{Lemma:ChompingBiscuits} implies that there is a disc $A'$ of radius at least $\omega/2$ such that $A'\subseteq R_k\cap(B_j\setminus B_j^{\mathrm{int}})\subseteq \tilde{R}_j\setminus \hat R_j$, and we have $|\tilde{R}_j\setminus\hat{R}_j|_{\mathcal{D}}\geq \underline{f}_XC_d(\omega/2)^dn-7\sqrt{n
\lambda}$ for some constant $C_d$. Again, by~\eqref{Eq:RSS>Ln} and a similar calculation as in~\eqref{Eqn:SymDifLarge}, we get a contradiction with~\eqref{Eqn:RSSjMuSmall}.

Combining the above, we have shown that there exists a unique $k\in[r]$ such that $\hat R_j\cap R_k\neq\emptyset$, and that $R_k\subseteq B_j$. In particular, we have $\hat{R}_j\setminus R_k=\hat{R}_j\setminus\tilde{R}$ and $R_k = R_k\cap B_j \subseteq \tilde R_j$. Therefore, $\hat{R}_j$ estimates an unique change region $R_k$, and 
\begin{align}
	L_n(\hat R_j, R_k) &= \frac{1}{n}\bigl|(\hat R_j \setminus R_k)\cup(R_k\setminus \hat R_j)\bigr|_{\mathcal{D}} \leq \frac{1}{n}\bigl|(\hat R_j\setminus \tilde R_j)\cup (\tilde R_j \setminus \hat R_j)\bigr|_{\mathcal{D}}  \nonumber\\
	&= L_n(\hat R_j, \tilde R_j)\leq \frac{12\sqrt{n\lambda}+32\lambda}{n\theta^2}.\label{Eq:LossRhatRtilde}
\end{align}

    \textbf{Step 3:} In this step, we show that every change region is estimated by one of the $J_n$ discs, and the corresponding disc is selected into $\hat{\mathcal{R}}_{\mathrm{init}}$.
    
    Fix $k\in[r]$, on the event $\mathcal{E}_{1}$, there exists $j\in[J_n] $ such that $\mathrm{dist}(R_k,B_j^{\mathrm{c}})\geq\omega/2,R_k \subseteq B_j, R_i \cap B_j = \emptyset, \forall i\neq k$. The dependency of $j$ on $k$ is suppressed for notation simplicity.
    
    We proceed to show that $\mathcal{T}_j(Y)>\lambda_T$. On the event $\mathcal{E}_{2}$, we have
	\begin{align}\label{Eq:CUSUMDataLowerBound}
		\mathcal{T}_{j}(Y) &\geq \mathcal{T}^{B_j}_{R_k}(Y) 
		\geq \mathcal{T}^{B_j}_{R_k}(\mu) - \sqrt{\lambda} \geq \sqrt{\min(\lvert R_k \rvert_{\mathcal{D}}, \lvert B_j \backslash R_k \rvert_\mathcal{D})}\frac{\theta_k}{\sqrt{2}} - \sqrt{\lambda},
	\end{align}
where the last inequality is due to Lemma \ref{Lemma:CUSUM_signal}(d). Denote by $\mathrm{Disc}(a,b)$ a disc on $\mathbb{S}^{d-1}$ with center $a\in\mathbb{S}^{d-1}$ and radius $b\in(0,\pi/2)$, then on the event $\mathcal{E}_4$, we have
\begin{align*}
P_X(B_j\setminus R_k) &\geq P_X(\mathrm{Disc}(\mathrm{Ctr}(R_k),\delta+\omega/2)) - P_X(\mathrm{Disc}(\mathrm{Ctr}(R_k),\delta)) \geq \underline{f}_XC_d(\omega/2)^{d}n,
\end{align*}
Thus, from~\eqref{Eq:CUSUMDataLowerBound} we have for $n$ sufficiently large that 
\[
\mathcal{T}_j(Y)> \frac{\theta_k}{\sqrt{2}}\Bigl\{\underline{f}_X nC_d\min\{(\omega/2)^{d},\delta^d\}-7\sqrt{n\lambda}\Bigr\}^{1/2} - \sqrt{\lambda} \stackrel{\mathrm{(i)}}\geq 5(n\lambda)^{1/4}-\sqrt{\lambda} > \lambda_T,
\]
where (i) follows from the assumption that $\delta\geq\omega$ and a similar argument as that in~\eqref{Eqn:SymDifLarge}. 

Lastly, we show that $\mathrm{RSS}_{j}(Y)<\gamma_j$. Observe that for $A\subset B_j$, it holds that $\{\mathcal{T}_{A}^{B_j}(Y)\}^2 = \mathrm{RSS}^{B_j}_{\emptyset}(Y)-\mathrm{RSS}^{B_j}_{A}(Y)$,
which implies that $\mathrm{RSS}_{j}(Y)\leq\mathrm{RSS}_{R_k}^{B_j}(Y)$ because $
\hat{R}_j$ is chosen in such a way that $\{\mathcal{T}_{j}(Y)\}^2\geq\{\mathcal{T}^{B_j}_{R_k}(Y)\}^2$. By the fact that $\mathrm{RSS}^{B_j}_{R_k}(\mu)=0$, on event $\mathcal{E}_3$ we have $\mathrm{RSS}^{B_j}_{R_k}(Y)\leq m_{{R}_k,B_j}+2\sqrt{\lambda m_{R_k,B_j}}+2\lambda=\gamma_j$ since $m_{R_k,B_j} = |B_j|_{\mathcal{D}}-2=m_j$. In conclusion, we have $\hat{R}_j\in\hat{\mathcal{R}}_{\mathrm{init}}$.
\textbf{Step 4:} Step 2 defines a map $h:\hat{\mathcal{R}}_{\mathrm{init}} \to [r]$ such that each $\hat R\in\hat{\mathcal{R}}_{\mathrm{init}}$ is close (in the sense of~\eqref{Eq:LossRhatRtilde}) to $R_{h(\hat R)}$. Step 2 implies that this map is a surjection. We claim that for any $\hat{R}\in \hat{\mathcal{R}}_{\mathrm{init}}$, we have $\sup_{x\in\hat{R}} \mathrm{dist}(x, R_{h(\hat{R})}) \leq \omega/2$, for otherwise, there exists a point $x_0\in\hat{R}\setminus R_{h(\hat{R})}$ such that $\mathrm{dist}(x_0,\hat{R}^{\mathrm{c}})\geq\omega/4$ and $\mathrm{dist}(x_0,R_{h(\hat{R})})\geq\omega/4$, thus $\hat{R} \setminus R_{h(\hat{R})}$ contains a disc $A$ of radius $\omega/4$ centered at $x_0$, which implies on event $\mathcal{E}_4$ that for sufficiently large $n$, 
\[  
\frac{1}{n}|\hat{R} \triangle {R}_{h(\hat{R})}|_{\mathcal{D}} \geq \underline{f}_XC_d(\omega/4)^{d} - 7\sqrt{\lambda/n} \geq \frac{50\sqrt{n\lambda}}{\theta^2} > \frac{12\sqrt{n\lambda}+32\lambda}{n\theta^2},
\]
contradicting~\eqref{Eq:LossTildeR}. Here, the calculation in the penultimate inequality is similar to that in~\eqref{Eqn:SymDifLarge}, with a possibly larger choice of $C_1$.  From this claim, we see that for any $\hat{R}, \hat{R}'\in \hat{\mathcal{R}}_{\mathrm{init}}$ such that $h(\hat{R})\neq h(\hat{R}')$, we must have by the fact that $d(R_{h(\hat{R})}, R_{h(\hat{R}')}) > \omega$ and the triangle inequality that $\hat{R}\cap \hat{R}' = \emptyset$. 

We further claim that for any $\hat{R},\hat{R}'\in\hat{\mathcal{R}}_{\mathrm{init}}$ such that $h(\hat{R})=h(\hat{R}')$, we have $\hat{R}\cap\hat{R}'\neq\emptyset$. Otherwise, we would have either $\sup_{x\in \hat{R}\cap R_{h(\hat{R})}}\mathrm{dist}(x,R_{h(\hat{R})})<\delta$ or $\sup_{x\in \hat{R}' \cap R_{h(\hat{R})}}\mathrm{dist}(x,R_{h(\hat{R})})<\delta$, and there would exist a disc of diameter $\delta$ in either $R_{h(\hat{R})}\setminus\hat{R}$ or $R_{h(\hat{R})}\setminus\hat{R}'$, and on the event $\mathcal{E}_4$, we have 
\[
\max\{L_n(\hat{R},R_{h(\hat{R})}),L_n(\hat{R}',R_{h(\hat{R})})\} \geq \underline{f}_XC_d(\delta/2)^d-7\sqrt{\lambda/n},
\]
violating~\eqref{Eq:LossTildeR} again by a similar calculation as in~\eqref{Eqn:SymDifLarge}. 

Therefore, by the construction of $\hat{\mathcal{R}}$ from $\hat{\mathcal{R}}_{\mathrm{init}}$, exactly one element of each of $h^{-1}(i)$ for $i\in[r]$ will be kept in $\hat{\mathcal{R}}$, i.e. the restriction of $h$ on $\hat{\mathcal{R}}$ defines a bijection to $[r]$. In particular, this implies that $\hat{r} = r$ as desired. We complete the proof by reminding ourselves that for each estimated change region $\hat{R}$ in $\hat{\mathcal{R}}$, we have the desired upper bound on the loss between $\hat R$ and $R_{h(\hat{R})}$ from~\eqref{Eq:LossRhatRtilde}. 
\end{proof}

\section{Ancillary Results}\label{appB}

\subsection{A minimax result}\label{appB1}
The following proposition establishes the minimax lower bound for estimating a single disc-shaped region of change $R \in\mathcal{S}$ on the sphere $\mathbb{S}^{d-1}$, based on $n$ independent observations $(X_i, Y_i)$, where $X_1,\ldots, X_n$ are uniformly distributed on $\mathbb{S}^{d-1}$, and $Y_1,\ldots,Y_n$ are generated according to \eqref{eqn:dataset}. Recall the loss function $L(\cdot, \cdot)$ defined in~\eqref{Eq:PopLoss}.
\begin{proposition}
\label{Prop:LowerBound}
	We have
	\begin{equation}
		\inf_{\hat{R}}\sup_{R\in\mathcal{S}} \mathbb{E}(L(\hat R, R)) \geq \frac{Cd\sigma^2 }{n\theta^2},
	\end{equation}
    for some universal constant $C>0$, where the infimum is taken over all estimators $\hat R$ for the change region, using data $(X_1,Y_1),\ldots,(X_n,Y_n)$ generated from the true model with a single change region $R$.
\end{proposition}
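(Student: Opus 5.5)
We prove Proposition~\ref{Prop:LowerBound} with Fano's method, using a Varshamov--Gilbert packing inside a family of discs indexed by a $(d-1)$-dimensional perturbation of the centre. The packing dimension $d-1\asymp d$ will produce the factor $d$ in the bound. Throughout we work with $\sigma=1$; the general case follows by scaling $\theta/\sigma$.

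\textbf{Step 1: a hypercube of discs.} We fix a reference disc $R_0=A_{e_d,\beta_0}$ whose radius is bounded away from $0$ and $\pi/2$, for instance $\beta_0=1/2$. For $v\in\{-1,1\}^{d-1}$ we set
\[
\alpha_v=\frac{e_d+h\sum_{k=1}^{d-1}v_ke_k}{\lVert e_d+h\sum_{k=1}^{d-1}v_ke_k\rVert}
\]
and $R_v=A_{\alpha_v,\beta_0}$, with $h>0$ small to be chosen. By Varshamov--Gilbert there is a subset $V\subseteq\{-1,1\}^{d-1}$ with $\log|V|\geq (d-1)/8$ and pairwise Hamming distance at least $(d-1)/8$.

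\textbf{Step 2: geometry.} We need two-sided control of the uniform measure of $R_v\triangle R_w$.

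The upper bound is $P_X(R_v\triangle R_w)\lesssim \lVert\alpha_v-\alpha_w\rVert\lesssim h\sqrt{d}$, up to the surface-area factor of the boundary $(d-2)$-sphere.

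For the lower bound we use the Hamming separation $\rho(v,w)\geq (d-1)/8$. The boundary sphere of $R_0$ splits into $d-1$ coordinate directions, and moving the centre along $e_k$ shifts the boundary by roughly $h|x_k|$. Integrating $|\langle \alpha_v-\alpha_w, x\rangle|$ over the boundary gives
\[
P_X(R_v\triangle R_w)\gtrsim c\,h\,\lVert v-w\rVert\, /\sqrt{d}\,\cdot \mathrm{vol}\gtrsim c'h,
\]
after normalising. This uses $\mathbb{E}|\langle u,x\rangle|\asymp \lVert u\rVert/\sqrt{d}$ for $x$ uniform on the sphere.

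The loss must also respect complements. Since $L(R_v,R_w)=\min\{P_X(R_v\triangle R_w),P_X(R_v\triangle R_w^{c})\}$ and $P_X(R_v\triangle R_w^{c})\geq 1-P_X(R_v\triangle R_w)$, the minimum equals the first term once $h$ is small. Hence the family is $s$-separated in $L$ with $s\asymp h$ up to dimension-dependent constants. The delicate part is that the constants in these volume bounds for symmetric differences of caps must match so that $s\asymp$ the average KL scale. If they do not, we pass instead to Assouad's lemma with a localised per-coordinate symmetric-difference measure.

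\textbf{Step 3: KL divergence.} Let $P_v$ denote the law of $(X,Y)$ under $R_v$. Then
\[
\mathrm{KL}(P_v^{\otimes n},P_w^{\otimes n})=\frac{n\theta^2}{2}P_X(R_v\triangle R_w)\lesssim n\theta^2 h .
\]
Fano's inequality then gives
\[
\inf_{\hat R}\max_{v} \mathbb{E}L(\hat R,R_v)\geq \frac{s}{2}\Bigl(1-\frac{Cn\theta^2h+\log 2}{(d-1)/8}\Bigr).
\]
We choose $h\asymp d/(n\theta^2)$, small enough that the bracket is at least $1/2$. This gives $s\asymp h\asymp d/(n\theta^2)$, the claimed bound.

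\textbf{Main obstacle.} Everything rests on keeping the KL per pair of order $h$, rather than $h\sqrt{d}$, while the separation remains of order $h$. Otherwise the $d$ factor is lost to the mismatch between the $\ell_2$ distance of the centres and the symmetric-difference volumes. If the pairwise version fails, we would instead use Assouad. With cube-adjacent neighbours differing in one coordinate, each flip contributes KL $\asymp n\theta^2 h/\sqrt{d}$ and loss $\asymp h/\sqrt{d}$ per coordinate. Summing over $d-1$ coordinates yields total loss $\gtrsim d\cdot h/\sqrt{d}$ with $h/\sqrt{d}\asymp 1/(n\theta^2)$, again giving $d/(n\theta^2)$. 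This additive structure may require approximating $L$ from below by a sum over coordinates, and that is what we would verify first.
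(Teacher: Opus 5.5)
Your Fano route can be made to work, but the estimate it depends on is never established, and your Assouad fallback cannot supply it. For small perturbations $L(R_v,R_w)=P_X(R_v\triangle R_w)$ and $\mathrm{KL}(P_v^{\otimes n}\,\|\,P_w^{\otimes n})=\frac{n\theta^2}{2\sigma^2}P_X(R_v\triangle R_w)$. So Fano needs exactly $\max_{v\neq w}P_X(R_v\triangle R_w)\le C\min_{v\neq w}P_X(R_v\triangle R_w)$ over the packing, with $C$ universal; every dimension-dependent factor then cancels. Step~2 does not give this. It pairs an upper bound of order $h\sqrt d$ with a lower bound of order $h$, and inserted into Fano these give only $\sqrt d/(n\theta^2)$. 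The bound $\mathrm{KL}\lesssim n\theta^2 h$ in Step~3 does not follow from Step~2. Likewise, ``$s\asymp h$ up to dimension-dependent constants'' is exactly where universality is lost: with $\beta_0=1/2$ the boundary mass of $R_0$ is exponentially small in $d$, so $h\asymp d/(n\theta^2)$ is not the right scale unless that factor is shown to cancel between $s$ and the KL.

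The Assouad fallback fails structurally. It needs $L(R_v,R_w)\ge \alpha\,\rho(v,w)$ for all pairs, but $P_X(R_v\triangle R_w)$ grows like $\lVert\alpha_v-\alpha_w\rVert\asymp h\sqrt{\rho(v,w)}$. The admissible $\alpha$ is therefore the one-flip loss divided by $\sqrt{d}$, and Assouad returns only $\sqrt d/(n\theta^2)$; ``summing over coordinates'' treats an $\ell_2$-type loss as if it were additive. Finally, for small $d$ the $\log 2$ term in Fano cannot be absorbed. With your numbers the bracket is below $1/2$ for every $d\le 12$ and non-positive for $d\le 6$, whatever $h$ is. Bounded $d$ therefore needs a separate two-point argument, as in the paper's Le Cam step for $d<30$.

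The missing comparability is easy to obtain once you isolate it. Taking $\beta_0=0$ makes it exact: for hemispheres $P_X(A_{\alpha,0}\triangle A_{\alpha',0})=\angle(\alpha,\alpha')/\pi$. If $h^2(d-1)\le 1$, every packing angle lies in $[h\sqrt{d-1}/2,\ \pi h\sqrt{d-1}]$, so separation and KL agree up to a factor $2\pi$. Choosing $h\sqrt{d-1}\asymp d/(n\theta^2)$ then gives the rate as soon as $n\theta^2\gtrsim d$. For general $\beta_0$, use that for equal caps the symmetric-difference mass is nondecreasing (two-point symmetrisation) and subadditive (triangle inequality along a great circle) in the angle between centres. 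The price is requiring $n\theta^2$ large relative to the exponentially small mass of $R_0$.

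The paper avoids cap geometry entirely. It centres equal-volume discs at the points $v/\lVert v\rVert_2$ of a constant-weight Gilbert--Varshamov code, which are pairwise at distance at least $1$, and makes the discs small enough to be disjoint. Then $P_X(D\triangle D')=2P_X(D)$ for every pair, so separation and KL are literally the same number. Setting $P_X(D)=\gamma d\sigma^2/(2n\theta^2)$ lets Fano deliver $\gamma d\sigma^2/(8n\theta^2)$ directly.
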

\begin{proof}
First assume that $d < 30$,  we can find two discs $D$ and $D'$ such that $L(D, D') = \sigma^2/(n\theta^2)$. Define $P_D$ to be the joint distribution of $(X_1,Y_1),\ldots,(X_n,Y_n)$ such that $Y_i\mid X_i \sim N(\theta\mathbbm{1}_{X_i\in D}, \sigma^2)$. We have 
	\begin{equation}\label{Eq:KL}
	d_{\mathrm{KL}}(P_D\,\|\, P_{D'}) = \mathbb{E}\biggl(\bigl|\{i: X_i\in D\triangle D'\}\bigr| \frac{\theta^2}{2\sigma^2}\biggr) =\frac{n\theta^2}{2\sigma^2}L(D', D) = \frac{1}{2}.
	\end{equation}

Then, by Le Cam's two-point lemma \citep{yu1997assouad} and Pinsker's inequality, we have 
\[  
\inf_{\hat{D}}\sup_{D} \mathbb{E}_{P_D}(L(\hat D, D)) \geq \frac{L(D,D')}{2}(1 - d_{\mathrm{TV}}(P_D, P_{D'})) \geq \frac{L(D,D')}{4} = \frac{\sigma^2}{4n\theta^2}\geq \frac{d\sigma^2}{120 n\theta^2},
\]
as desired.

Now, assume that $d \geq 30$.
By Gilbert--Varshamov lemma~\citep{gilbert1952,varshamov1957estimate}, there exists a subset $S$ of $\{0,1\}^d$ of cardinality $M \geq e^{\gamma d}$ for $\gamma = 0.096$, such that for every distinct pair of $u, v\in S$, we have $\|v\|_0 = \lceil d/2\rceil$ and $\|v-u\|_0 \geq \lceil d / 2\rceil $.

We define a set of discs $\mathcal{D}$ centred at $v/\lVert v\rVert_2$ for $v\in S$ with radius $r$ chosen so that for $D\in\mathcal{D}$, we have 
\[  
\mu(D) = \frac{\gamma d\sigma^2}{2 n\theta^2}\mu(\mathbb{S}^{d-1}),
\]
where $\mu$ is the Lebesgue measure on the sphere. For $n$ sufficiently large, we have the radius $r$ is smaller than $1/2$. On the other hand, by construction, for $u,v \in S$, we have $\|u/\|u\|_2 - v/\|v\|_2\|_2 \geq 1$. Hence all discs in $\mathcal{D}$ are disjoint. In particular, we have  
    \[  
    L(D', D) = \frac{2\mu(D)}{\mu(\mathbb{S}^{d-1})} = \frac{\gamma d\sigma^2}{n\theta^2}.
    \]
	
	By a similar calculation as in~\eqref{Eq:KL}, we have 
	\[
	d_{\mathrm{KL}}(P_D\,\|\, P_{D'})  =\frac{n\theta^2}{2\sigma^2}L(D', D) \leq \frac{1}{2}\gamma d.
	\]
	Finally, by Fano's lemma \citep{yu1997assouad}, we have 
	\begin{align*}
	\inf_{\hat{D}}\sup_{D} \mathbb{E}_{P_D}(L(\hat D, D)) &\geq \frac{\min_{D\neq D'\in\mathcal{D}} L(D,D')}{2} \biggl(1 - \frac{\max_{D\neq D'\in\mathcal{D}} d_{\mathrm{KL}}(P_D\,\|\,P_{D'})+\log 2}{\log |\mathcal{D}|}\biggr)\\
    &\geq \frac{\gamma d\sigma^2}{2n\theta^2}\biggl(1 - \frac{\gamma d / 2+\log 2}{\gamma d}\biggr) \geq \frac{\gamma d\sigma^2}{8n\theta^2},
	\end{align*}
    as desired. 
\end{proof}

\subsection{Other lemmas used in the proofs}
\begin{lemma}\label{Lemma:vcdOperationComplement}
	Let $\Omega$ denote a set and $\mathcal{A}$ denote a family of subsets from $\Omega$ such that $\text{VCD}(\mathcal{A})=k$. Let $\mathcal{B}:=\{A^{\text{c}}:A\in\mathcal{A}\}$. Then any set shattered by $\mathcal{A}$ is also shattered by $\mathcal{B}$, and $\text{VCD}(\mathcal{B})=k$.
\end{lemma}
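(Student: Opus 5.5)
The plan is to show that complementation is a bijection $A\mapsto A^{\mathrm{c}}$ from $\mathcal{A}$ onto $\mathcal{B}$ that preserves shattering, and then to apply the same argument in the other direction.

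Fix a finite set $F\subseteq\Omega$ that is shattered by $\mathcal{A}$. Shattering means that for every $G\subseteq F$ there is some $A_G\in\mathcal{A}$ with $A_G\cap F=G$. To show $F$ is shattered by $\mathcal{B}$, take an arbitrary target $G\subseteq F$ and set $G':=F\setminus G$. Choose $A_{G'}\in\mathcal{A}$ with $A_{G'}\cap F=G'$. Then $A_{G'}^{\mathrm{c}}\in\mathcal{B}$ and
\[
A_{G'}^{\mathrm{c}}\cap F=F\setminus(A_{G'}\cap F)=F\setminus G'=G.
\]
Since $G$ was arbitrary, every subset of $F$ is cut out by a member of $\mathcal{B}$. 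Hence $F$ is shattered by $\mathcal{B}$, which is the first claim.

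For the equality of VC dimensions, the first claim gives $\mathrm{VCD}(\mathcal{B})\geq k$, because an $\mathcal{A}$-shattered set of size $k$ is also $\mathcal{B}$-shattered. For the reverse inequality, note that complementation is an involution, so $\mathcal{A}=\{B^{\mathrm{c}}:B\in\mathcal{B}\}$. Applying the same argument with the roles of $\mathcal{A}$ and $\mathcal{B}$ swapped shows that any set shattered by $\mathcal{B}$ is shattered by $\mathcal{A}$, so $\mathrm{VCD}(\mathcal{B})\leq\mathrm{VCD}(\mathcal{A})=k$.

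There is no real obstacle here. The only points needing care are the set identity $A^{\mathrm{c}}\cap F=F\setminus(A\cap F)$ and noting that the argument also covers the case $k=\infty$.
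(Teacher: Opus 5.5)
Your proof is correct and follows the paper's argument: cut out $F\setminus G$ with some $A\in\mathcal{A}$ and take its complement. For the reverse inequality the paper says only that a set not shattered by $\mathcal{A}$ is ``similarly'' not shattered by $\mathcal{B}$, whereas you justify this via the involution $\mathcal{A}=\{B^{\mathrm{c}}:B\in\mathcal{B}\}$ and also state the first claim for an arbitrary shattered set rather than only a witness of size $k$, which matches the lemma's wording more faithfully.
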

\begin{proof}
	Since $\text{VCD}(\mathcal{A})=k$, there exists a set $S\subset\Omega$ such that $|S|=k$, and $\{S\cap A:A\in \mathcal{A}\}=2^{S}$. Let $U\in 2^{S}$, we have $S\setminus U\in 2^S$, then there exists $R_1\in\mathcal{A}$ such that $S\setminus U=S\cap R_1$. Therefore, $U=S\cap R_1^{\text{c}}$. Since $U$ is arbitrarily taken from $2^S$, we have $2^{S}=\{S\cap A:A\in \mathcal{B}\}$ and $\text{VCD}(\mathcal{B})\geq k$. Similarly, for any set that cannot be shattered by $\mathcal{A}$, it cannot be shattered by $\mathcal{B}$ neither.
\end{proof}
\begin{lemma}\label{Lemma:vcdOperationDiffCap}	Let $\mathcal{M}$ be a generic set and $\mathcal{A}$ denote a family of subsets from $\mathcal{M}$. Let $R\subseteq\mathcal{M}$, and define $\mathcal{A}\cap_*R:=\{A\cap R:A\in\mathcal{A}\},\mathcal{A}\setminus_*R:=\{A\setminus R:A\in\mathcal{A}\}$ and $R\setminus_*\mathcal{A}:=\{R\setminus A:A\in\mathcal{A}\}$. Then we have the following holds
\renewcommand{\labelenumi}{(\alph{enumi})}
	\begin{enumerate}
		\item $\mathrm{VCD}(\mathcal{A}\cap_* R)\leq \min\{\mathrm{VCD}(\mathcal{A}), |R|\}$;
		\item $\mathrm{VCD}(\mathcal{A}\setminus_* R)\leq \min\{\mathrm{VCD}(\mathcal{A}), |R^{\mathrm{c}}|\}$;
		\item $\mathrm{VCD}(R\setminus_*\mathcal{A})\leq \min\{\mathrm{VCD}(\mathcal{A}), |R|\}$.
	\end{enumerate} 
\end{lemma}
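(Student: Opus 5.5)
The plan is to argue directly from the definition of shattering, treating each of the three parts the same way. Throughout, the key observation is that the trace of each derived class on a point set $S$ is determined by the trace of $\mathcal{A}$ on $S$. Once that is in hand, shattering by the derived class forces shattering by $\mathcal{A}$.

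For (a), suppose $S\subseteq\mathcal{M}$ is shattered by $\mathcal{A}\cap_* R$. First, $S\subseteq R$: if some $x\in S\setminus R$ existed, no set $A\cap R$ could contain $x$, so the subset $\{x\}$ of $S$ could not be picked out. Hence $|S|\le|R|$. Next, for any $U\subseteq S$ there is $A\in\mathcal{A}$ with $S\cap A\cap R=U$. Since $S\subseteq R$, this reads $S\cap A=U$, so $\mathcal{A}$ shatters $S$ and $|S|\le\mathrm{VCD}(\mathcal{A})$. Taking the supremum over shattered $S$ gives the bound $\min\{\mathrm{VCD}(\mathcal{A}),|R|\}$.

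For (b), write $A\setminus R=A\cap R^{\mathrm{c}}$, so $\mathcal{A}\setminus_* R=\mathcal{A}\cap_* R^{\mathrm{c}}$. Applying (a) with $R^{\mathrm{c}}$ in place of $R$ gives the bound $\min\{\mathrm{VCD}(\mathcal{A}),|R^{\mathrm{c}}|\}$ immediately.

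For (c), write $R\setminus A=A^{\mathrm{c}}\cap R$, so $R\setminus_*\mathcal{A}=\mathcal{B}\cap_* R$ with $\mathcal{B}=\{A^{\mathrm{c}}:A\in\mathcal{A}\}$. By (a) applied to $\mathcal{B}$, the VC dimension is at most $\min\{\mathrm{VCD}(\mathcal{B}),|R|\}$. Lemma~\ref{Lemma:vcdOperationComplement} gives $\mathrm{VCD}(\mathcal{B})=\mathrm{VCD}(\mathcal{A})$, which yields the claim.

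There is no real obstacle here. The only care needed is the containment $S\subseteq R$ in (a), which is exactly what turns "shattered by the traces" into "shattered by $\mathcal{A}$". One should also note that when $|R|$ is infinite the bound trivially reduces to $\mathrm{VCD}(\mathcal{A})$, and that is the form used in the proof of Theorem~\ref{thm:consistency}.
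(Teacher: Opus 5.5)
Your proposal is correct and follows the paper's proof essentially step for step. Part (a) shows that any set shattered by $\mathcal{A}\cap_* R$ lies in $R$ and is therefore shattered by $\mathcal{A}$; (b) uses $\mathcal{A}\setminus_* R=\mathcal{A}\cap_* R^{\mathrm{c}}$; and (c) uses complements together with Lemma~\ref{Lemma:vcdOperationComplement}. The only minor differences are that you force $S\subseteq R$ using singletons $\{x\}$ where the paper uses the subset $S\setminus R$, and you read off $|S|\le|R|$ directly from this containment, which is slightly cleaner than the paper's cardinality remark.
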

\begin{proof}
	To show result (a), let $S\subseteq\mathcal{M}$ be a set shattered by $\mathcal{A} \cap_{*}R$. Since $S\setminus R$ is a subset of $S$, there exists a set $A_0\in\mathcal{A}$ such that $S\setminus R=S\cap (R\cap A_0)$. This implies that $S\setminus R=\emptyset$. Therefore, we have $S\subseteq R$ and $S\cap R=S$, and $2^S=\{S\cap (R\cap A):A\in\mathcal{A}\}=\{S\cap A:A\in\mathcal{A}\}$, which further implies that $\mathcal{A}$ shatters $S$ as well, and $\mathrm{VCD}(\mathcal{A}\cap_* R)\leq\mathrm{VCD}(\mathcal{A})$. For any $A\in\mathcal{A}$, we have $|R\cap A|\leq |R|$, hence $\mathcal{A}\cap_{*}R$ cannot shatter any set whose cardinality is larger than $|R|$.
	
	Noting that $\mathcal{A}\setminus_* R = \mathcal{A}\cap_* R^{\mathrm{c}}$, (b) follows immediately from (a).
	
	To demonstrate (c), consider $\mathcal{B}:=\{A^{\mathrm{c}}:A\in\mathcal{A}\}$, we have $\mathrm{VCD}(\mathcal{B})=\mathrm{VCD}(\mathcal{A})$ by Lemma~\ref{Lemma:vcdOperationComplement}. And (c) follows from (a) by noticing that $R\setminus_{*}\mathcal{A}=\mathcal{B}\cap_{*}R$.
\end{proof}
\begin{lemma}\label{Lemma:vcdOperationSymDiff}	Let $\Omega$ be a generic set and $\mathcal{A}$ denote a family of subsets from $\Omega$. Let $R\subseteq\Omega$. Then we have $\mathrm{VCD}(\{A\triangle R:A\in\mathcal{A}\})=\mathrm{VCD}(\mathcal{A})$.
\end{lemma}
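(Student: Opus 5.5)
The statement to prove is that the class $\{A\triangle R : A\in\mathcal{A}\}$ has the same VC dimension as $\mathcal{A}$. The plan is to show that a finite set $S\subseteq\Omega$ is shattered by $\mathcal{A}_R:=\{A\triangle R:A\in\mathcal{A}\}$ if and only if it is shattered by $\mathcal{A}$. Taking the supremum of $|S|$ over shattered sets then gives equality of the VC dimensions, including the case where both are infinite. The whole argument rests on a single observation: intersecting with $S$ commutes with symmetric difference, so
\[
S\cap (A\triangle R) = (S\cap A)\triangle (S\cap R).
\]

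First, fix a finite $S$ and write $R_S:=S\cap R$. Define the map $\phi_S:2^S\to 2^S$ by $\phi_S(U):=U\triangle R_S$. It is an involution, since $(U\triangle R_S)\triangle R_S=U$, and hence a bijection of $2^S$. By the displayed identity, the trace of $\mathcal{A}_R$ on $S$ is the image of the trace of $\mathcal{A}$ under $\phi_S$:
\[
\{S\cap B: B\in\mathcal{A}_R\}=\phi_S\bigl(\{S\cap A:A\in\mathcal{A}\}\bigr).
\]

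Second, because $\phi_S$ is a bijection of $2^S$, one trace equals all of $2^S$ exactly when the other does. So $S$ is shattered by $\mathcal{A}_R$ if and only if it is shattered by $\mathcal{A}$.

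Third, the reverse direction needs no separate treatment. Since $(A\triangle R)\triangle R=A$, we have $\mathcal{A}=(\mathcal{A}_R)_R$, so the argument is symmetric. Hence the two families shatter exactly the same finite sets, and $\mathrm{VCD}(\mathcal{A}_R)=\mathrm{VCD}(\mathcal{A})$.

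No step is a real obstacle. The only point to check is the set identity $S\cap(A\triangle R)=(S\cap A)\triangle(S\cap R)$, which follows directly from distributivity of intersection over symmetric difference. This mirrors the complement argument in Lemma~\ref{Lemma:vcdOperationComplement}, which is the special case $R=\Omega$.
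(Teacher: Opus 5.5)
Your proof is correct, and it takes a different and more complete route than the paper. The paper uses only the involution $(A\triangle R)\triangle R=A$. It writes $\mathrm{VCD}(\{A\triangle R:A\in\mathcal{A}\})=\mathrm{VCD}(\mathcal{A})+q$ and asserts that applying $\triangle R$ a second time gives $\mathrm{VCD}(\mathcal{A})+2q$. It then concludes $q=0$ because the twice-transformed family is $\mathcal{A}$ itself. That middle step tacitly assumes the VC-dimension shift caused by $\triangle R$ is the same $q$ for the new family $\mathcal{A}_R$ as for $\mathcal{A}$. The involution alone only gives $q+q'=0$ for the two shifts, so an inequality valid for every family is still missing. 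Your trace identity $S\cap(A\triangle R)=(S\cap A)\triangle(S\cap R)$, together with the bijection $\phi_S$ of $2^S$, supplies exactly this. In fact it gives the stronger statement that $\mathcal{A}$ and $\mathcal{A}_R$ shatter the same finite sets, so your third step's appeal to the involution is not needed. Your argument also covers infinite VC dimension, which the paper's integer bookkeeping with $q$ implicitly excludes.
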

\begin{proof}
	Suppose $\mathrm{VCD}(\{A\triangle R:A\in\mathcal{A}\})=\mathrm{VCD}(\mathcal{A})+q$ for some $q\in\mathbb{Z}$. Then we have $\mathrm{VCD}(\{(A\triangle R)\triangle R:A\in\mathcal{A}\})=\mathrm{VCD}(\mathcal{A})+2q$.
	Since for $A\in\mathcal{A}$,
	\begin{align*}
		(A\triangle R)\triangle R &=((A\cup R)\setminus(A\cap R))\triangle R \\
		&= \{((A\cup R)\setminus(A\cap R))\cup R\} \setminus\{((A\cup R)\setminus(A\cap R))\cap R\}\\
		&=(A\cup R)\setminus(R\setminus A)=A,
	\end{align*}
    we have
	\begin{equation*}
		\mathrm{VCD}(\{(A\triangle R)\triangle R:A\in\mathcal{A}\})=\mathrm{VCD}(\mathcal{A}).
	\end{equation*}
	Therefore, we have $q=0$, and the result follows.
\end{proof}
\begin{lemma}\label{lemma:vcdim_operation_union}
	Suppose $\mathcal{A}_1,\mathcal{A}_2$ are families of subsets of a set $\Omega$ such that $\text{VCD}(\mathcal{A}_1)=k_1$ and $\text{VCD}(\mathcal{A}_2)=k_2$. Assume $|\Omega|\geq k_1+k_2+2$. Then $\text{VCD}(\mathcal{A}_1\cup\mathcal{A}_2)\leq k_1+k_2+1$.
\end{lemma}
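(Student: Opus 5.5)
We prove the bound by contradiction. Suppose $\mathcal{A}_1\cup\mathcal{A}_2$ shatters a set $S\subseteq\Omega$ with $|S| = k_1+k_2+2$; such a set can exist only because $|\Omega|\geq k_1+k_2+2$. We will show this forces one of the two classes to shatter a set larger than its VC dimension.

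The plan is a counting argument on the traces. For $i=1,2$, let $\mathcal{A}_i|_S:=\{S\cap A: A\in\mathcal{A}_i\}$. Shattering means $\mathcal{A}_1|_S\cup\mathcal{A}_2|_S = 2^S$, so
\[
2^{k_1+k_2+2}\leq |\mathcal{A}_1|_S| + |\mathcal{A}_2|_S|.
\]
The VC dimension of $\mathcal{A}_i|_S$, viewed as a class on $S$, is at most $k_i$, since any set it shatters is shattered by $\mathcal{A}_i$. Writing $m:=k_1+k_2+2$, the Sauer--Shelah lemma (as already used in the proof of Proposition~\ref{prop:bound_sum_of_e_generic_set}) gives
\[
|\mathcal{A}_i|_S|\leq \sum_{j=0}^{k_i}\binom{m}{j}.
\]

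The key step is the binomial identity. Using $\binom{m}{j}=\binom{m}{m-j}$, the second sum becomes
\[
\sum_{j=0}^{k_2}\binom{m}{j}=\sum_{j=m-k_2}^{m}\binom{m}{j}=\sum_{j=k_1+2}^{m}\binom{m}{j}.
\]
Adding the two sums covers every index $j\in\{0,\dots,m\}$ except $j=k_1+1$, so
\[
|\mathcal{A}_1|_S| + |\mathcal{A}_2|_S| \leq 2^m - \binom{m}{k_1+1} < 2^m.
\]
This contradicts $2^m\leq |\mathcal{A}_1|_S|+|\mathcal{A}_2|_S|$. Hence no set of size $k_1+k_2+2$ is shattered, and $\mathrm{VCD}(\mathcal{A}_1\cup\mathcal{A}_2)\leq k_1+k_2+1$.

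The only delicate point is the index bookkeeping in the complementary sum: the missing index $k_1+1$ must be confirmed to lie in $\{0,\dots,m\}$, which holds since $k_1+1\leq m$. The edge case where some $\mathcal{A}_i$ is empty needs no separate treatment, because its trace then has size $0$ and the bound still holds.
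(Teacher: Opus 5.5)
Your proof is correct, but it takes a different route from the paper's. You argue by counting. On a set $S$ with $|S|=m=k_1+k_2+2$, Sauer--Shelah bounds the traces of $\mathcal{A}_1$ and $\mathcal{A}_2$ by $\sum_{j\le k_1}\binom{m}{j}$ and $\sum_{j\le k_2}\binom{m}{j}$. The symmetry $\binom{m}{j}=\binom{m}{m-j}$ then shows the two sums together omit exactly the term $\binom{m}{k_1+1}\geq 1$, so the union cannot realise all $2^m$ patterns.

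The paper uses no counting. It splits a candidate set of size $k_1+k_2+2$ into disjoint pieces $S_1,S_2$ with $|S_1|=k_1+1$ and $|S_2|=k_2+1$. It then picks, for each $i$, a pattern $U_i\subseteq S_i$ that $\mathcal{A}_i$ cannot cut out; this exists because $|S_i|>k_i$. Finally it observes that no $B\in\mathcal{A}_1\cup\mathcal{A}_2$ satisfies $B\cap(S_1\cup S_2)=U_1\cup U_2$, since $B\in\mathcal{A}_i$ would force $B\cap S_i=U_i$.

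What each approach buys:
\begin{itemize}
\item The paper's argument is elementary and constructive. It exhibits an explicit unrealisable labelling, needs no Sauer--Shelah, and extends verbatim to $N$ classes with the bound $\sum_i(k_i+1)-1$.
\item Your argument relies on Sauer--Shelah, which the paper already uses, so this costs nothing. In return it gives more quantitative information: at least $\binom{m}{k_1+1}$ patterns are missing rather than one.
\item For many classes with a common VC dimension, your counting gives a bound that grows only logarithmically in the number of classes, whereas the splitting argument grows linearly.
\end{itemize}

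Two minor points. Like the paper, you only rule out shattered sets of size exactly $k_1+k_2+2$, which suffices because shattering passes to subsets; this is worth stating explicitly. Also, neither argument actually needs the hypothesis $|\Omega|\geq k_1+k_2+2$, since otherwise the bound holds trivially.
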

\begin{proof}
	Let $S_1\subseteq\Omega$ such that $|S_1|=k_1+1$. We can find a set $S_2\subseteq\Omega\setminus S_1$ such that $|S_2|= k_2+1$. We have $\{S_i\cap A:A\in\mathcal{A}_i\}\neq2^{S_i},i=1,2$. Let $U_i\in 2^{S_i}\setminus\{S_i\cap A:A\in\mathcal{A}_i\},i=1,2$, and denote $U=U_1\cup U_2$. We have $U\subset S_1\cup S_2$ and $|S_1\cup S_2|=k_1+k_2+2$. 

    We show that $S_1\cup S_2$ cannot be shattered by $\mathcal{A}_1\cup\mathcal{A}_2$, and the proof is complete since the choices of $S_1$ and $S_2$ are arbitrary.
    To this end, suppose there exists $B\in\mathcal{A}_1\cup\mathcal{A}_2$ such that $U=B\cap (S_1\cup S_2)$, then it must hold that $U_i=B\cap S_i,i=1,2$, since $S_1\cap S_2=\emptyset$. As $B$ is taken either from $\mathcal{A}_1$ or $\mathcal{A}_2$, we have $U_i\in S\cap A_i,i=1,2$, which contradicts with the definition of $U_i,i=1,2$. Therefore, $U$ is a set that cannot be picked out from $S_1\cup S_2$ via set intersection by any set from $\mathcal{A}_1\cup\mathcal{A}_2$. 
\end{proof}
\begin{lemma}\label{lemma:vcdim_operation_union_complement}Suppose $\mathcal{A}$ is a family of subsets of a set $\Omega$ with $\text{VCD}(\mathcal{A})=k$. Assume $\Omega\setminus \cup_{A\in\mathcal{A}}A\neq \emptyset$. Let $\mathcal{B}=\{A^{\text{c}}:A\in\mathcal{A}\}$. Then $\text{VCD}(\mathcal{A}\cup\mathcal{B})\geq k+1$.
\end{lemma}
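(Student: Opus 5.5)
The plan is to exhibit one set of size $k+1$ that $\mathcal{A}\cup\mathcal{B}$ shatters. Since $\mathrm{VCD}(\mathcal{A})=k$, fix $S\subseteq\Omega$ with $|S|=k$ shattered by $\mathcal{A}$. By assumption we can pick a point $x_0\in\Omega\setminus\bigcup_{A\in\mathcal{A}}A$, so $x_0\notin A$ for every $A\in\mathcal{A}$ and hence $x_0\in A^{\mathrm{c}}$ for every $A^{\mathrm{c}}\in\mathcal{B}$. In particular $x_0\notin S$: every point of $S$ lies in some member of $\mathcal{A}$, since $S$ itself is picked out by $\mathcal{A}$. (If $k=0$, then $S=\emptyset$ and nothing is needed.) Set $S':=S\cup\{x_0\}$, so $|S'|=k+1$.

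We check that every $U\subseteq S'$ is picked out by $\mathcal{A}\cup\mathcal{B}$, splitting on whether $x_0\in U$.

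\emph{Case $x_0\notin U$.} Then $U\subseteq S$, and shattering gives $A\in\mathcal{A}$ with $A\cap S=U$. Since $x_0\notin A$, we get $A\cap S'=U$.

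\emph{Case $x_0\in U$.} Write $U=U_0\cup\{x_0\}$ with $U_0\subseteq S$. Apply shattering of $S$ by $\mathcal{A}$ to the complementary set $S\setminus U_0$: there is $A\in\mathcal{A}$ with $A\cap S=S\setminus U_0$. Then $A^{\mathrm{c}}\cap S=U_0$, and $x_0\in A^{\mathrm{c}}$, so $A^{\mathrm{c}}\cap S'=U$ with $A^{\mathrm{c}}\in\mathcal{B}$. This is the same complementation trick as in Lemma~\ref{Lemma:vcdOperationComplement}.

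Hence $S'$ is shattered and $\mathrm{VCD}(\mathcal{A}\cup\mathcal{B})\geq k+1$. The only delicate point is ensuring $x_0\notin S$, so that $|S'|=k+1$ genuinely increases the size, and this follows from $x_0$ lying outside every member of $\mathcal{A}$. Everything else is direct.
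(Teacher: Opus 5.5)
Your proof is correct and follows essentially the same route as the paper: take a $k$-point set shattered by $\mathcal{A}$, add a point lying outside the members of $\mathcal{A}$, pick out subsets not containing it via $\mathcal{A}$ and subsets containing it via complements in $\mathcal{B}$. Your version is slightly cleaner. You take $x_0$ outside all of $\bigcup_{A\in\mathcal{A}}A$ directly from the hypothesis and explicitly check $x_0\notin S$. The paper instead chooses the point outside the union of the selected witnesses $A_U$ only, with an unnecessary remark about non-unique witnesses.
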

\begin{proof}
	Let $S_0\subseteq\Omega$ be a set such that $|S_0|=k$ and $S_0$ is shattered by $\mathcal{A}$. Then $S_0$ is also shattered by $\mathcal{B}$ by Lemma \ref{Lemma:vcdOperationComplement}. Let $x_1\in\Omega\setminus S_0$ and denote $S_1:=S_0\cup\{x_1\}$.
    
    We proceed to show that $S_1$ is shattered by $\mathcal{A}\cup\mathcal{B}$ to complete the proof. It is equivalent to show that all members in the power set of $S_1$ can be picked out from $S_1$ by $\mathcal{A}\cup\mathcal{B}$, where we say a set $C_1$ can be picked out from a set $C_2$ by a family of sets $\mathcal{C}$ if there exists a set $C_3\in\mathcal{C}$ so that $C_1=C_2\cap C_3$. The power set of $S_1$ can be decomposed as $2^{S_1}=2^{S_0}\cup \mathcal{I}$ where each set in $\mathcal{I}$ is a union of $\{x_1\}$ and a member in $2^{S_0}$. For any $U\in2^{S_0}$, there exist $A_U\in \mathcal{A}$ such that $A_U\cap S_0=U$, and by specifically selecting
	$x_1 \in\Omega\setminus \bigcup_{U'\in 2^{S_0}}A_{U'}$, we know $U$ can also be picked out from $S_1$ by $A_U$ since $x_1\not\in A_{U}$. When defining the set $\bigcup_{U'\in 2^{S_0}}A_{U'}$, in case $A_{U'}$ is not unique for some $U'$ in $2^{S_{0}}$, we may choose $A_{U'}$ to be the one that has the largest intersection with the union of chosen ones. Members in $\mathcal{I}$ can also be picked out by $\mathcal{A}\cup\mathcal{B}$, since $S_1\setminus I\in 2^{S_0}$ for $I\in\mathcal{I}$, and for any $C\subseteq S_1$, if $S_1\setminus C$ can be picked out from $S_1$ by $\mathcal{A}\cup\mathcal{B}$ then $C$ can also be picked out from $S_1$ by $\mathcal{A}\cup\mathcal{B}$.
\end{proof}
\begin{lemma}
	\label{lemma:vcd_disk_lb}
	Let $\mathcal{A}_{d-1}$ denote the collection of discs on $\mathbb{S}^{d-1}$. We have $\text{VCD}(\mathcal{A}_{d-1})\geq d+1$.
\end{lemma}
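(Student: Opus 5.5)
We need to show discs on $\mathbb{S}^{d-1}$ (sets $\{x: x^\top\alpha\ge\beta\}$ with $\beta\in[0,1]$, i.e. caps of angular radius at most $\pi/2$) shatter some set of $d+1$ points. The plan is to reduce to the classical fact that half-spaces in $\mathbb{R}^{d-1}$ shatter $d$ points in general position, and then use the fact that a cap is the intersection of the sphere with an affine half-space of $\mathbb{R}^d$. Affine half-spaces in $\mathbb{R}^d$ have VC dimension $d+1$, realised by the vertices of a simplex. So the natural strategy is to place $d+1$ points on the sphere forming a (small) simplex in general position, and show that every subset can be cut out by a half-space $\{x^\top\alpha\ge\beta\}$ with $\|\alpha\|=1$ and $\beta\in[0,1]$.

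\textbf{Step 1: choose the points.} Concentrate everything near the north pole $e_d$. Fix a regular $(d-1)$-simplex with vertices $v_1,\dots,v_d$ in the tangent hyperplane $e_d^\perp\cong\mathbb{R}^{d-1}$, centred at $0$, and let $v_0=0$. For small $t>0$ set $x_i=(t v_i+\sqrt{1-t^2\|v_i\|^2}\,e_d)$, so $x_0=e_d$ and $x_1,\dots,x_d$ lie on a common circle of latitude slightly below the pole. These $d+1$ points are affinely independent in $\mathbb{R}^d$ (the first $d$ ones after $x_0$ span a hyperplane not containing $x_0$), so by the classical Radon/half-space argument every subset $U$ is realised by some affine half-space $\{x: x^\top a\ge b\}$, $a\ne0$; normalising gives $\alpha=a/\|a\|$, $\beta=b/\|a\|$.

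\textbf{Step 2: fix the range of $\beta$.} This is the main obstacle: an arbitrary separating half-space might have $\beta<0$ or $\beta>1$. $\beta>1$ only picks out the empty set, which is realised anyway by e.g. $\alpha=-e_d$, $\beta=0$ when $t$ is small (all $x_i$ have $x_i^\top e_d>0$ strictly, so $x_i^\top(-e_d)<0$). For $\beta<0$, note that subset $U$ is cut out by $\{x^\top\alpha\ge\beta\}$ iff its complement $S\setminus U$ is cut out (up to the boundary, which can be avoided by perturbing $\beta$ since the points are finitely many) by $\{x^\top(-\alpha)\ge -\beta+\eta\}$ for small $\eta>0$, which has threshold in $(0,1]$. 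That handles complements but not $U$ itself, so instead I would use the affine freedom directly: since all $x_i$ lie in the small cap around $e_d$, one may replace $a$ by $a+c\,e_d$ and $b$ by $b+c\,x^\top e_d$-type adjustments. Concretely, on the region near $e_d$, $x^\top e_d\approx 1$ up to $O(t^2)$, so adding a large multiple $c e_d$ to $a$ and $c$ to $b$ changes the linear functional on the points by $c(x_i^\top e_d-1)=O(ct^2)$; choosing the separating half-space first in the tangent coordinates (with margin of order $t$) and then $c$ of order $1/t$ keeps the labelling while making $\beta=(b+c)/\|a+ce_d\|$ lie in $(0,1)$. I would verify this by writing the tangential separator with $\|a\|=1$, margin $\gtrsim t$, and checking $\beta\to1^-$ as $c\to\infty$ while the error $O(ct^2)$ stays below the margin for $c\approx 1/\sqrt t$.

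\textbf{Step 3: conclude.} Every subset of $\{x_0,\dots,x_d\}$ is thus picked out by some $A_{\alpha,\beta}\in\mathcal{A}_{d-1}$ with $\beta\in[0,1]$, so $\mathrm{VCD}(\mathcal{A}_{d-1})\ge d+1$. (For $d=2$ one can sanity-check with three points on a short arc of $\mathbb{S}^1$, where the construction reduces to intervals picking out arbitrary subsets — actually the middle-excluded pattern requires the complement-type arc, which the margin argument in Step 2 must handle; if it fails I would instead take points not all in a small cap, e.g. spread vertices of a simplex inscribed in the sphere, and use caps of radius up to $\pi/2$ directly.)
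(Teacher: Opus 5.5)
Your Step 1 is correct. If a ``disc'' may be any cap $\{x:x^\top\alpha\ge\beta\}$ (radius up to $\pi$), Step 1 is already the whole proof: $d+1$ affinely independent points are shattered by affine half-spaces of $\mathbb{R}^d$, and each half-space meets $\mathbb{S}^{d-1}$ in a cap. That is the paper's argument. It shatters $e_1,\dots,e_{d+1}$ (a regular simplex inscribed in a $(d-1)$-sphere) by half-spaces and intersects with the sphere, tacitly allowing caps of any radius.

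The gap is Step 2, which you need only because you impose $\beta\in[0,1]$, and there it is false. Put $h=\sqrt{1-t^2\|v_1\|^2}<1$. Since $\sum_{i=1}^d v_i=0$, we have $\frac1d\sum_{i=1}^d x_i=h\,x_0$. Hence $x_i^\top\alpha\ge\beta\ge0$ for $i=1,\dots,d$ implies $x_0^\top\alpha\ge\beta/h\ge\beta$, so the labelling $U=\{x_1,\dots,x_d\}$ is never realised, for any $t$. In Step 2's terms, the tangential projections $0,tv_1,\dots,tv_d$ have $0$ in the convex hull of the others, so there is no tangential separator with margin to start from. Every affine separator for this labelling has $\beta<0$ (e.g.\ $\{x: x^\top e_d\le h'\}$ with $h<h'<1$), and no choice of $c$ changes that.

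The obstruction is structural, not a matter of bookkeeping. Suppose $\sum_{i\in P}c_ix_i=\sum_{i\in N}|c_i|x_i$ is a linear dependence with $P,N$ both nonempty. Every nonzero dependence among points of an open hemisphere $\{x^\top\gamma>0\}$ has this form. Picking out $U=P$ gives $\beta\sum_P c_i\le\sum_Pc_ix_i^\top\alpha=\sum_N|c_i|x_i^\top\alpha<\beta\sum_N|c_i|$, which forces $\beta>0$ and $\sum_Pc_i<\sum_N|c_i|$. Picking out $U=N$ forces the reverse inequality. So no configuration concentrated near a pole can be shattered by caps of radius at most $\pi/2$.

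Your fallback fails as well. For a regular simplex inscribed in $\mathbb{S}^{d-1}$ one has $\sum_i x_i=0$. So $x_i^\top\alpha\ge\beta\ge0$ for all $i$ forces $x_i^\top\alpha=0$ for all $i$, which is impossible since the $x_i$ span $\mathbb{R}^d$. The full set is therefore not a cap of radius at most $\pi/2$; the paper's configuration has the same defect under that restriction.

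There are two ways to repair the proof.
\begin{itemize}
\item Adopt the all-caps convention that the paper's argument uses, and drop Step 2.
\item If you want $\beta\in[0,1]$, use a degenerate configuration whose linear dependence is one-signed, such as $S=\{e_1,-e_1,e_2,\dots,e_d\}$. For $i\ge2$ let $s_i=1$ if $e_i\in U$ and $s_i=-1$ otherwise, and let $\sigma\in\{1,-1\}$.
\begin{itemize}
\item If $e_1,-e_1\in U$: take $\beta=0$ and $\alpha\propto\sum_{i\ge2}s_ie_i$.
\item If $U$ contains $\sigma e_1$ but not $-\sigma e_1$: take $\beta=0$ and $\alpha\propto\sigma e_1+\sum_{i\ge2}s_ie_i$.
\item If $U\neq\emptyset$ contains neither of $\pm e_1$: take $\alpha\propto\sum_{i\ge2}s_ie_i$ and a small $\beta>0$.
\item If $U=\emptyset$: use the cap $\{\alpha\}$ with $\beta=1$ and $\alpha\notin S$.
\end{itemize}
\end{itemize}
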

\begin{proof}
	Let $e:=\{\bm{e}_{1},\dots,\bm{e}_{d+1}\}\in\mathbb{R}^{d+1}$ be a standard basis of $\mathbb{R}^{d+1}$. We know $e$ can be shattered by the family of hyperplanes $\mathbb{H}^{d}\subseteq 2^{\mathbb{R}^{d+1}}$. Let $U\subset e$, then there is some hyperplane $H\in\mathbb{H}^{d}$ whose positive halfspace separates $U$ from $e\setminus U$. In $\mathbb{R}^{d+1}$, $e$ determines a circle $S_e$, and the embedding of $S_e$ into $\mathbb{R}^{d}$ is equivalent to $\mathbb{S}^{d-1}$. Since the positive halfspace of $H$ determines a disc $A_H\in\mathcal{A}_{d-1}$ so that $A_H$ picks out $U$ from $\mathbb{S}^{d-1}$, and $U$ is arbitrarily chosen, we know there exists a set of cardinality $d+1$ that can be shattered by $\mathcal{A}_{d-1}$. Therefore, $\text{VCD}(\mathcal{A}_{d-1})\geq d+1$.
\end{proof}
\begin{lemma}\label{lemma:vcdim_discs}
	Let $\mathcal{A}_{d-1}$ denote the family of discs on $\mathbb{S}^{d-1}$. Then $\text{VCD}(\mathcal{A}_{d-1})= d+1$.
\end{lemma}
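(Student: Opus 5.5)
The lower bound $\mathrm{VCD}(\mathcal{A}_{d-1})\geq d+1$ is Lemma~\ref{lemma:vcd_disk_lb}, so the plan is only to prove the matching upper bound $\mathrm{VCD}(\mathcal{A}_{d-1})\leq d+1$. To do this, reduce discs on the sphere to affine halfspaces in $\mathbb{R}^d$. Every disc $A_{\alpha,\beta}=\{x\in\mathbb{S}^{d-1}:x^\top\alpha\geq\beta\}$ is the trace on $\mathbb{S}^{d-1}$ of the closed affine halfspace $H_{\alpha,\beta}:=\{x\in\mathbb{R}^d:x^\top\alpha\geq\beta\}$. Hence, for any finite $S\subseteq\mathbb{S}^{d-1}$, each subset picked out by $\mathcal{A}_{d-1}$ is $S\cap H_{\alpha,\beta}$. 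Consequently, any set shattered by $\mathcal{A}_{d-1}$ is shattered by the class $\mathcal{H}_d$ of closed affine halfspaces in $\mathbb{R}^d$, which gives $\mathrm{VCD}(\mathcal{A}_{d-1})\leq \mathrm{VCD}(\mathcal{H}_d)$. The parameter restriction $\beta\in[0,1]$ only shrinks the class, so it causes no problem.

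The upper bound for $\mathcal{H}_d$ then follows from Radon's theorem. Any $d+2$ points in $\mathbb{R}^d$ are affinely dependent, so they admit a partition $S=S_1\sqcup S_2$ with $\mathrm{conv}(S_1)\cap\mathrm{conv}(S_2)\neq\emptyset$. Suppose, for contradiction, that some halfspace $H$ picks out exactly $S_1$. Since $H$ is convex, $\mathrm{conv}(S_1)\subseteq H$. The complement $H^{\mathrm{c}}$ is an open halfspace, also convex, and contains $S_2$, so $\mathrm{conv}(S_2)\subseteq H^{\mathrm{c}}$. These two hulls would then be disjoint, contradicting the Radon partition. Therefore no set of $d+2$ points is shattered, so $\mathrm{VCD}(\mathcal{H}_d)\leq d+1$.

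As an alternative to Radon, I could use Dudley's vector-space argument: the functions $x\mapsto x^\top\alpha-\beta$ span a $(d+1)$-dimensional vector space, so their positivity sets have VC dimension at most $d+1$. Either route works; I would present Radon's argument because it is short.

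Combining the two steps with Lemma~\ref{lemma:vcd_disk_lb} yields $\mathrm{VCD}(\mathcal{A}_{d-1})=d+1$. The only delicate point is checking that the discs used in the lower-bound lemma really are valid members of $\mathcal{A}_{d-1}$ under the paper's parametrisation with $\beta\in[0,1]$; since that lemma is taken as given, this check is not part of the present proof.
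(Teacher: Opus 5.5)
Your proof is correct, but it takes a different route from the paper for the upper bound.

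\textbf{The paper's route.} The paper works one dimension lower. It uses a stereographic-type projection $\Pi_0$ from $\mathbb{S}^{d-1}$ minus a pole onto $\mathbb{R}^{d-1}$, so that discs become balls or complements of balls. It then applies Dudley's vector-space bound to the $(d+1)$-dimensional span of $\|y\|^2,y_1,\dots,y_{d-1},1$. For the lower bound it cites Lemma~\ref{lemma:vcd_disk_lb}, or alternatively uses $\mathrm{VCD}$ of balls in $\mathbb{R}^{d-1}$ plus one via Lemma~\ref{lemma:vcdim_operation_union_complement}.

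\textbf{Your route.} You use the extrinsic description $A_{\alpha,\beta}=\mathbb{S}^{d-1}\cap H_{\alpha,\beta}$ and the classical fact $\mathrm{VCD}(\mathcal{H}_d)=d+1$, proved via Radon.

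\textbf{Comparison.} The two arguments are really the same computation. Under inverse stereographic projection, the sign of $x^\top\alpha-\beta$ at the preimage of $y$ equals the sign of $a\|y\|^2+b^\top y+c$ for suitable $a,b,c$. Your version is shorter and sidesteps details the paper glosses over:
\begin{itemize}
\item $\Pi_0$ is only specified through the scalar $\tan(\angle(-x_0,x))$.
\item The pole has to be chosen off the finite set being shattered.
\item Caps whose boundary passes through the pole map to halfspaces. These are not in the paper's class of balls and ball complements, although Dudley's bound still covers them.
\end{itemize}
Your argument also gives the upper bound for caps of every radius, i.e.\ any $\beta\in[-1,1]$. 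This is relevant because the outer discs $B_j$ in Algorithm~\ref{algo:multiple_cp_detection} have radii up to $\pi$.

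\textbf{The point you flagged is harmless.} The set $\{e_1,\dots,e_d,-e_1\}$ is shattered by caps with $\beta\in[0,1]$:
\begin{itemize}
\item For nonempty $U$ not containing both $\pm e_1$, the points of $U$ are orthonormal. Take $\alpha=|U|^{-1/2}\sum_{u\in U}u$ and $\beta=|U|^{-1/2}$; the excluded points then have inner product at most $0$.
\item If $\pm e_1\in U$, take $\beta=0$ and $\alpha\perp e_1$, with $\alpha_j>0$ for $e_j\in U$ and $\alpha_j<0$ for $e_j\notin U$, $j\geq 2$.
\item The empty set is picked out by a cap with $\beta=1$ at a point outside the set.
\end{itemize}
So the lemma indeed holds under the paper's parametrisation.
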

\begin{proof}  
	Let $x_0\in\mathbb{S}^{d-1}$, and let $\Pi_0(\bm{x}):\mathbb{S}^{d-1}\rightarrow\mathbb{R}^{d-1}$ be a mapping defined by setting $\Pi_0(-x_0):=\bm{0}$, and $\Pi_0(x):=\tan(\angle(-x_0,x))$ where for $x\in\mathbb{S}^{d-1}$, $\angle(-x_0,x)$ is the angle formed between $-x_0$ and $x$. We leave $\Pi_0(x_0)$ undefined, and it will not affect the result.
    
    We first show that $\text{VCD}(\mathcal{A}_{d-1})\geq d+1$. Let $\mathbb{B}^{d-1}:=\{B_{\bm{a},\bm{b},\bm{c}}:\bm{a},\bm{b},\bm{c}\in\mathbb{R}^{d-1}\}$ denote the collection of balls where $B_{\bm{a},\bm{b},\bm{c}}:=\{\bm{x}\in\mathbb{R}^{d-1}:||\bm{a}^{\text{T}}\bm{x}-\bm{b}||^2\leq \bm{c}\}$ denotes a ball for given $\bm{a},\bm{b},\bm{c}\in\mathbb{R}^{d-1}$, and let $\mathcal{B}_{d-1}:=\{B\subset\mathbb{R}^{d-1}:B\in\mathbb{B}^{d-1} \text{ or }B^{\text{c}}\in\mathbb{B}^{d-1}\}$ denote the collection of both balls and complements of balls. We say a set $C_1$ can be picked out from a set $C_2$ by a family of sets $\mathcal{C}$ if there exists a set $C_3\in\mathcal{C}$ so that $C_1=C_2\cap C_3$. If $S\subseteq\mathbb{S}^{d-1}$ can be picked out by $A_{S}\in\mathcal{A}_{d-1}$ from $\mathbb{S}^{d-1}$, we know $\Pi_0(S)$ can be picked out from $\mathbb{R}^{d-1}$ by $\Pi_0(A_S)\in\mathcal{B}_{d-1}$. Reversely, if $S\subseteq\mathbb{R}^{d-1}$ can be picked out by $B_S\in\mathcal{B}_{d-1}$, then $\Pi_0^{-1}(S)$ can be picked out by $\Pi_0^{-1}(B_S)$. Hence $\text{VCD}(\mathcal{A}_{d-1})=\text{VCD}(\mathcal{B}_{d-1})$.	Let $\mathcal{F}:=\{f(\bm{x})=R_1||\bm{x}||^2+\sum_{i=1}^{d-1}a_ix_i +a_{d}:\bm{a}=(R_1,\dots,a_{d})\in\mathbb{R}^{d+1}\}$ be a family of functions. Let $B:=\{\bm{x}:||\bm{x}-\bm{a}||\leq b\}$ be a ball where $\bm{a}\in\mathbb{R}^{d-1},b\in\mathbb{R}$. Observe that $B$ corresponds to the sublevel set of the function $f(\bm{x}):=||\bm{x}-\bm{a}||-b$ at $0$, and $B^{c}$ corresponds to the sublevel set of the function $g(\bm{x})=-||\bm{x}-\bm{a}||+b$ at $0$. Since $B$ is arbitrarily chosen, any set shattered by $\mathcal{B}_{d-1}$ corresponds to a set shattered by $\mathcal{F}$. Hence $\text{VCD}(\mathcal{A}_{d-1})=\text{VCD}(\mathcal{B}_{d-1})\leq\text{VCD}(\mathcal{F})\leq d+1$, as $\mathcal{F}$ has dimension $d+1$ as a vector space.

We complete the proof by showing that $\text{VCD}(\mathcal{A}_{d-1})\geq d+1$. This follows directly from Lemma~\ref{lemma:vcd_disk_lb}, alternatively, we have $\text{VCD}(\mathcal{A}_{d-1})=\text{VCD}(\mathcal{B}_{d-1})\geq\text{VCD}(\mathbb{B}^{d-1})+1=d+1$ where the inequality follows from Lemma \ref{lemma:vcdim_operation_union_complement}.
\end{proof}

\begin{lemma}
	\label{lemma:param_est_lb}
Let $\mu$ be the Lebesgue measure on sphere $\mathbb{S}^{d-1}$. For arbitrary $\alpha, \alpha'\in\mathbb{S}^{d-1},\beta,\beta'\in[0,1]$, there exists a constant $C_{d,\beta}$, depending only on $d$ and $\beta$, such that
\[
	\|\alpha-\alpha^{\prime} \|+\left|\beta-\beta^{\prime}\right| \leq C_{d,\beta} \mu\left(A_{\alpha, \beta} \triangle A_{\alpha', \beta'}\right) .
\]
\end{lemma}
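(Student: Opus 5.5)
We assume $\beta<1$. If $\beta=1$, the cap $A_{\alpha,\beta}$ is a single point and has measure zero, so the inequality cannot hold with a finite constant; $C_{d,\beta}$ is understood to blow up as $\beta\to1$. Write $F(\alpha',\beta'):=\mu(A_{\alpha,\beta}\triangle A_{\alpha',\beta'})$ and $N(\alpha',\beta'):=\|\alpha-\alpha'\|+|\beta-\beta'|\le 3$. By rotation invariance of $\mu$ we may fix $\alpha=e_1$, so the constant can only depend on $d$ and $\beta$.

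The plan is a compactness argument combined with a local first-order expansion of $F$ around $(\alpha,\beta)$. We work on the compact parameter set $K=\mathbb{S}^{d-1}\times[0,1]$.

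\emph{Global step.} First we show that $F$ is continuous on $K$, by dominated convergence applied to indicators; the boundaries of caps are $\mu$-null. Next we show that $F(\alpha',\beta')=0$ only if $(\alpha',\beta')=(\alpha,\beta)$. Since $\beta<1$, the cap $A_{\alpha,\beta}$ has nonempty interior, and two caps that agree $\mu$-a.e. are equal as closed sets. Such a cap determines its centre and radius, which gives the claim. Consequently, for any $\eta>0$,
\[
m_\eta:=\inf\{F(\alpha',\beta'):(\alpha',\beta')\in K,\ N(\alpha',\beta')\ge\eta\}>0 .
\]
On this region the ratio satisfies $N/F\le 3/m_\eta$.

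\emph{Local step.} This is the main step: we show $F\ge c_{d,\beta}N$ whenever $N<\eta$, for suitable $\eta$ and $c_{d,\beta}>0$. We write $\alpha'=\alpha\cos t+v\sin t$ with $v\perp\alpha$ a unit vector. Then $\|\alpha'-\alpha\|\asymp t$, and the component of $\alpha'-\alpha$ along $\alpha$ is only $O(t^2)$, so it suffices to control $t$ and $\Delta\beta:=\beta'-\beta$.

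We parametrise points near $\partial A_{\alpha,\beta}$ by $x=\alpha\cos\phi+u\sin\phi$, where $u\in\mathbb{S}^{d-2}\subset\alpha^\perp$. Let $\rho=\arccos\beta\in(0,\pi/2]$ be the radius. Along each direction $u$, the boundary of $A_{\alpha',\beta'}$ sits at angle
\[
\phi'(u)=\rho+h(u),\qquad h(u)=\frac{\beta-\beta'+t\,\beta\cot\rho\,\langle v,u\rangle\cdot(\text{const})}{\sin\rho}+O(t^2+\Delta\beta^2).
\]
Concretely, $h(u)=(-\Delta\beta+t\cos\rho\,\langle v,u\rangle\cdot\sin\rho/\sin\rho)/\sin\rho$ to first order; the exact constants are unimportant. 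What matters is that $h$ is an affine function $a+\langle b,u\rangle$ with $|a|+\|b\|\asymp |\Delta\beta|+t$, where the implied constant depends on $\rho$ through $1/\sin\rho$, hence on $\beta$. In these polar coordinates,
\[
F=\int_{\mathbb{S}^{d-2}}\Bigl|\int_\rho^{\rho+h(u)}\sin^{d-2}\phi\,d\phi\Bigr|\,d\sigma(u)\ \ge\ c\,\sin^{d-2}\!\rho\int|h|\,d\sigma-O\bigl((t+|\Delta\beta|)^2\bigr).
\]
The map $(a,b)\mapsto\int_{\mathbb{S}^{d-2}}|a+\langle b,u\rangle|\,d\sigma(u)$ is a norm on $\mathbb{R}\times\mathbb{R}^{d-1}$. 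Indeed, an affine function vanishing $\sigma$-a.e. on the sphere is identically zero. Since all norms on a finite-dimensional space are equivalent, this norm is at least $c_d(|a|+\|b\|)$. Taking $\eta$ small enough absorbs the quadratic remainder, giving $F\ge c_{d,\beta}N$.

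For $d=2$ the inner sphere $\mathbb{S}^0=\{\pm1\}$ with counting measure, and the same norm argument applies. For $\beta=0$ (the hemisphere) we have $\sin\rho=1$ and nothing degenerates.

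Combining the two steps gives $C_{d,\beta}=\max\{3/m_\eta,\,1/c_{d,\beta}\}$. The main obstacle is making the first-order expansion of the boundary displacement $h(u)$ uniform in $u$, together with the quadratic remainder bound. We would handle this by solving $x^\top\alpha'=\beta'$ explicitly for $\phi$ via the implicit function theorem, using $\partial_\phi(x^\top\alpha)=-\sin\rho\neq0$ at $\phi=\rho$ (which holds since $\beta<1$). Smoothness in $(t,v,\beta',u)$ on a compact set then yields uniform $O(\cdot^2)$ remainders.
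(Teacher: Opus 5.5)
Your argument is correct in structure, apart from one computational slip described below, and it takes a genuinely different route from the paper.

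The paper argues globally and geometrically, in three steps.
\begin{itemize}
\item It reduces to caps of equal height via $\mu(A_{\alpha,\beta}\triangle A_{\alpha',\beta'})\ge\frac{1}{2}\mu(A_{\alpha,\beta}\triangle A_{\alpha',\beta})$, using monotonicity of caps in the height and rotation invariance.
\item It bounds $\mu(A_{\alpha,\beta}\triangle A_{\alpha',\beta})$ below by a multiple of $\|\alpha-\alpha'\|$ through a case split. When $\alpha^\top\alpha'\ge\beta$, it uses a central projection from the midpoint $Q$ of the intersection of the two bounding hyperplanes onto an auxiliary sphere $\mathbb{S}(Q,r)$. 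On that sphere the symmetric difference is a pair of lunes of measure proportional to $r^{d-1}\angle(\alpha,\alpha')$. When $\alpha^\top\alpha'<\beta$, it notes that half of $A_{\alpha,\beta}$ misses $A_{\alpha',\beta}$.
\item It controls $|\beta-\beta'|$ separately through $\mu(A_{\alpha,\beta}\triangle A_{\alpha',\beta'})\ge|\mu(A_{\alpha,\beta})-\mu(A_{\alpha',\beta'})|$ and the cap-area formula, then combines the two bounds by a maximum.
\end{itemize}

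You instead split into two regimes. The far regime is handled softly, by continuity of $F$, injectivity of $(\alpha',\beta')\mapsto A_{\alpha',\beta'}$ when $\beta<1$, and compactness. The near regime is handled by a first-order boundary-variation expansion that treats $\alpha$ and $\beta$ jointly. Nondegeneracy there comes from the norm $(a,b)\mapsto\int|a+\langle b,u\rangle|\,d\sigma(u)$.

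What your route buys:
\begin{itemize}
\item It is cleaner and needs no measure comparison under a central projection.
\item It transfers to any smoothly and injectively parametrised shape class with nondegenerate boundary variation.
\item It makes explicit that $\beta<1$ is necessary, which the paper leaves implicit in factors like $(1-\beta^2)^{(d-1)/2}$.
\end{itemize}
The price is that $m_\eta$ is non-constructive. You get no explicit $C_{d,\beta}$ and no rate of blow-up as $\beta\to 1$, which the paper's direct argument aims to provide.

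\textbf{The slip.} The $t$-terms in both of your displayed linearisations of $h$ are wrong. The second, $(-\Delta\beta+t\cos\rho\,\langle v,u\rangle)/\sin\rho$, has $t$-coefficient $\cot\rho$. This vanishes for the hemisphere $\beta=0$, which would contradict both $\|b\|\asymp t$ and your remark that nothing degenerates there.

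The correct computation is as follows. At $(\phi,t,\beta')=(\rho,0,\beta)$, the function $\cos\phi\cos t+\sin\phi\sin t\,\langle u,v\rangle-\beta'$ has partial derivatives $-\sin\rho$, $\sin\rho\,\langle u,v\rangle$ and $-1$. The implicit function theorem then gives $h(u)=-\Delta\beta/\sin\rho+t\langle u,v\rangle+O(t^2+\Delta\beta^2)$. This holds uniformly, because $u,v$ enter only through $\langle u,v\rangle\in[-1,1]$. Geometrically, tilting the centre by $t$ towards $v$ pushes the boundary in direction $v$ out by exactly $t$, whatever $\rho$ is.

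With $a=-\Delta\beta/\sin\rho$ and $b=tv$ you get $|a|+\|b\|\ge|\Delta\beta|+t$, and your local step goes through.

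It is also worth one line to justify that the polar formula for $F$ is exact in the local regime. Along each meridian, $x^\top\alpha'=R\cos(\phi-\psi)$ with $|\psi|$ small. So for small $t$ and $\beta'<\cos t$, the set $\{\phi\in[0,\pi]:x^\top\alpha'\ge\beta'\}$ is an interval $[0,\phi'(u)]$.
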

\begin{proof}
We start by bounding $\left\|\alpha-\alpha'\right\|$ by $\mu(A_{\alpha,\beta}\triangle A_{\alpha',\beta'})$. Since we have $\mu(A_{\alpha,\beta}\setminus A_{\alpha',\beta'})\geq \mu(A_{\alpha,\beta}\setminus A_{\alpha',\beta})$ when $\beta\leq\beta'$ and $\mu(A_{\alpha',\beta'}\setminus A_{\alpha,\beta})\geq \mu(A_{\alpha',\beta}\setminus A_{\alpha,\beta})$ when $\beta\geq\beta'$, it holds that
\begin{align}
	\label{eqn:proof_lemma_disc_para_lb_1}
	\mu\left(A_{\alpha, \beta} \triangle A_{\alpha', \beta'}\right) \geq \frac{1}{2} \mu\left(A_{\alpha, \beta} \triangle A_{ \alpha', \beta}\right),
\end{align}
and it reduces to find a lower bound of the symmetric difference of disc $A_{\alpha,\beta}$ and disc $A_{\alpha',\beta}$ that have the same intercept. To this end, we consider two cases when $\alpha^\top\alpha'\geq\beta$ and $\alpha^\top\alpha'<\beta$ separately.

Assume $\alpha^\top\alpha'\geq\beta$. We define helper quantities with an example when $d=3$ visualised in Figure~\ref{fig:proofLemmaGeometry}. Let $\theta=\angle(\alpha,\alpha')$ denote the angle between the two centres and let $L:=\left\{x \in \mathbb{R}^{d}: x^{\top} \alpha=x^{\top} \alpha'=\beta,\|x\|\leq 1\right\}$ be the line segment of intersection between hyperplanes $\left\{x^{\top} \alpha=\beta\right\}$ and $\left\{x^{\top} \alpha'=\beta\right\}$. Let $Q \in L$ be the midpoint of $L$ such that $Q^{\top} \alpha=\beta, Q^{\top} \alpha^{\prime}=\beta$ and $\left\|Q-x_l\right\|=\left\|Q-x_r\right\|$ where $\left\{x_l, x_r\right\}=L \cap \mathbb{S}^{d-1}$ are the boundary points of $L$. Specifically, we have 
\begin{equation}
    \label{Eq:proofGeomtryQ}
    Q=\frac{\beta}{1+\cos(\theta)}\left(\alpha+\alpha^{\prime}\right).
\end{equation}
		 Let $R$ denote the point on the circle $\left\{x^{\top} \alpha=\beta\right\} \cap \mathbb{S}^{d-1}$ that lies on the same line with $Q$ and $\beta\alpha$ (the center of the circle $\left\{x^{\top} \alpha=\beta\right\} \cap \mathbb{S}^{d-1}$), and $R$ is chosen to be close to $Q$ so that $R^{\top}Q/\|Q\|>R^\top\alpha$. Denote $\mathbb{S}(Q, r)$ the sphere centered at $Q$ with radius $r$ where $r:=\|R-Q\|_2$. For $x\in\mathbb{S}(Q,r)$, it can be decomposed as $x=Q+(x-Q) r$, and we let $\mathrm{p}(x)$ be the projection of $x$ onto $\mathbb{S}^{d-1}$ along the direction $x-Q$. Specifically, we have $\mathrm{p}(x)=Q+(x-Q) l_x$ where $l_x>0$ is chosen such that $\|\mathrm{p}(x)\|_2=1$.
         
\begin{figure}[ht]
\centering
\vspace{6mm} 
\begin{tikzpicture}[scale=0.75]
  \begin{scope}
    \draw[very thick] (0,0) circle (2.4);
    \draw[thick, red!70] (-0.9,0) circle (1.45);
    \draw[thick, red!70] (0.9,0) circle (1.45);
	\fill[red!70] (0.9,0) circle (1.6pt);
	\fill[red!70] (-0.9,0) circle (1.6pt);
	\node[red!70, left=2pt] at (-0.9,0) {$\alpha$};
    \node[red!70] at (1.15,0.35) {$\alpha'$};
    \draw[blue!70, line width=0.8pt, dashed] (0,1.13688) -- (0,-1.13688);
    \fill[blue!70] (0,1.13688) circle (1.6pt);
    \fill[blue!70] (0,-1.13688) circle (1.6pt);
    \node[blue!70, above=2pt] at (0,1.13688) {$x_r$};
    \node[blue!70, below=2pt] at (0,-1.13688) {$x_\ell$};
	\fill[blue!70] (0,0) circle (1.6pt);
    \node[blue!70] at (-0.25,0) {$Q$};
	\fill[blue!70] (0.55,0) circle (1.6pt);
    \node[blue!70] at (0.7,-0.3) {$R$};
    \node[below] at (0,-2.75) {(a)};
  \end{scope}

  \begin{scope}[xshift=6.4cm]
    \draw[very thick] (0,0) circle (2.4);
	\fill[red!70] (0,0) circle (1.6pt);
    \draw[thick, red!70] (-2.4,0) -- (1,2.18174);
    \draw[thick, red!70] (2.4,0) -- (-1,2.18174);
    \draw[red!70, dashed] (0,0) -- (-1.29615,2.01990);
    \draw[red!70, dashed] (0,0) -- (1.29615,2.01990);
	\fill[red!70] (-1.29615,2.01990) circle (1.6pt);
	\fill[red!70] (1.29615,2.01990) circle (1.6pt);
    \draw[red!70] (0,0)++(57.31:0.55) arc (57.31:122.69:0.55);
    \node[red!70] at (-1.4,2.3) {$\alpha$};
    \node[red!70] at (1.6,2.2) {$\alpha'$};
    \node[red!70] at (0,0.75) {$\theta$};
    \fill[blue!70] (0,1.54005) circle (1.6pt);
    \fill[blue!70] (1,2.18174) circle (1.6pt);
    \node[blue!70] at (0,1.2) {$Q$};
    \node[blue!70] at (1,2.5) {$R$};
	\draw[blue!70, decorate, decoration={brace, amplitude=4pt}] (0,1.54005) -- (1,2.18174)
      node[midway, above=1pt] {$r$};
    \node[below] at (0,-2.75) {(b)};
  \end{scope}
\end{tikzpicture}
\caption{Example illustration of the points $Q$ specified in~\eqref{Eq:proofGeomtryQ} and $R$, the angle $\theta$ and the line segment of length $r$ when $d=3$ with $A_{\alpha,\beta}$ and $A_{\alpha',\beta}$ shown as red discs. Panel (a) shows the projection of the sphere viewed from above. Panel (b) shows the projection of the sphere viewed from aside.}
\label{fig:proofLemmaGeometry}
\end{figure}
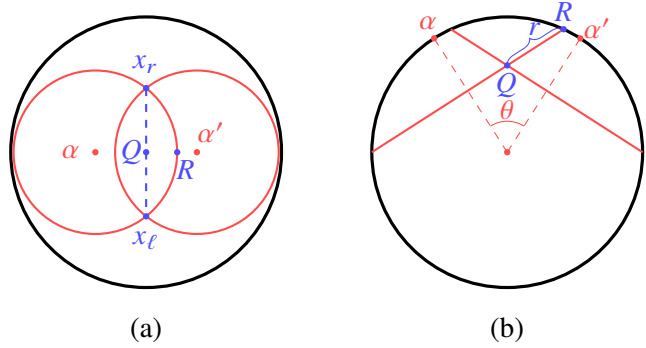

Let $B_{\alpha, \beta}:=\{x\in\mathbb{S}(Q,r):x^{\top}\alpha\geq\beta\}$ and $B_{\alpha',\beta}:=\{x\in\mathbb{S}(Q,r):x^{\top}\alpha'\geq\beta\}$ be two spherical caps on $\mathbb{S}(Q, r)$. For $x \in B_{\alpha, \beta} \setminus B_{\alpha', \beta}$, we have
			\begin{align}\label{Eq:proofLemmaGeomProjection}
				\mathrm{p}(x)^{\top} \alpha & =Q^{\top} \alpha+{l_x}\left(x^{\top} \alpha-Q^{\top} \alpha\right)  =\beta+{l_x}\left(x^{\top} \alpha-\beta\right) \geqslant \beta.
			\end{align}
	Similarly, we have $\mathrm{p}(x)^{\top} \alpha^{\prime}=\beta+{l_x}\left(x^{\top} \alpha^{\prime}-\beta\right)<\beta$, together with~\eqref{Eq:proofLemmaGeomProjection}, we have $\mathrm{p}(x)\in A_{\alpha, \beta} \backslash A_{\alpha^{\prime}, \beta}$. By a similar argument, we also have $\mathrm{p}(y)\in A_{\alpha', \beta} \backslash A_{\alpha, \beta}$ for $y\in B_{\alpha', \beta} \setminus B_{\alpha, \beta}$.
    
    As the two sets $\{\mathrm{p}(x):x\in B_{\alpha,\beta}\triangle B_{\alpha',\beta}\}$ and $\{x:x\in B_{\alpha,\beta}\triangle B_{\alpha',\beta}\}$ have the same Lebesgue measure, we have for $C_d=\sqrt{2}\pi^{d/2-1}/\Gamma{(d/2)}$,
		\begin{align*}
			\mu\left(A_{\alpha, \beta} \triangle A_{\alpha^{\prime}, \beta}\right) \geq \mu(B _{\alpha', \beta }\triangle B_{ \alpha, \beta})=r^{d-1}\theta / \pi
			\frac{2 \pi^{d / 2}}{ \Gamma(d / 2)}\geq C_{d}r^{d-1}\|\alpha-\alpha'\|_2,
		\end{align*} where the last inequality follows from $\theta^2\geq\sin^2\theta=1-(\alpha^{\top}\alpha')^2\geq\|\alpha-\alpha'\|_2^2/2$, and the equality is due to the fact that for $\alpha\in\mathbb{S}^{d-1},\beta\in[0,\pi]$, the surface area of $A_{\alpha,\beta}$ is given by
	\begin{align}\label{Eq:surfaceArea}
		\mu(A_{\alpha,\beta})&= \frac{\pi^{d/2-1/2}}{\Gamma \left( \frac{d-1}{2} \right)} \int_0^{(1+\beta)(1-\beta)}t^{\frac{d}{2}-\frac{3}{2}} (1-t)^{-\frac{1}{2}} dt.
	\end{align}
        We proceed to argue that $r\geq\sqrt{1-\beta^2}/2$ for any $\alpha, \alpha^{\prime}\in\mathbb{S}^{d-1}, \beta\in[0,1]$, which consequently gives that for $C'_d=C_d/2^d$, we have
        \begin{align}
            \label{eqn:proof_lemma_disc_para_lb_2}
			\mu\left(A_{\alpha, \beta} \triangle A_{\alpha^{\prime}, \beta}\right) \geq C'_{d}{(1-\beta^2)^{(d-1)/2}}\|\alpha-\alpha'\|_2,
        \end{align}
        when $\alpha^\top\alpha'\geq\beta$. If $\alpha^{\top}\alpha'=\beta$, we have $R=\alpha',R^{\top}Q = \alpha^{\top}Q$, and $\|\beta\alpha\|_2\leq\|R\|_2$, hence $r = \|R-Q\|\geq\|R-\beta\alpha\|/2=\sqrt{1-\beta^2}/2$ by the law of cosine.
        If $\alpha^{\top} \alpha^{\prime} \geqslant \beta$, the centre $\alpha$ lies on the other disc $ A _{\alpha', \beta}$, and since the value of $\|Q-\beta\alpha\|=\tan(\theta/2)\beta$ decreases compared to its value when $\alpha^{\top}\alpha'=\beta$, we have $r \geq \sqrt{1-\beta^2}/2$.
	 
	 On the other hand, assume $\alpha^{\top} \alpha^{\prime}<\beta$. In this case we have $\alpha \notin A_{\alpha', \beta}$, and we bound $\mu(A_{\alpha,\beta}\triangle A_{\alpha',\beta})$ directly without projection onto $\mathbb{S}(Q,r)$. Let $\gamma\in\mathbb{S}^{d-1}$ be a point such that $\gamma^{\top}\alpha=0,\gamma^{\top}\alpha'\leq0$ and $\gamma$ lies on the same plane as $\alpha,\alpha',Q$ and $R$ in the sense that $\gamma^{\top}R=-\sqrt{1-\beta^2}$. For $x \in A_{\alpha,\beta}$, we can decompose $x=a\alpha+b\gamma+\eta$ where $a\in[\beta,1]$, $\eta^{\top}\alpha'=0$, and either $b\in[0,\sqrt{1-\beta^2}]$ or $b\in[-\sqrt{1-\beta^2},0]$ with each case holds for exactly half of the points in $A_{\alpha,\beta}$. For $x\in A_{\alpha,\beta}$ such that $b=x^{\top}\gamma\geq0$, it holds that $x^{\top}\alpha'=a\alpha^{\top}\alpha'+b\gamma^{\top}\alpha'+\eta^{\top}\alpha'\leq a\beta\leq \beta$, hence 
	 \begin{align*}
	 	\mu\left(A_{\alpha,\beta}\setminus A_{\alpha',\beta}\right) \geq \mu(A_{\alpha,\beta})/2= C_{d,\beta}\geq C_{d,\beta}/2||\alpha-\alpha'||_2,
	 \end{align*}
	 where 
		\[C_{d,\beta}=\frac{\pi^{d/2-1/2}}{2\Gamma \left( \frac{d-1}{2} \right)} \int_0^{(1+\beta)(1-\beta)}t^{\frac{d}{2}-\frac{3}{2}} (1-t)^{-\frac{1}{2}} dt.\]
        By a similar argument for $x\in A_{\alpha,\beta}$ such that $b=x^{\top}\gamma\leq0$, we have \begin{align}\label{eqn:proof_lemma_disc_para_lb_3}\mu\left(A_{\alpha',\beta}\triangle A_{\alpha,\beta}\right)  \geq C_{d,\beta}||\alpha-\alpha'||_2,
        \end{align}
         when $\alpha^\top\alpha'<\beta$.
         
		Combining (\ref{eqn:proof_lemma_disc_para_lb_1}), (\ref{eqn:proof_lemma_disc_para_lb_2}) and (\ref{eqn:proof_lemma_disc_para_lb_3}), we get
		\begin{align}\label{Eq:proofSymDiff>Alpha}
			\mu\left(A_{\alpha, \beta} \triangle A_{\alpha^{\prime}, \beta^{\prime}}\right) \geq C_{d,\beta}'\|\alpha-\alpha'\|_2,
		\end{align}
	where $C_{d,\beta}'=C_{d,\beta}\wedge C'_{d}{(1-\beta^2)^{(d-1)/2}}$.
	
	We now bound $|\beta-\beta'|$. For arbitrary $\alpha\in\mathbb{S}^{d-1},\beta\in[0,1]$, we have 
		\begin{align*}
			\mu(A_{\alpha,\beta})&= \frac{\pi^{d/2-1/2}}{\Gamma \left( \frac{d-1}{2} \right)} \int_0^{(1+\beta)(1-\beta)}t^{\frac{d}{2}-\frac{3}{2}} (1-t)^{-\frac{1}{2}} dt\\
            &\geq C_{d}''(1-\beta^2)^{(d-1)/2}\geq C_{d}''(1-\beta)^{d}\geq C_{d}''(1-d\beta),
		\end{align*}
	where $C_{d}''=2\pi^{d/2-1/2}/\Gamma \left( d/2-1/2 \right)/(d-1)$ due to the fact that $(1-t)^{-1/2}\geq1,t\in[0,1]$.
		Hence
		\begin{align}\label{Eq:proofSymDiff>Beta}
		\mu\left(A_{\alpha, \beta} \triangle A_{\alpha^{\prime}, \beta^{\prime}}\right) &=\mu(A_{\alpha,\beta}\cup A_{\alpha',\beta'})-\mu(A_{\alpha,\beta}\cap A_{\alpha',\beta'})\nonumber\\
        &\geq \mu\left(A_{\alpha,\beta\wedge\beta'}\right)-\mu(A_{\alpha,\beta\vee\beta'})\geq C_d''d|\beta-\beta'|.
		\end{align}

        Combining~\eqref{Eq:proofSymDiff>Alpha} and~\eqref{Eq:proofSymDiff>Beta}, we arrive at $$\mu\left(A_{\alpha, \beta} \triangle A_{\alpha^{\prime}, \beta^{\prime}}\right) \geq C_{d,\beta}'\|\alpha-\alpha'\|_2\vee C_d''d|\beta-\beta'|,$$ and
        $$ \mu\left(A_{\alpha, \beta} \triangle A_{\alpha^{\prime}, \beta^{\prime}}\right) \geq C_{d,\beta}''(\|\alpha-\alpha'\|_2+ |\beta-\beta'|),$$ for some constant $C_{d,\beta}''=C_{d,\beta}'/2\wedge C_{d}''d/2$, as desired.
\end{proof}
\begin{lemma}
\label{Lemma:ChompingBiscuits}
    For $r\geq 2$, let $A_1,\ldots,A_r$ be disjoint discs in a Riemannian manifold $\mathcal{M}$ whose pairwise distance is at least $2\omega$. If a disc $B$ intersects each of $A_j$ for $j\in[r]$, then there exists a disc $C$ of radius at least $\omega$ such that $C^{\circ} \subseteq B \setminus \cup_{j=1}^r {A_j}$, where $C^\circ$ denotes the interior of $C$.
\end{lemma}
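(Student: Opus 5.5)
The plan is to produce $C$ explicitly: first find a point of $B$ that lies at distance at least $\omega$ from every $A_j$, then check that the open ball of radius $\omega$ around it stays inside $B$. Since a disc has no exterior protrusions, the natural candidate lies on a geodesic path inside $B$ that joins two of the discs.

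\textbf{Step 1 (the connecting path).} Fix two distinct indices, say $1$ and $2$, and pick $p_1\in B\cap A_1$ and $p_2\in B\cap A_2$. Geodesic balls of radius below the convexity radius are geodesically convex, so the minimising geodesic $\gamma$ from $p_1$ to $p_2$ lies in $B$. I will state this convexity as a standing assumption on the radii, as the paper implicitly does on $\mathbb{S}^{d-1}$, where discs of radius at most $\pi/2$ are convex.

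\textbf{Step 2 (the distance function along the path).} Define $f(t)=\min_j \mathrm{dist}(\gamma(t),A_j)$. This is continuous, with $f=0$ at both endpoints. Let $t^*$ be the supremum of the times at which $\gamma(t)$ is within distance $\omega$ of $A_1$, and consider the point $\gamma(t^*)$. By the pairwise separation $2\omega$ and the triangle inequality, no point can be within distance $<\omega$ of two different discs. Therefore, as $\gamma$ leaves the $\omega$-neighbourhood of $A_1$, it cannot pass directly into the $\omega$-neighbourhood of another $A_j$. Combining this with the intermediate value theorem gives a point $x_0=\gamma(t_0)$ with $f(t_0)\ge\omega$.

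\textbf{Step 3 (the choice of centre).} The difficulty is that $x_0$ may lie close to $\partial B$, so a disc of radius $\omega$ around it need not sit inside $B$. To handle this, I would not use the raw geodesic between boundary-ish points. Instead I would maximise $g(x)=\min\{\mathrm{dist}(x,B^{\mathrm{c}}),\min_j\mathrm{dist}(x,A_j)\}$ over $B$, and show that its maximum is at least $\omega$. For a lower bound, use that $B$ contains a point of each $A_j$. Consider the path from a point of $A_1\cap B$ to a point of $A_2\cap B$. Where $B$ is a disc of radius $\rho$, the sub-disc of radius $\rho-\omega$ must also be used: the region between $A_1$ and $A_2$ inside $B$ has width at least $2\omega$ along the connecting direction. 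I would also use the convexity of $B$ to shift the midpoint-type point of the gap towards $\mathrm{Ctr}(B)$, which increases $\mathrm{dist}(\cdot,B^{\mathrm{c}})$ without decreasing the distance to the $A_j$ by more than is allowed. Once $g(x^*)\ge\omega$ is established, take $C$ to be the disc of radius $\omega$ centred at $x^*$; its interior avoids every $A_j$ and lies in $B$.

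\textbf{Main obstacle.} I expect Step 3 to be the hard part, namely controlling the distance to $\partial B$ simultaneously with the distance to the $A_j$. If $B$ only barely touches the discs near its boundary, the lemma as stated may even require $B$ to be large, with $\mathrm{Rad}(B)\ge\omega$ implied by containing points of two discs $2\omega$ apart. I would first try this route: the diameter of $B$ is at least $2\omega$, hence $\mathrm{Rad}(B)\ge\omega$. Then I would argue, via a monotone shrinking of $B$ towards its centre along the connecting direction, that some point sits at distance at least $\omega$ from both $B^{\mathrm{c}}$ and every $A_j$. If that fails, I would settle for the weaker radius $\omega/2$, which is all the applications in the proof of Theorem~\ref{thm:multi_consistency} actually need.
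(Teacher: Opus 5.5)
Your Steps 1--2 are sound. Step 2 is cleanest as a connectedness argument: the open sets $N_j=\{x:\mathrm{dist}(x,A_j)<\omega\}$ are pairwise disjoint by the $2\omega$ separation, so a connected set meeting two of them cannot lie in their union. The gap is Step 3, which is where the content of the lemma lies, and it is not proved. You list strategies (maximising $g$, shifting a midpoint towards $\mathrm{Ctr}(B)$, monotone shrinking of $B$) but give no argument for any of them. The claim that moving towards the centre does not reduce the distances to the $A_j$ ``by more than is allowed'' is unsupported, and can fail if some other $A_j$ sits near the centre. The fallback to radius $\omega/2$ would not prove the lemma as stated. So the proposal does not yet show that $C$ exists.

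The missing idea is to run your Step 2 argument inside the shrunken concentric disc instead of along a chord of $B$. Let $c=\mathrm{Ctr}(B)$ and $\rho=\mathrm{Rad}(B)$. As you observe, $\rho\ge\omega$, so $B':=\{x:\mathrm{Geo}(x,c)\le\rho-\omega\}$ is non-empty. It is path-connected, via minimising geodesics to $c$. Every $y\in B$ lies within $\omega$ of $B'$: move from $y$ towards $c$ by $\min\{\omega,\mathrm{Geo}(y,c)\}$.

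Suppose $B'\subseteq\bigcup_j N_j$. Connectedness then forces $B'\subseteq N_k$ for a single $k$. Since $r\ge 2$, pick $j\neq k$, a point $y\in B\cap A_j$, and $x\in B'$ with $\mathrm{Geo}(x,y)\le\omega$. This gives $\mathrm{dist}(A_j,A_k)\le\mathrm{Geo}(y,x)+\mathrm{dist}(x,A_k)<2\omega$, a contradiction.

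Hence some $x^*\in B'$ satisfies $\mathrm{dist}(x^*,A_j)\ge\omega$ for every $j$. The open ball of radius $\omega$ about $x^*$ lies in $B$ by the triangle inequality through $c$, and it misses every $A_j$. This handles the distance to $\partial B$ and to the $A_j$ simultaneously, which is exactly what you flagged as the obstacle. It also needs no convexity of $B$, so your standing assumption in Step 1 becomes unnecessary.

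For comparison, the paper takes a different route. It shows $B\setminus\bigcup_jA_j\neq\emptyset$ and takes some disc $C$ with $C^\circ$ inside it. It then moves and enlarges $C$ until $C$ touches two of the $A_j$, and concludes via $2\omega\le\mathrm{dist}(x,A_j)+\mathrm{dist}(x,A_k)\le 2\,\mathrm{Rad}(C)$. That enlargement step does not explain why $C$ cannot get stuck against $\partial B$ first, which is the same difficulty you identified; the shrunken-disc argument above resolves it.
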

\begin{proof}
    We first claim that $B\setminus\cup_{j=1}^{r}A_j\neq\emptyset$. Otherwise, assume $B\subseteq\cup_{j=1}^{r}A_j$. Because $B\cap A_j\neq\emptyset$ for $j\in[r]$, by the convexity of $B$, there exists a line segment $l \subseteq B$ such that $l\cap A_1\neq\emptyset,l\cap A_2\neq\emptyset$ and $l \subseteq \cup_{j=1}^{r}A_j$. However, since $A_1, \ldots, A_r$ are disjoint with pairwise distance of at least $2\omega$, there exists a line segment $l'\subset l$ of length $\omega$ such that $l'\not\subset\cup_{j=1}^{r}A_j$, contradicting with $l\subset B\subseteq\cup_{j=1}^{r}A_j$.
    
    Therefore, we can find a disc $C$ such that $C^{\circ} \subseteq B \setminus \cup_{j=1}^r {A_j}$. We further claim that there exist some $j,k\in[r]$ such that $C\cap A_j\neq \emptyset$ and $C\cap A_k\neq \emptyset$. Otherwise, we would have $C\cap A_i\neq \emptyset $ for at most one $i\in[r]$. In this case, we may perturb the centre of $C$ (away from the disc that it is currently touching, if any) while adjusting its radius, up until the point where $C\cap A_j\neq \emptyset $ for at least two $j\in[r]$.
    
    Finally, we establish that $C$ has a radius at least $\omega$. Let $x$ be the centre of $C$. We have 
    \[
    2\omega \leq \mathrm{dist}(A_j, A_k) \leq \mathrm{dist}(x, A_j) + \mathrm{dist}(x, A_k) \leq 2\mathrm{Rad}(C),
    \]
    as desired.
\end{proof}
\begin{lemma}\label{Lemma:ChompingDoughnut}Let $A,B\in\mathcal{S}$ be two $d$-dimensional discs such that $A\subset B$, and $d(A,B^{c})\geq\omega$ for some $\omega\in(0,\pi)$. For $r\geq 1$, let $R_1,\dots,R_r$ be $r$ discs on a $d$-dimensional Riemannian manifold $\mathcal{M}$ such that $R_i\cap B\neq\emptyset$, $R_i\cap A=\emptyset$, and $d(R_i,R_j)\geq\omega$ for $i,j\in[r]$ such that $i\neq j$. Then there exists a disc $D\subset\mathcal{M}$ of radius $\omega/8$ such that $D\subset B\setminus(\cup_{i=1}^{r}R_i\cup A)$.
\end{lemma}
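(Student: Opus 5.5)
The plan is to build $D$ inside the annulus $B\setminus A^{\omega/2}$, where $A^{\omega/2}:=\mathrm{Nhd}(A,\omega/2)$. This shell has geodesic thickness at least $\omega$ everywhere, because $\mathrm{dist}(A,B^{\mathrm{c}})\geq\omega$. Then I would show that the thin, mutually separated discs $R_i$ cannot cover this shell. I will work on the sphere $\mathcal{S}$, as in the application (Step~1 of the proof of Theorem~\ref{thm:multi_consistency}), and use geodesic convexity of discs of radius at most $\pi/2$.

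First, fix a point $p$ on the boundary of $A$ and let $\gamma$ be the geodesic through $\mathrm{Ctr}(A)$ and $p$, continued outward until it exits $B$. Along $\gamma$ there is a segment $\sigma$ of length at least $\omega$ lying in $B\setminus A$, running from $\partial A$ to $\partial B$. The same holds for every boundary direction. This gives a family of \emph{radial segments}, each of length $\ge\omega$, that sweep out the shell.

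Second, take the midpoint portion $\sigma'$ of $\sigma$: the points at distance between $\omega/4$ and $3\omega/4$ from $\partial A$. Every point $x\in\sigma'$ satisfies $\mathrm{dist}(x,A)\ge\omega/4$ and $\mathrm{dist}(x,B^{\mathrm{c}})\ge\omega/4$, so the disc of radius $\omega/8$ about $x$ lies in $B\setminus A$. It remains to choose $x$ so that this small disc also avoids every $R_i$, that is, so that $\mathrm{dist}(x,\cup_i R_i)\ge\omega/8$.

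The key one-dimensional fact is this: a geodesic segment meets each convex $R_i$ in a sub-interval, and consecutive intervals along the segment are separated by gaps of length at least $\omega$, since $\mathrm{dist}(R_i,R_j)\ge\omega$. On $\sigma'$, which has length $\omega/2$, at most one $R_i$ can meet $\sigma'$ unless a gap of length $\ge\omega>\omega/2$ fits inside it, which is impossible. So at most one $R_i$ meets $\sigma'$. If none meets it, choose $x$ so that a whole neighbourhood avoids the $R_i$; this is a separate point needing care, since an $R_i$ might lie close to $\sigma'$ without touching it. To handle that, I would run a two-dimensional version of the same argument, using the $\omega$-separation to show that the $\omega/8$-neighbourhoods of the $R_i$ are pairwise disjoint and convex up to that scale. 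If exactly one $R_i$ covers $\sigma'$, rotate the boundary point $p$. Because $R_i\cap A=\emptyset$ and $R_i$ is a disc, it cannot wrap all the way around $A$. More precisely, I would take the radial direction \emph{opposite} to $\mathrm{Ctr}(R_i)$ relative to $\mathrm{Ctr}(A)$, or a direction tangent to $R_i$, and argue that the $\omega/8$-neighbourhood of $R_i$ misses some radial middle segment.

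I expect the main obstacle to be this last covering step: showing that the union of the $\omega/8$-neighbourhoods of $R_1,\dots,R_r$ cannot contain the entire middle shell $\{x: \omega/4\le \mathrm{dist}(x,A),\ \mathrm{dist}(x,B^{\mathrm{c}})\ge\omega/4\}$. Note that the hypothesis $R_i\cap B\neq\emptyset$ is not needed for this. My first attempt would be a connectedness argument. The middle shell is connected (for $d\ge 2$), and the thickened $R_i$ are pairwise disjoint closed sets, since $\omega-2\cdot\omega/8>0$. So if they covered the shell, a single thickened $R_i$ would have to contain all of it. But a thickened disc disjoint from $A$ in that way cannot contain a closed annulus surrounding $A$: the geodesic from $\mathrm{Ctr}(R_i)$ through $\mathrm{Ctr}(A)$ hits the shell on the far side of $A$, at a point whose distance from $R_i$ exceeds $\omega/8$. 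This contradiction produces the required centre $x$, and hence $D$.
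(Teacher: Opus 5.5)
Your closing connectedness argument is a valid proof, and it takes a different route from the paper's.

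\textbf{The paper's route.} The paper places a disc $C$ of radius $\omega/4$ in the outer half-shell $B\setminus S$, where $S$ is concentric with $B$ and has radius $\mathrm{Rad}(B)-\omega/2$. It then splits into cases by how many $R_i$ meet $C$, invoking Lemma~\ref{Lemma:ChompingBiscuits} if two do. If exactly one does, it ``moves $C$ inside $B\setminus S$'' until $C$ meets none, citing only $B\setminus S\not\subseteq R_i$.

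\textbf{Your route.} You take the set $M$ of admissible centres and note that the closed $\omega/8$-neighbourhoods of the $R_i$ are pairwise disjoint. Connectedness of $M$ then reduces to a single $R_i$, and you exhibit an explicit point of $M$ on the far side of $A$ from it.

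\textbf{What each buys.} Your version replaces the paper's informal sliding with an actual argument. It also shows the two-region case is vacuous, since a set of diameter below $\omega$ cannot meet two $\omega$-separated sets. It addresses the right obstruction, which the paper's justification does not. For example, take $B$ a hemisphere and $R_i$ the opposite closed hemisphere. Then every radius-$\omega/4$ disc in $B\setminus S$ touches $R_i$, even though $B\setminus S\not\subseteq R_i$. The paper's version is more pictorial but leaves these points implicit.

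\textbf{Two details to write out.}

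First, the far-side estimate. Let $y$ be the point at distance $\mathrm{Rad}(A)+\omega/2$ from $\mathrm{Ctr}(A)$, on the great circle from $\mathrm{Ctr}(R_i)$, beyond $\mathrm{Ctr}(A)$. Then $y\in M$ and $\mathrm{dist}(y,R_i)\ge\omega/2$. When the arc length from $\mathrm{Ctr}(R_i)$ to $y$ exceeds $\pi$, this needs $\mathrm{Rad}(R_i)\le\pi/2$ and $\mathrm{Rad}(A)\le\pi/2-\omega$; the latter follows from $\mathrm{Rad}(B)\le\pi/2$. The radius bound is essential: a cap centred at $-\mathrm{Ctr}(A)$ with radius just under $\pi-\mathrm{Rad}(A)$ is disjoint from $A$ yet fills $B\setminus A$ up to a sliver.

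Second, connectedness of $M$ needs the sphere itself to be at least two-dimensional, i.e.\ $d\ge 3$ in the paper's $\mathbb{S}^{d-1}$ indexing. On $\mathbb{S}^1$ the shell is two arcs, and the lemma genuinely fails there: take two arcs hugging $A$ from either side. The paper's sliding argument has the same implicit restriction.

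The radial-segment and rotation material in the first half of your proposal can be dropped. Incidentally, $B\setminus\mathrm{Nhd}(A,\omega/2)$ has thickness $\omega/2$, not $\omega$.
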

\begin{proof}
    Denote $x:=\mathrm{Ctr}(B)$ and $r:=\mathrm{Rad}(B)$. Let $S:=\{y\in B:\mathrm{Geo}(x,y)\leq r-\omega/2\}$ be a disc centred at $x$ with radius $r-\omega/2$ so that $A\subset S\subset B$, and denote the boundary of $S$ as $\partial S$ which is a $(d-2)$-dimensional sphere.
    Let $C\subset B\setminus S$ be a $(d-1)$-dimensional disc with radius $\omega/4$. Such disc exists because $\mathrm{dist}(S,B^{c})=\omega/2$.
    
    We claim that there exists a disc $D\subset C$ such that $\mathrm{Rad}(D)=\omega/8$ and $D\cap(\cup_{i=1}^{r}R_i)=\emptyset$. To show this, we consider three cases regarding $C\cap (\cup_{i=1}^{r}R_i)$.
    
    Firstly, when $C\cap (\cup_{i=1}^{r}R_i)=\emptyset$, the claim holds by choosing $D=C$.
    
   Secondly, assume there exist $i,j\in[r]$ such that $C\cap R_i\neq\emptyset$ and $C\cap R_j\neq\emptyset$. By Lemma~\ref{Lemma:ChompingBiscuits}, there exists a disc $D$ of radius at least $\omega/8$ such that $D\subset C\setminus(\cup_{i=1}^{r}R_i)$. Since $C\subset B\setminus A$, we have $D\subset B\setminus (A\cup(\cup_{i=1}^{r}R_i))$ as desired. 
    
    Thirdly, assume $C\cap R_i\neq\emptyset$ for a unique $i\in[r]$. We have $(B\setminus S)\not\in R_i$ for any $i\in[r]$, for otherwise there exists $j\in[r]$ such that $B\setminus S\subseteq R_j$ violating the radius constraint that $\mathrm{Rad}(R_j)\leq\pi/2$. Then one may move $C$ inside $B\setminus S$ until the relationship between $C$ and $\cup_{i=1}^{r}R_i$ changes to either the first case where $C\cap(\cup_{i=1}^{r}R_i)=\emptyset$ holds or the second case where $|i\in[r]:C\cap R_i\neq\emptyset|\geq2$ holds.

    Combining the three cases, we have found a disc $D\subset B\setminus(\cup_{i=1}^{r}R_i\cup A)$ of radius $\omega/8$ and the proof is complete.
\end{proof}
\begin{lemma}\label{proof:lemma_exact_one_disc_covered}
	Let $R_1,\dots,R_r$ be $r$ discs on $\mathbb{S}^{d-1}$ satisfying $\mathrm{dist}(R_i,R_j)\geq\omega$ for some $\omega\in(0,\pi)$. Let  
	$B_1,\dots,B_{J}$ be independent and identically distributed random discs on $\mathbb{S}^{d-1}$ such that $\bigl(\mathrm{Ctr}(B_1), \mathrm{Rad}(B_1)\bigr) \sim \mathrm{Unif}(\mathbb{S}^{d-1})\otimes \mathrm{Unif}[0,\pi]$. Given some $\eta\in(0, \omega)$, define an event 
	\[
	\mathcal{E}_1 = \{ \forall i\in[r], \exists j \in [J] \text{ s.t. } R_i \subseteq B_j,\mathrm{dist}(R_i,B_j^{\mathrm{c}})\geq \eta, \cup_{k:k\neq i}R_k \cap B_j = \emptyset\}.
	\]
	Then
	\begin{align*}
		\mathbb{P}(\mathcal{E}_1)&
		\geq 1-r\exp(-C_d J (\omega-\eta)^d ),
	\end{align*}
	where $C_{d}=1/(2^dd^2\Gamma(d/2-1/2 ))$. 
\end{lemma}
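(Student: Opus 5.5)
The strategy is to fix a single true region $R_i$ and exhibit a set of ``good'' outer discs $\mathcal{G}_i\subseteq\mathcal{S}$, parametrised by (centre, radius), whose probability under $\mathrm{Unif}(\mathbb{S}^{d-1})\otimes\mathrm{Unif}[0,\pi]$ is at least $p_i\gtrsim C_d(\omega-\eta)^d$. Independence of $B_1,\dots,B_J$ then gives
\[
\mathbb{P}(\text{no } B_j\in\mathcal{G}_i)=(1-p_i)^J\le \exp(-Jp_i),
\]
and a union bound over $i\in[r]$ yields $\mathbb{P}(\mathcal{E}_1^{\mathrm{c}})\le r\exp(-C_dJ(\omega-\eta)^d)$.

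To construct $\mathcal{G}_i$, write $c_i=\mathrm{Ctr}(R_i)$, $\rho_i=\mathrm{Rad}(R_i)$ and $s:=(\omega-\eta)/4$. Take $\mathcal{G}_i$ to be the set of discs $B$ with $\mathrm{Geo}(\mathrm{Ctr}(B),c_i)\le s$ and $\mathrm{Rad}(B)\in[\rho_i+\eta+s,\rho_i+\eta+2s]$. The three required properties all follow from the triangle inequality for geodesic distance.

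First, containment with margin. Any $x\in R_i$ satisfies $\mathrm{Geo}(x,\mathrm{Ctr}(B))\le\rho_i+s\le\mathrm{Rad}(B)-\eta$, so $R_i\subseteq B$ and $\mathrm{dist}(R_i,B^{\mathrm{c}})\ge\eta$.

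Second, disjointness from the other regions. Any $y\in B$ lies within $\mathrm{Rad}(B)+s\le\rho_i+\eta+3s<\rho_i+\omega$ of $c_i$, so $B\subseteq\mathrm{Nhd}(R_i,\eta+3s)$ with $\eta+3s<\omega$. Since $\mathrm{dist}(R_i,R_k)\ge\omega$, this forces $B\cap R_k=\emptyset$ for every $k\ne i$.

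Third, the radius must stay below $\pi$. This may fail if $\rho_i$ is close to $\pi$. In the intended application, however, $\rho_i\le\pi/2$ and $\omega$ is small, so the interval lies in $[0,\pi]$. Otherwise one can shrink $s$, or note that if $R_i$ is nearly everything then no other $R_k$ fits and $r=1$. This is a degenerate edge case, which I would handle by a brief remark.

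For the probability computation, $\mathbb{P}(\mathrm{Rad}(B)\in\text{interval})=s/\pi$. The centre probability is the normalised area of a geodesic cap of radius $s$, which by~\eqref{Eq:surfaceArea} (with $\beta=\cos s$) is at least $c_d\,s^{d-1}$, using $\sin s\ge 2s/\pi$ for $s\le\pi/2$. Hence $p_i\ge c_d s^{d}/\pi=c_d(\omega-\eta)^d/(4^d\pi)$. Matching the exact stated constant $1/(2^dd^2\Gamma(d/2-1/2))$ just requires tracking the Gamma-function normalisation in the cap area lower bound, for example via $\int_0^{\sin^2 s}t^{(d-3)/2}\,dt$ and $\mu(\mathbb{S}^{d-1})=2\pi^{d/2}/\Gamma(d/2)$. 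The main (mild) obstacle is this constant bookkeeping together with making the interval choices consistent with the radius constraint $\le\pi$. The geometric inclusions themselves are routine.
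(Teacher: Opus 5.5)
Your route is the same as the paper's. The paper lower-bounds $\mathbb{P}\{\mathrm{Nhd}(R_i,\eta)\subseteq B_j\subseteq \mathrm{Nhd}(R_i,\omega)\}$ by the probability of a triangular parameter set, $\{\mathrm{Rad}(B_j)\in[b+\eta,\,b+(\omega+\eta)/2],\ \mathrm{Geo}(\mathrm{Ctr}(B_j),a)\le \mathrm{Rad}(B_j)-b-\eta\}$, by integrating cap areas over it. It then uses $(1-p)^J\le e^{-pJ}$ and a union bound over $i$. Your product set is a sub-rectangle of that triangle, which costs a factor of order $2^d/d$ in the lower bound. Your triangle-inequality checks of the three properties are correct.

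What does not work is treating the two leftover issues as mild bookkeeping.

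\emph{The constant.} Tracking the normalisation honestly, your rectangle gives $p_i\ge \Gamma(d/2)(\omega-\eta)^d/\{2^{d+2}\pi^{d+1/2}\Gamma((d+1)/2)\}$. No bookkeeping turns this into $1/(2^d d^2\Gamma(d/2-1/2))$. The paper reaches that value only because its displayed equality omits the factor $1/\{\pi\,\mu(\mathbb{S}^{d-1})\}$ (with $\mu$ surface measure) coming from the densities of $\mathrm{Rad}(B_j)$ and $\mathrm{Ctr}(B_j)$. The stated constant is in fact false for small $d$. On $\mathbb{S}^2$, let $R_1$ be the cap of radius $\pi/2-\omega$ about the north pole and $R_2$ the southern hemisphere. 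Write $(U,R)$ and $(a,b)$ for the centre and radius of $B_j$ and of $R_1$. Up to a null set, the admissible discs for $i=1$ are exactly those with $\mathrm{Geo}(U,a)\le\min\{R-b-\eta,\,b+\omega-R\}$. With $h=(\omega-\eta)/2$, this has probability $p_1=(h-\sin h)/\pi\le(\omega-\eta)^3/(48\pi)<(\omega-\eta)^3/72=C_3(\omega-\eta)^3$. Hence $\mathbb{P}(\mathcal{E}_1^{\mathrm{c}})\ge(1-p_1)^J$ exceeds $2e^{-C_3J(\omega-\eta)^3}$ for large $J$. You should therefore state the lemma with the constant your argument actually delivers; any $C_d>0$ depending only on $d$ suffices for Theorem~\ref{thm:multi_consistency}.

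\emph{The radius edge case.} Shrinking $s$ does not dispatch it, since that destroys the $(\omega-\eta)^d$ rate. Noting that $r=1$ does not help either. Suppose $\mathrm{Rad}(R_i)+\eta>\pi$, for example $R_1$ a hemisphere with $r=1$ and $\eta>\pi/2$. Then every point of the sphere is within distance less than $\eta$ of $R_i$, so only $B_j=\mathbb{S}^{d-1}$ qualifies and $\mathbb{P}(\mathcal{E}_1)=0$. An extra hypothesis such as $\mathrm{Rad}(R_i)+(\omega+\eta)/2\le\pi$ is needed. The paper's proof uses it silently, and it holds in Theorem~\ref{thm:multi_consistency} because there $\mathrm{Rad}(R_k)\le\pi/2$, $\omega\le\delta\le\pi/2$ and $\eta=\omega/2$.
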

\begin{proof}
	Fix $i\in[r]$ and $j\in[J]$. Let $a\in\mathbb{S}^{d-1}$ and $b\in [0,\pi]$ be the centre and radius of $R_i$ and $U$, $R$ be the centre and radius of $B_j$. We have
	\begin{align*}
		\mathbb{P}\bigl\{\mathrm{Nhd}(R_i,\eta) & \subseteq B_j\subseteq \mathrm{Nhd}(R_i,\omega)\bigr\} = \mathbb{P}\bigl\{\mathrm{dist}(U, a) \leq \min\{b + \omega - R, R - b - \eta \}\bigr\}\\
		&\geq \mathbb{P}(R \in [b+\eta, b+(\omega+\eta)/2], \; \mathrm{dist}(U, a)\leq R-b-\eta)\\
		&=\int_{0}^{(\omega-\eta)/2} \frac{\pi^{d/2-1/2}}{\Gamma \left( \frac{d-1}{2} \right)} \int_0^{\sin^2(x)}t^{\frac{d}{2}-\frac{3}{2}} (1-t)^{-\frac{1}{2}} dtdx\\
		&\geq\frac{2\pi^{d/2-1/2}}{(d-1)\Gamma(\frac{d-1}{2})}\int_{0}^{(\omega-\eta)/2}\sin^{d-1}(x)dx\\
        &\geq \frac{2\pi^{d/2-1/2}}{(d-1)\Gamma(\frac{d-1}{2})}\int_{0}^{(\omega-\eta)/2}(2x/\pi)^{d-1}dx\\
		&\geq C_{d}(\omega-\eta)^{d},
	\end{align*} 
	where $C_{d}=1/(2^dd^2\Gamma(d/2-1/2 ))$ and the penultimate inequality holds because $\sin(x)\geq 2x/\pi$ for $x\in [0,\pi/2]$.
	For a fixed $i\in\{1,\dots,r\}$, denote the event $\mathcal{E}_{1,i}=\{\exists j\in[J],\,\mathrm{Nhd}(R_i,\eta)\subseteq B_j\subseteq \mathrm{Nhd}(R_i,\omega)\}$, because $\{B_j\}_{j\in[J]}$ are i.i.d., we have
	\begin{align*}
		\mathbb{P}(\mathcal{E}_{1,i}^{\mathrm{c}})\leq (1-C_d(\omega-\eta)^{d})^{J}\leq \exp(-C_d(\omega-\eta)^{d}J),
	\end{align*}
	thus
	\begin{align*}
		\mathbb{P}(\mathcal{E}_1) &=\mathbb{P}(\cap_{i\in[r]}\mathcal{E}_{1,i})\geq 1-\sum_{i=1}^{r}\mathbb{P}(\mathcal{E}_{1,i}^{\mathrm{c}})\geq1-r\exp(-C_d(\omega-\eta)^{d}J),
	\end{align*}
	as desired.
	\end{proof}
\begin{lemma}\label{Lemma:CUSUM_signal}
 Let $\mathcal{D}=\{X_1,\ldots,X_n\}$ be deterministic design points on a manifold $\mathcal{M}$. For subsets $A, R_1\subseteq \mathcal{M}$, let $r:=|R_1|_{\mathcal{D}}$, and $\delta:=\min\{|A\triangle R_1|_{\mathcal{D}},|A\triangle R_1^c|_{\mathcal{D}}\}$. Suppose a vector $\mu = (\mu_1,\ldots,\mu_n)^\top$ satisfies $\mu_i = 0$ for $X_i\notin R_1$. If $\mu_i = \theta > 0$ for $X_i\in R_1$, then
	\renewcommand{\labelenumi}{(\alph{enumi})}
    \begin{enumerate}
		\item $\sqrt{\min\{r,n-r\}/ 2}\theta \leq \mathcal{T}_{R_1}(\mu) \leq \sqrt{\min\{r,n-r\}}\theta$;
        \item $\theta^2\min\{r,n-r,\delta\}/2 \leq \mathrm{RSS}_A(\mu) \leq \theta^2\min\{r,n-r,\delta\}$;
        \item $\mathcal{T}_{R_1}(\mu)\geq \mathcal{T}_{A}(\mu)$.
	\end{enumerate}
More generally, if $\mu_i$ are not necessarily equal for $X_i\in R_1$, writing  $\theta_{\min}=\min_{i:X_i\in R_1} |\mu_i|$ and $\theta_{\mathrm{rms}}=\{r^{-1}\sum_{i:X_i\in R_1}\mu_i^2\}^{1/2}$, then 
\renewcommand{\labelenumi}{(\alph{enumi})}
\begin{enumerate}
	\item[(d)] $\mathcal{T}_{A}(\mu)\leq \sqrt{r}\theta_{\mathrm{rms}}$;
	\item[(e)] 2$\mathrm{RSS}_{A}(\mu)\geq\min\{\theta^2_{\mathrm{rms}}r,\theta_{\min}^2(n-r),\theta_{\min}^2\delta\}$.
\end{enumerate}
\end{lemma}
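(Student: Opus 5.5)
All five parts reduce to explicit computations on the two-level (or piecewise) vector $\mu$. The tools are the projection form of the CUSUM for (a), (c), (d) and a count-based decomposition of the residual sum of squares for (b), (e).

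Write $a:=|A|_{\mathcal{D}}$ and $k:=|A\cap R_1|_{\mathcal{D}}$. Split the remaining points into
\[
p:=|A\setminus R_1|_{\mathcal{D}},\qquad q:=|R_1\setminus A|_{\mathcal{D}},\qquad s:=|A^{\mathrm{c}}\setminus R_1|_{\mathcal{D}}.
\]
Then $k+q=r$, $p+s=n-r$, and $\delta=\min\{p+q,\,k+s\}$.

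\textbf{The CUSUM as an inner product.} Write $\mathcal{T}_A(z)=\langle z,v_A\rangle$. Here $v_A$ is the unit vector that equals $\sqrt{(n-a)/(na)}$ on $A$ and $-\sqrt{a/(n(n-a))}$ on $A^{\mathrm{c}}$. This vector is orthogonal to $\mathbf 1$, so $\mathcal{T}_A(z)=\langle z-c\mathbf 1,v_A\rangle$ for every constant $c$.

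\textbf{Parts (a), (c), (d).} For (a), a direct evaluation gives $\mathcal{T}_{R_1}(\mu)=\theta\sqrt{r(n-r)/n}$. Since $\max\{r,n-r\}/n\in[1/2,1]$, this lies between $\theta\sqrt{\min\{r,n-r\}/2}$ and $\theta\sqrt{\min\{r,n-r\}}$.

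For (c), apply Cauchy--Schwarz: $\mathcal{T}_A(\mu)\le\|\mu-\bar\mu\mathbf 1\|$. The vector $\mu-\bar\mu\mathbf 1$ is proportional to $v_{R_1}$, so equality holds at $A=R_1$, and $\|\mu-\bar\mu\mathbf 1\|=\mathcal{T}_{R_1}(\mu)$.

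For (d), take $c=0$: $\mathcal{T}_A(\mu)\le\|\mu\|=\sqrt r\,\theta_{\mathrm{rms}}$.

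\textbf{Part (b).} Within a group consisting of $k$ copies of $\theta$ and $p$ zeros, the sum of squares about the group mean is $\theta^2kp/(k+p)$. Applying this to $A$ and to $A^{\mathrm{c}}$ gives
\[
\mathrm{RSS}_A(\mu)=\theta^2\biggl(\frac{kp}{k+p}+\frac{qs}{q+s}\biggr).
\]
Use $\tfrac12\min(x,y)\le xy/(x+y)\le\min(x,y)$ on each term. It then remains to show
\[
\min(k,p)+\min(q,s)=\min\{k+q,\;k+s,\;p+q,\;p+s\}=\min\{r,n-r,\delta\}.
\]
The first equality holds because a sum of two minima equals the minimum over the four cross-sums. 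The second holds by the identities $k+q=r$, $p+s=n-r$ and $\min\{p+q,k+s\}=\delta$.

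\textbf{Part (e).} This is the only step needing a new idea, since the values $\mu_i$ inside $R_1$ now vary. The key inequality is
\[
(x-m)^2+m^2\ge x^2/2\qquad\text{for all real }x,m,
\]
which holds because the left side is minimised at $m=x/2$. In each group, pair every point of the minority type (points of $R_1$ versus points outside $R_1$) with a distinct point of the other type. The residual sum of squares about the group mean is then at least half the sum of $\mu_i^2$ over the paired $R_1$-points. This gives at least $\tfrac12\mu_i^2$ per paired $R_1$-point. When points outside $R_1$ are the minority, each pair contributes at least $\theta_{\min}^2/2$.

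Now split into four cases according to which type is the minority in $A$ and which in $A^{\mathrm{c}}$.
\begin{enumerate}
\item $R_1$ is the minority in both groups. Every $R_1$-point is paired, so $2\mathrm{RSS}_A(\mu)\ge\sum_{i:X_i\in R_1}\mu_i^2=r\theta_{\mathrm{rms}}^2$.
\item The complement of $R_1$ is the minority in both groups. Then $2\mathrm{RSS}_A(\mu)\ge\theta_{\min}^2(p+s)=\theta_{\min}^2(n-r)$.
\item The complement of $R_1$ is the minority in $A$ and $R_1$ is the minority in $A^{\mathrm{c}}$. There are $p+q$ pairs, each contributing at least $\theta_{\min}^2/2$, so $2\mathrm{RSS}_A(\mu)\ge\theta_{\min}^2(p+q)\ge\theta_{\min}^2\delta$.
\item $R_1$ is the minority in $A$ and the complement of $R_1$ is the minority in $A^{\mathrm{c}}$. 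Symmetrically, $2\mathrm{RSS}_A(\mu)\ge\theta_{\min}^2(k+s)\ge\theta_{\min}^2\delta$.
\end{enumerate}
Each case yields one of the three quantities in the minimum, which proves (e). The main obstacle is avoiding the naive bound $\theta_{\min}^2 r$ in the first case, which is too weak because $\theta_{\min}\le\theta_{\mathrm{rms}}$. Pairing the actual values $\mu_i$ themselves, rather than bounding each by $\theta_{\min}$, is exactly what resolves this.
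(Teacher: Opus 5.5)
Your proof is correct, and its skeleton matches the paper's. Parts (a) and (b) are the same direct computation: the closed form $\theta^2\{kp/(k+p)+qs/(q+s)\}$, the two-sided bound on $xy/(x+y)$, and the sum-of-two-minima identity. Part (e) ends with the same four-way case split on which type is the minority in $A$ and in $A^{\mathrm{c}}$.

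The tools differ in (c)--(e).

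\emph{Part (c).} The paper simply cites Lemma~2 of \citet{baranowski2019narrowest}. Your representation $\mathcal{T}_A(z)=\langle z-c\mathbf{1},v_A\rangle$ turns (c) and (d) into one Cauchy--Schwarz step with two choices of $c$, and it even gives $|\mathcal{T}_A(\mu)|\le\mathcal{T}_{R_1}(\mu)$. When $|A|_{\mathcal{D}}\in\{0,n\}$ the vector $v_A$ is undefined, but then $\mathcal{T}_A=0$ by convention and there is nothing to prove.

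\emph{Part (d).} The paper uses the law of total variance for a uniformly drawn index $W$. This is the same projection fact in probabilistic language: it bounds $\mathcal{T}_A^2(\mu)$ by $n\mathrm{Var}(\mu_W)=\|\mu-\bar\mu\mathbf{1}\|^2\le\|\mu\|^2$.

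\emph{Part (e).} Instead of your pairing inequality $(x-m)^2+m^2\ge x^2/2$, the paper conditions on $\{X_U\in R_1\}$ within each group. This gives $\mathrm{RSS}_A(\mu)\ge\theta_1^2kp/(k+p)+\theta_0^2qs/(q+s)$, where $\theta_1^2$ and $\theta_0^2$ are the mean squares of $\mu$ over $A\cap R_1$ and $R_1\setminus A$. In the case where $R_1$ is the minority in both groups, the paper then uses $k\theta_1^2+q\theta_0^2=r\theta_{\mathrm{rms}}^2$; the paper misprints this identity with $|A|_{\mathcal{D}}$ and $|A^{\mathrm{c}}|_{\mathcal{D}}$ as weights.

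The paper's intermediate bound is tight when $\mu$ is constant on $R_1$, where it reduces to the exact formula of (b). Your pairing argument is more elementary and works pointwise, without the averaged quantities $\theta_0,\theta_1$. Together with your Cauchy--Schwarz treatment of (c), it makes the whole lemma self-contained.
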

\begin{proof} (a) From definition we have
	\begin{align*}
		\mathcal{T}_{R_1}(\mu) &= \sqrt{\frac{r(n-r)}{n} }\theta.
	\end{align*}
	The desired inequalities follows from the fact that $\min\{r,n-r\}/2 \leq r(n-r)/n \leq \min\{r,n-r\}$.

\bigskip 

	\noindent (b) Observe that
		\begin{align}\label{Eq:LemmaCUSUMSignalRSS}
		\mathrm{RSS}_{A}(\mu)&=\sum_{i=1}^{n}\mu_i^2-(\sum_{i:X_i\in A}\mu_i)^2/|A|_{\mathcal{D}}-(\sum_{i:X_i\in A^{\mathrm{c}}}\mu_i)^2/|A^{\mathrm{c}}|_{\mathcal{D}}\nonumber\\
        &=\theta^2\biggl(|R_1|_{\mathcal{D}}-\frac{|A\cap R_1|^2_{\mathcal{D}}}{|A|_{\mathcal{D}}}-\frac{|A^{\mathrm{c}}\cap R_1|^2_{\mathcal{D}}}{| A^{\mathrm{c}}|_{\mathcal{D}}}\biggr)\nonumber\\
		&= \theta^2\biggl\{\frac{|A\setminus R_1|_{\mathcal{D}}(|R_1|_{\mathcal{D}}-|R_1\setminus A|_{\mathcal{D}})}{(|R_1|_{\mathcal{D}}-|R_1\setminus A|_{\mathcal{D}})+|A\setminus R_1|_{\mathcal{D}}}+\frac{|R_1\setminus A|_{\mathcal{D}}(|R_1^{\mathrm{c}}|_{\mathcal{D}}-|A\setminus R_1|_{\mathcal{D}})}{(|R_1^{\mathrm{c}}|_{\mathcal{D}}-|A\setminus R_1|_{\mathcal{D}})+|R_1\setminus A|_{\mathcal{D}}}\biggr\}.
	\end{align}
From~\eqref{Eq:LemmaCUSUMSignalRSS} and the fact that $ab/(a+b)\geq\min\{a,b\}/2,\forall a,b\in\mathbb{R}$, we have
\begin{align*}
	\mathrm{RSS}_A(\mu)&\geq \frac{\theta^2}{2}\min\{|A\setminus R_1|_{\mathcal{D}},|R_1|_{\mathcal{D}}-|R_1\setminus A|_{\mathcal{D}}\}+\frac{\theta^2}{2}\min\{|R_1\setminus A|_{\mathcal{D}},|R_1^{\mathrm{c}}|_{\mathcal{D}}-|A\setminus R_1|_{\mathcal{D}}\}\\
	&\geq \frac{\theta^2}{2}\min\{|A\triangle R_1|_{\mathcal{D}},|(A\triangle R_1)^{\mathrm{c}}|_{\mathcal{D}},|R_1|_{\mathcal{D}},|R_1^{\mathrm{c}}|_{\mathcal{D}}\}= \frac{\theta^2}{2}\min\{\delta, r,n-r\},
\end{align*}
 where the second inequality follows from the fact that $\min\{a,b\}+\min\{c,d\}=\min\{a+c,a+d,b+c,b+d\},\forall a,b,c,d\in\mathbb{R}$. Combining~\eqref{Eq:LemmaCUSUMSignalRSS} and the fact that  $ab/(a+b)\leq \min\{a,b\},\forall a,b\in\mathbb{R}$, we have 
 \begin{align*}
 	\mathrm{RSS}_A(\mu)&\leq \theta^2\min\{|A\setminus R_1|_{\mathcal{D}},|R_1|_{\mathcal{D}}-|R_1\setminus A|_{\mathcal{D}}\}+\theta^2\min\{|R_1\setminus A|_{\mathcal{D}},|R_1^{\mathrm{c}}|_{\mathcal{D}}-|A\setminus R_1|_{\mathcal{D}}\}\\
 	&=\theta^2\min\{|A\triangle R_1|_{\mathcal{D}},|(A\triangle R_1)^{\mathrm{c}}|_{\mathcal{D}},|R_1|_{\mathcal{D}},|R_1^{\mathrm{c}}|_{\mathcal{D}}\}\leq \theta^2\min\{\delta,r,n-r\},
 \end{align*}
 as desired in (b).

 \bigskip
 
\noindent (c) This follows directly from~\citet[Lemma 2]{baranowski2019narrowest}.

\bigskip 

\noindent (d)  Define a random variable $W\sim\mathrm{Unif}([n])$. Let $L=\mathbbm{1}\{X_W\in A\}$, and define $E_1=\mathbb{E}(\mu_W\mid L=1)$, $E_0=\mathbb{E}(\mu_W\mid L=0)$, $p_0=\mathbb{P}(L=1)$ and $p_1=\mathbb{P}(L=0)$. Then $E=p_1E_1+p_0E_0 = \mathbb{E}(\mu_W)$. Observe that
 \begin{align*}
 	\mathcal{T}_{A}^2(\mu)=np_0p_1(E_1-E_0)^2=np_1(E_1-E)^2+np_0(E_0-E)^2=n\mathrm{Var}(\mathbb{E}(\mu_W\mid L)),
 \end{align*}
 and by law of total variance we have $\mathcal{T}^2_{A}(\mu)\leq n\mathrm{Var}(\mu_W)\leq n\mathbb{E}(\mu_W^2)=r\theta_{\mathrm{rms}}^2$ as desired.

\bigskip 

\noindent (e) Define $U\sim\mathrm{Unif}(u\in[n]:X_u\in A)$ and $V\sim \mathrm{Unif}(v\in[n]:X_v\notin A)$, and $\theta_1^2:=\mathbb{E}(\mu_W^2\mid X_W\in A\cap R_1),\theta_0^2:=\mathbb{E}(\mu_W^2\mid X_W\in R_1\setminus A)$. Observe that $\mathrm{RSS}_A(\mu)=|A|_{\mathcal{D}}\mathrm{Var}(\mu_U)+|A^{c}|_\mathcal{D}\mathrm{Var}(\mu_V)$. Let $\mathcal{E}:=\{X_U\in R_1\}$. Again, by the law of total variance, we have
	\begin{align}\label{Eq:LowerBndVarU}
		\mathrm{Var}(\mu_U)&=\mathbb{E}(\mathrm{Var}(\mu_U\mid \mathbbm{1}_{\mathcal{E}}))+\mathrm{Var}(\mathbb{E}(\mu_U\mid \mathbbm{1}_{\mathcal{E}}))\nonumber\\
        &= \mathrm{Var}(\mu_U\mid {\mathcal{E}})\mathbb{P}(\mathcal{E}) + \{\mathbb{E}(\mu_U\mid {\mathcal{E}})\}^2 \mathbb{P}(\mathcal{E})(1-\mathbb{P}(\mathcal{E}))\nonumber\\
        &\geq \mathbb{E}(\mu_U^2\mid \mathcal{E})\mathbb{P}(\mathcal{E})(1-\mathbb{P}(\mathcal{E})) \geq \theta_1^2\mathbb{P}(\mathcal{E})(1-\mathbb{P}(\mathcal{E})).
	\end{align}
	Similarly we have $\mathrm{Var}(\mu_V)\geq\theta_0^2\mathbb{P}(\mathcal{E}')\mathbb{P}(\mathcal{E}')(1-\mathbb{P}(\mathcal{E}'))$ where $\mathcal{E}':=\{X_V\in R_1\}$. Hence,
	\begin{align}
		\mathrm{RSS}_A(\mu)&\geq\theta_1^2|A|_{\mathcal{D}}\mathbb{P}(\mathcal{E})(1-\mathbb{P}(\mathcal{E}))+\theta_0^2|A^c|_{\mathcal{D}}\mathbb{P}(\mathcal{E}')(1-\mathbb{P}(\mathcal{E}')) \nonumber\\
        &\geq \frac{1}{2}\theta_1^2 |A|_{\mathcal{D}}\min\{\mathbb{P}(\mathcal{E}), 1-\mathbb{P}(\mathcal{E})\} +\frac{1}{2}\theta_0^2 |A^c|_{\mathcal{D}}\min\{\mathbb{P}(\mathcal{E}'), 1-\mathbb{P}(\mathcal{E}')\}\\
        &\geq\frac{1}{2} \min\{\theta_{\mathrm{rms}}^2r, \theta_{\min}^2\delta, \theta_{\min}^2(n-r)\}
	\end{align}
    as desired, where the last inequality follows from the facts that $\theta_{1}^2\geq\theta_{\min}^2,\theta_0^2\geq\theta_{\min}^2$ and $|A|_{\mathcal{D}}\theta_1^2+|A^{c}|_{\mathcal{D}}\theta_0^2=r\theta_{\mathrm{rms}}^2$.
\end{proof}
\section{Time complexity}\label{appC}
Theoretically, the time complexity of Algorithm~\ref{algo:multiple_cp_detection} is $O(n^2dJ_n)$, where drawing discs takes $O(dJ_n)$, forming pairs of inner and outer discs takes $O(n)$, and computing RSS and local CUSUM take $O(n)$ for each disc pair. In case where $J_n$ is increasing with $n$, the running time scales cubically with $n$.

The running times of our estimator for $d\in\{2,3,4\}$ and $n\in\{50,100,200,500,1000\}$ are given in Table~\ref{tab:time_ours}, and we set $J_n=1000$ for each setting. The running times of competitors when $d=3$ are shown in Table~\ref{tab:time_all}. 
\begin{table}[ht]
	\centering
	\begin{tabular}{llccc}
		\toprule
		& $d = 2$ & $d = 3$ & $d = 4$ \\
		\midrule
		$n = 50$  & $3.401$ & $2.069$ & $2.023$ \\
		$n = 100$  & $4.372$ & $4.33$ & $4.301$ \\
		$n = 200$  & $9.843$ & $9.814$ & $9.827$ \\
		$n = 500$  & $36.845$ & $36.823$ & $36.86$ \\
		$n = 1000$  & $127.172$ & $126.08$ & $126.729$ \\
		\bottomrule
	\end{tabular}
	\caption{Running times of our estimator ($J_n=1000$) in seconds}
	\label{tab:time_ours}
\end{table}

\begin{table}[ht]
	\centering
	\begin{tabular}{llccccc}
		\toprule
		& $n = 50$ & $n = 100$ & $n = 200$ & $n = 500$ & $n = 1000$ \\
		\midrule
		ours  & $2.069$ & $4.33$ & $9.814$ & $36.823$ & $126.08$ \\
		dalponte2016  & $0.409$
 & $0.392$ & $0.391$ & $0.408$ & $0.414$ \\
		silva2016  & $0.339$ & $0.363$ & $0.378$ & $0.36$ & $0.369$ \\
		li2012  & $0.274$ & $0.271$ & $0.291$ & $0.275$ & $0.283$ \\
		\bottomrule
	\end{tabular}
	\caption{Running times in seconds}
	\label{tab:time_all}
\end{table}
\end{appendix}

\bibliographystyle{abbrvnat}
\bibliography{main}
\end{document}